\newtheorem{theo}{Theorem}
\newtheorem{lem}{Lemma}
\newcommand{\ol}[1]{\overline{#1}}
\newcommand{\cl}[1]{\mathcal{#1}}
\title{{\large Unique global solution of an integral-differential equation of\\ Footloose Entrepreneur model in new economic geography}}
\author{Kensuke Ohtake\thanks{Center for General Education, Shinshu University, Matsumoto, Nagano 390-8621, Japan,
E-mail: k\_ohtake@shinshu-u.ac.jp}}
\date{May 16, 2025}
\begin{document}
\maketitle

\begin{abstract}
This paper studies the Footloose Entrepreneur model in new economic geography in continuous space. In an appropriate function space, the model is formulated as an initial value problem for an infinite-dimensional ordinary differential equation. A unique global solution is constructed based on the Banach fixed point theorem. The stability of a homogeneous stationary solution is then investigated and numerical simulations of the asymptotic behavior of the solution are performed. Numerical solutions starting near the unstable homogeneous stationary solution converge to spike-shaped stationary solutions, and the number of spikes decreases with decreasing transport costs and strengthening preference for variety.
\end{abstract} 

\noindent
{\bf Keywords:\hspace{1mm}}
core-periphery model;
differential equation;\\
footloose entrepreneur model;
global solution;
integral-differential equation;
new economic geography;
self-organization

\noindent
{\small {\bf JEL classification:} R12, R40, C62, C63, C68}

%\newpage

\section{Introduction}

\citet{Krug91} proposed a general equilibrium model for understanding spatial inequality caused by the agglomeration of households and firms. This model is called the core-periphery (CP) model, which is a fundamental model in new economic geography (NEG). In the CP model, two industries are assumed: manufacturing, which produces various differentiated goods under monopolistic competition, and agriculture, which produces a single variety of goods under perfect competition. Firms in the manufacturing sector have increasing returns technology, while those in the agricultural sector have constant returns. Mobile workers are used as fixed and marginal inputs in the manufacturing sector, while immobile workers are engaged in agricultural production. Workers, as consumers, have a preference for variety for manufactured goods. That is, they desire a greater variety of manufactured goods. Transport costs are incurred for the transport of manufactured goods, while they are not incurred for the transport of agricultural goods. Based on these assumptions, in the CP model, agglomeration occurs through the following mechanism. First, due to increasing returns and transport costs, firms try to agglomerate in regions where there are more households (i.e. workers). This is advantageous because it allows them to save on production and transport costs. Second, households try to agglomerate in regions where more firms are agglomerated due to their preference for variety and transport costs. This is because it allows them to enjoy richer variety at lower prices.

Mathematically, the model consists of complicated nonlinear algebraic and differential equations. In particular, the nonlinear algebraic equations that give market equilibrium (called instantaneous equilibrium) for a given population distribution make the analytical handling of the model difficult.\footnote{There are many studies on the CP model in two, three, or more discrete-regional cases. For example, we can mention \citet{LanaSan}, \citet{MaffTrio}, \citet{Rob}, \citet{CurrKub}, \citet{Moss06}, \citet{LanQua}, \citet{IkeAkaKon}, \citet{IoaIoa}, \citet{BarbZof}, and \citet{SaYa2023asymptotic}} To circumvent this difficulty, several more tractable models were developed. The first approach is to make the utility function of consumers quasilinear, as in \citet{OttaTabThi} and \citet{Pfl}. This greatly simplifies the instantaneous equilibrium equation. Of course, this approach sacrifices the generality of consumer behavior by not considering income effects.\footnote{However, as their results suggest, the properties of the solutions to these models would not be qualitatively affected by the lack of income effects in two regional cases. Regarding the Pfl{\"u}ger model, studies of the Krugman model and the Pfl{\"u}ger model extended in continuous space (\citet{OhtakeYagi_point} and \citet{Ohtake2023cont}) also support this. Regarding the Ottaviano-Tabuchi-Thisse model, the properties of the solution changes significantly in a multi-regional case. In fact, \citet{Ohtake2022agg} shows that the homogeneous stationary solution is always unstable when the number of regions is a multiple of 4.} The second approach, by \citet{ForsOtta}, is to change the assumption so that production in the manufacturing sector requires not only mobile but also immobile workers. To be more precise, mobile workers are assumed to be fixed inputs and immobile workers are assumed to be marginal inputs. This allows the instantaneous equilibrium equation to become a linear equation. The Forslid-Ottaviano model is often referred to as the {\it Footloose Entrepreneur {\rm (FE)} model} because mobile workers now can be interpreted as high-skilled workers or entrepreneurs needed in the manufacturing sector.\footnote{The name ``Footloose Entrepreneur model'' comes from \citet{Baletal2003}} The FE model is significant in that it makes the model tractable without sacrificing the generality in consumer behavior. In addition, the properties of the solution is not qualitatively different from the Krugman model in the two-regional case as shown in \citet{ForsOtta}.\footnote{\citet{Rob} discusses the isomorphism of various models, including the CP model and the FE model.}

This paper discusses the dynamics of solutions to the FE model in continuous space, going beyond the limited situation of the two-regional case. The model consists of a Fredholm linear integral equation that describes the instantaneous equilibrium and a replicator equation that describes the dynamics of population migration. We begin with formulating the model as an initial value problem for an infinite-dimensional ordinary differential equation in an appropriate function space. Application of the Banach fixed point theorem and the Picard–Lindel\"{o}f theorem constructs a unique global solution to the model. When the periodic boundary condition is imposed in a one-dimensional space (i.e., racetrack economy), this equation has a homogeneous stationary solution, in which the population density and all other variables are uniformly distributed throughout the space. We then identify the values of transport costs and preference for variety that make the homogeneous stationary solution unstable by linearizing the model. When a small perturbation added to the homogeneous stationary solution is decomposed into eigenfunctions, the eigenfunctions can be classified into those whose amplitudes grow and those whose amplitudes decay over time. It is found that the spatial frequency of the former becomes smaller as transport costs decrease and the preference for variety increases. Finally, we simulate the time evolution of the model solution to obtain a numerical stationary solution. The numerical stationary solutions have several spikes, and it is observed that the number of spikes decreases as transport costs decrease and the preference for variety increases.

Let us review closely related works to the present paper. Research on the original CP model in continuous space includes \citet[Chapter 6]{FujiKrugVenab}, \citet{ChiAsh08}, \citet{TabaEshiSakaTaka}, \citet{TabaEshi_existence}, \citet{TabaEshi_explosion}, \citet{TabaEshi23}, \citet{OhtakeYagi_Asym}, and \citet{OhtakeYagi_point}. The FE model in continuous space same as the present paper was considered in \citet[Appendix E]{Fab}. There, it has been shown that the model has spiky stationary solutions, which is consistent with the results of this study. \citet{IkeMuroAkaKoTa}, \citet{IkeOnTaka2019}, \citet{AiIkeOsa2020}, and \citet{Gasetal2020FE} discussed the FE model in a discretely multi-regional case. The mathematically rigorous discussions of the CP model in a continuous space can be found in \citet{TabaEshiSakaTaka},  \citet{TabaEshi_existence}, and \citet{TabaEshi23}. In particular, the functional analysis techniques used in \citet{TabaEshiSakaTaka} have been of great help in this paper.

The rest of the paper is organized as follows. Section \ref{sec:model} introduces the model. Section \ref{sec:globalsol} formulates the model as an ordinary differential equation in a Banach space and constructs a unique global solution to it. Section \ref{sec:racetrack} sets up the model on one-dimensional periodic space. Section \ref{sec:instability} investigates the stability of the homogeneous stationary solution. Section \ref{sec:numerical} presents the results of numerical simulations of the asymptotic behavior of solutions to the model. Section \ref{sec:cd} provides conclusions and discussion. Section \ref{sec:appendix} gives some contents omitted in the main text.

\section{The model}\label{sec:model}
In this chapter, we derive the FE model based on the Dixit-Stiglitz framework\footnote{\citet{DS77}.} in continuous space. Main assumptions are as follows:
\begin{itemize}
\item The geographic space is continuous.
\item The economy consists of two sectors: manufacturing and agriculture. The manufacturing sector produces a large number of varieties of differentiated goods in monopolistic competition. The agricultural sector produces homogeneous goods in perfect competition.
\item In the manufacturing sector, each firm is engaged in the production of one variety.
\item All firms have a same production technology in which production requires only labor as an input.
\item The transportation of manufactured goods incurs so-called iceberg transport costs.\footnote{\citet{Sam1952}.}
\item Transportation of agricultural goods does not incur transport costs.
\item There are two types of workers: mobile and immobile. The mobile workers (or entrepreneurs) can migrate among regions and the immobile workers cannot.
\item Labor of mobile workers is fixed input in the manufacturing sector.
\item Labor of immobile workers is marginal input in the manufacturing sector and the agricultural sector.
\item The nominal wage of the immobile workers is assumed to be one as num\'{e}raire.
\end{itemize}

\subsection{Consumer behavior \footnote{The discussion in this subsection follows \citet[pp.46-48]{FujiKrugVenab}.}}

There are differentiated varieties of manufactured goods, specified by continuous indices $i\in[0,n]$, and one homogeneous variety of agricultural goods. Let $q(i)$ be the consumption of variety $i$ of the manufactured goods and $A$ be the consumption of agricultural goods. Suppose that the preference of a consumer is expressed by the utility function
\begin{equation}\label{utility}
U=M^\mu A^{1-\mu},~\mu\in(0,1),
\end{equation}
where 
\begin{equation}\label{composite}
M = \left[\int_0^nq(i)^{\frac{\sigma-1}{\sigma}}\right]^{\frac{\sigma}{\sigma-1}}.
\end{equation}
Here, $\sigma>1$ stands for the elasticity of sunstitution between two varieties of manufactured goods. In other words, the closer $\sigma$ is to $1$, the stronger the preference for variety of consumers, and the larger the value of $\sigma$, the weaker the preference for variety of consumers. The budget constraint of a consumer is given by 
\begin{equation}\label{budget}
p_A A + \int_0^n p(i)q(i)di = Y,
\end{equation}
where $p_A$ is the price of the agricultural good, $p(i)$ is the price of the $i$-th variety of manufactured goods, and $Y$ is nominal income of the consumer.

A two-stage budgeting is used to maximize the utility function \eqref{utility} under the budget constraint \eqref{budget} when $p_A$, $\left\{p(i)\right\}_{i\in[0,n]}$, and $Y$ are given. We first have to solve the minimizing problem
\begin{align}
&\min_{q} \int_0^n p(i)q(i)di
\end{align}
subject to \eqref{composite}. The first order condition is
\begin{equation}\label{firstorder}
\frac{q(i)^{-\frac{1}{\sigma}}}{q(j)^{-\frac{1}{\sigma}}} = \frac{p(i)}{p(j)}.
\end{equation}
Therefore,
\begin{equation}\label{demandforqi}
q(i) = q(j)\left[\frac{p(j)}{p(i)}\right]^\sigma.
\end{equation}
Substituting \eqref{demandforqi} into \eqref{composite}, we have
\begin{equation}\label{demandforqj}
q(j) = \frac{p(j)^{-\sigma}}{\left[\int_0^n p(i)^{1-\sigma}di\right]^{\frac{\sigma}{\sigma-1}}}M.
\end{equation}
By defining the price index as 
\begin{equation}\label{priceindex}
G = \left[\int_0^n p(i)^{1-\sigma}di\right]^{\frac{1}{1-\sigma}},
\end{equation}
we see that \eqref{demandforqj} becomes
\begin{equation}\label{demandforqjGM}
q(j) = \left[\frac{p(j)}{G}\right]^{-\sigma}M.
\end{equation}
Substituting \eqref{demandforqjGM} into \eqref{budget}, we have
\begin{equation}\label{pqGM}
\int_0^n p(i)q(i)di = GM,
\end{equation}
which states that the price index \eqref{priceindex} can be interpreted as the price of one unit of the composite index of manufactured goods \eqref{composite}. Then, the next problem to be solved is
\begin{align}
&\max~U = M^{\mu}A^{1-\mu}\\
&\hspace{1mm}\text{s.t.}\hspace{2mm} p_AA + GM = Y
\end{align}
This yields that
\begin{align}
&M = \mu\frac{Y}{G},\label{demandforM}\\
&A = (1-\mu)\frac{Y}{p_A}.\label{demandforA}
\end{align}
Substituting \eqref{demandforM} into \eqref{demandforqjGM}, we obtain the compensated demand function for the $j$-th variety
\begin{equation}\label{qjmuYpjmsigG1msig}
q(j) = \mu Y\frac{p(j)^{-\sigma}}{G^{1-\sigma}}
\end{equation}
Substituting \eqref{demandforM} and \eqref{demandforA} into \eqref{utility}, we obtain the indirect utility 
\begin{equation}\label{indirectu}
V = \mu^\mu(1-\mu)^{1-\mu}YG^{-\mu}p_A^{\mu-1}.
\end{equation}

\subsection{Producer behavior \footnote{The discussion in this subsection is based on not only \citet{ForsOtta}, but also \citet[Chapter 3 and 7]{ZenTaka}}}
Let $c_f$ and $c_m$ represents a fixed cost and a marginal cost, respectively. Then, the profit of a firm that produces the $i$-th variety is given by
\begin{equation}\label{profit}
\Pi(i) = p(i)q(i)-(c_f+c_mq(i)).
\end{equation}
By differentiating \eqref{qjmuYpjmsigG1msig} with $G$ fixed, we have
\begin{equation}\label{dqdp}
\frac{d}{dp(i)}q(i) = -\sigma\frac{q(i)}{p(i)}.
\end{equation}
By \eqref{profit} and \eqref{dqdp}, maximizing the profit \eqref{profit} yields
\begin{equation}\label{firstorderpi}
p(i) = \frac{\sigma}{\sigma-1}c_m.
\end{equation}
Substituting \eqref{firstorderpi} into \eqref{profit}, we obtain the maximized profit
\begin{equation}\label{profitcmsigm1qicf}
\Pi(i) = \frac{c_m}{\sigma-1}q(i)-c_f
\end{equation}
and assuming the zero-profit due to free entry and exit, we see that
\begin{equation}\label{cmsigm1qi}
c_f = \frac{c_m}{\sigma-1}q(i).
\end{equation}
From \eqref{firstorderpi} and \eqref{cmsigm1qi}, we have
\begin{equation}\label{pisigqi}
c_f = \frac{p(i)}{\sigma}q(i).
\end{equation}

\subsection{Continuous space modeling}
Now that we have made the above preparations, we derive a continuous space model. Let $\Omega\subset\mathbb{R}^n$ be a bounded closed domain where economic activity takes place. The regions in $\Omega$ are indicated by the space variables $x$, $y$, or $z\in \Omega$. As in \citet{ForsOtta}, it is assumed that production requires fixed input of $F$ units of mobile workers and marginal input of $\frac{\sigma-1}{\sigma}$ units of immobile workers. The nominal wage of mobile workers in region $x$ is $w(x)$, and the nominal wage of immobile workers is assumed to be one regardless of region. Therefore, $c_f$ and $c_m$ in \eqref{profit} are 
\begin{align}
&c_f = w(x)F,\label{cfwrF}\\
&c_m = \frac{\sigma-1}{\sigma}.\label{cmsigm1sig}
\end{align}
Let $q(x)$ be the demand for a variety of manufactured goods produced by one firm in region $x$. From \eqref{cmsigm1qi} and \eqref{cmsigm1sig}, we see that
\begin{equation}\label{cf1sigqi}
c_f = \frac{1}{\sigma}q(x),
\end{equation}
Let $p(x,y)$ be the price in region $y$ of a variety of manufactured goods produced in region $x$. The iceberg transport technology is expressed by a continuous function $T$ on $\Omega\times\Omega$. That is, for any variety of manufactured goods, $T(x,y)\geq 1$ units of them must be shipped to transport one unit of manufactured goods from $x$ to $y$. Therefore,
\begin{equation}\label{pxypxTxy}
p(x,y) = p(x)T(x,y).
\end{equation}
The transport function $T$ has a minimum and a maximum value in $\Omega\times \Omega$.
\begin{equation}\label{Tbound}
T_{\rm min}\leq T(x,y) \leq T_{\rm max},\hspace{5mm}\forall x,y\in\Omega
\end{equation}
From \eqref{pisigqi} and \eqref{cf1sigqi}, we have 
\begin{equation}\label{pr1}
p(x) \equiv 1.
\end{equation}
It follows immediately from \eqref{pxypxTxy} and \eqref{pr1} that 
\begin{equation}\label{pxyTxy}
p(x,y) = T(x,y).
\end{equation}
From \eqref{cfwrF} and \eqref{cf1sigqi}, we have
\begin{equation}\label{qxsigwxF}
q(x) = \sigma F w(x).
\end{equation}

We now derive model equations for an instantaneous equilibrium. Let $n(x)$ and $\lambda(x)$ be a density of the number of varieties of manufactured goods produced in region $x$ and a density of the mobile population at $x$, respectively. Given a total mobile population of $\Lambda> 0$, the population density satisfies
\begin{equation}\label{totalmobpop}
\int_\Omega\lambda(x)dx=\Lambda.
\end{equation}
Since it is assumed that a firm is engaged in the production of only one variety of manufactured goods, and each firm requires $F$ units of mobile workers, we see that
\begin{equation}\label{nx}
n(x) = \frac{\lambda(x)}{F}.
\end{equation}
By using \eqref{priceindex} and \eqref{pxyTxy}, we have
\begin{align}
G(x) &= \left[\int_\Omega n(y)p(x,y)^{1-\sigma}dy\right]^{\frac{1}{1-\sigma}}\\
&= \left[\int_\Omega n(y)T(x,y)^{1-\sigma}dy\right]^{\frac{1}{1-\sigma}}.\label{GxsumnyTxy1msig11msig}
\end{align}
Substitutiong \eqref{nx} into \eqref{GxsumnyTxy1msig11msig}, we obtain the price index equation
\begin{equation}\label{priceindexeq}
G(x) = \left[\frac{1}{F}\int_\Omega \lambda(y)T(x,y)^{1-\sigma}dy\right]^{\frac{1}{1-\sigma}}.
\end{equation}
Let $Y(x)$ be the density of total income in region $x$, which is given by
\begin{equation}\label{incomex}
Y(x) = w(x)\lambda(x) + \phi(x)
\end{equation}
where $\phi$ is a density of immobile workers whose nominal wage is assumed to be one. Given a total immobile population of $\Phi> 0$, the deisnity satisfies that
\begin{equation}\label{totalimpop}
\int_\Omega\phi(x)dx=\Phi.
\end{equation}
From \eqref{qjmuYpjmsigG1msig} and \eqref{incomex}, the demand from region $y$ for a variety of manufactured goods produced in region $x$ is given by
\begin{equation}\label{demandfromytox}
q(x,y) = \mu p(x,y)^{-\sigma}Y(y)G(y)^{\sigma-1}.
\end{equation}
In order to cover the transport costs, a firm at $x$ must produce $T(x,y)$ times as much as \eqref{demandfromytox}. Therefore, with \eqref{pxyTxy}, the total demand for a variety of manufactured goods produced in region $x$ is given by
\begin{align}
q(x) &= \int_\Omega q(x,y)T(x,y)dy\\
&= \mu\int_\Omega Y(y)G(y)^{\sigma-1}T(x,y)^{1-\sigma}dy.\label{totaldemandforx}
\end{align}
By \eqref{qxsigwxF} and \eqref{totaldemandforx}, we obtain the nominal wage equation
\begin{equation}\label{nominalwageeq}
w(x) = \frac{\mu}{\sigma F}\int_\Omega Y(y)G(y)^{\sigma-1}T(x,y)^{1-\sigma}dy.
\end{equation}
The nominal wage and the price index of the instantaneous equilibrium are given by a solution of \eqref{priceindexeq} and \eqref{nominalwageeq}.

We assume that the migration dynamics of mobile workers is driven by spatial disparities in real wages, as in \citet[p.62]{FujiKrugVenab}. The price $p_A$ of the agricultural good is proved to be one.\footnote{Under the assumptions of perfect competition and constant-return technology, the wage of immobile workers ($\equiv 1$) and the price $p_A$ of agricultural good are equal in equilibrium. See \citet[pp.~192-193,~pp.~217-218]{HayashiMicro2021} for details.} Then, from \eqref{indirectu}, it is natural to define the real wage in $x$ of mobile workers as 
\begin{equation}\label{realwage}
\omega(x) = w(x)G(x)^{-\mu}.
\end{equation}
Then, the dynamics of mobile population is 
\begin{equation}\label{dynamics}
\frac{\partial}{\partial t}\lambda(t,x) = v\left[\omega(t,x) - \frac{1}{\Lambda}\int_{\Omega} \omega(t,y)\lambda(t,y)dy\right]\lambda(t,x),
\end{equation}
where $v>0$. In the right-hnd side, $\frac{1}{\Lambda}\int_{\Omega} \omega(t,y)\lambda(t,y)dy$ stands for the spatial average of real wages. Taking into account that $Y$, $G$, $w$, $\omega$ and $\lambda$ are also functions of time $t\geq 0$, the equations \eqref{incomex}, \eqref{priceindexeq}, \eqref{nominalwageeq}, \eqref{realwage}, and \eqref{dynamics} can be reorganized as the system
\begin{equation}\label{1}
\left\{
\begin{aligned}
&Y(t,x) = w(t,x)\lambda(t,x) + \phi(x),\\
&G(t,x) = \left[\frac{1}{F}\int_\Omega \lambda(t,y)T(x,y)^{1-\sigma}dy\right]^{\frac{1}{1-\sigma}},\\ 
&w(t,x) = \frac{\mu}{\sigma F}\int_\Omega Y(t,y)G(t,y)^{\sigma-1}T(x,y)^{1-\sigma}dy,\\
&\omega(t,x) = w(t,x)G(t,x)^{-\mu},\\
&\frac{\partial}{\partial t}\lambda(t,x) = v\left[\omega(t,x) - \frac{1}{\Lambda}\int_{\Omega} \omega(t,y)\lambda(t,y)dy\right]\lambda(t,x)
\end{aligned}\right.
\end{equation}
for $(t,x) \in [0,\infty)\times \Omega$ with the initial condition
\begin{equation}\label{inival}
\lambda(0, x)=\lambda_0(x).
\end{equation}

\section{Constructing a unique global solution}\label{sec:globalsol}
In this section, we construct a unique global solution to the model \eqref{1} with the initial condition \eqref{inival}.
\subsection{Formulation as an infinite dimensional ODE}
We formulate the equations \eqref{1} and \eqref{inival} as an initial value problem for an ordinary differential equation (ODE) in a Banach space. As basic function spaces, we adopt the space $L^1(\Omega)$ of integrable functions on $\Omega$ with the norm
\begin{equation}
\left\|f\right\|_{L^1}:=\int_\Omega|f(x)|dx
\end{equation}
and the space $C(\Omega)$ of continuous functions on $\Omega$ with the norm
\begin{equation}
\left\|f\right\|_{\infty}:=\max_{x\in\Omega}|f(x)|.
\end{equation}
In addition, we introduce the subsets
\[
L^1_\Lambda(\Omega) := \left\{f\in L^1(\Omega)\middle|f\geq 0~\text{a.e.}~\Omega,~\int_\Omega f(x)dx=\Lambda\right\}
\]
and
\[
C_{0+}(\Omega) := \left\{f\in C(\Omega)\middle|f\geq 0~\text{on}~\Omega\right\}.
\]

Let us introduce several operators. We begin with defining an operator $G:\lambda\in L^1(\Omega)\mapsto G[\lambda]\in C_{0+}(\Omega)$ such that
\begin{equation}\label{opG}
G[\lambda](x) := \left[\frac{1}{F}\int_\Omega\left|\lambda(y)\right|T(x,y)^{1-\sigma}dy\right]^{\frac{1}{1-\sigma}},\hspace{3mm}x\in\Omega,
\end{equation}
which corresponds to \eqref{priceindexeq}. By using \eqref{opG}, we introduce the fixed point problem which corresponds to \eqref{nominalwageeq} with \eqref{priceindexeq}
\begin{equation}\label{fp}
W(x) = \frac{\mu}{\sigma F}\int_\Omega\left[W(y)\left|\lambda(y)\right|+\phi(y)\right]G[\lambda](y)^{\sigma-1}T(x,y)^{1-\sigma}dy,\hspace{1mm}x\in\Omega,
\end{equation}
as for $W\in C(\Omega)$ for given $\phi\in L^1_\Phi(\Omega)$ and $\lambda\in L^1(\Omega)$.\footnote{The integral on the right-hand side of \eqref{fp} is well defined.} As we will see later in Theorem \ref{th:fp}, this fixed point problem \eqref{fp} has a unique solution $W\in C_{0+}(\Omega)$. Then, we can define the operator $w:L^1(\Omega)\to C_{0+}(\Omega)$ as 
\begin{equation}\label{opw}
w[\lambda] := \text{the unique fixed point $W\in C_{0+}(\Omega)$ of \eqref{fp} for $\lambda\in L^1(\Omega)$}.
\end{equation}
By using \eqref{opG} and \eqref{opw}, we define $\omega:L^1(\Omega)\to C_{0+}(\Omega)$ as
\begin{equation}\label{opomega}
\omega[\lambda](x) := w[\lambda](x)G[\lambda](x)^{-\mu},\hspace{3mm}x\in\Omega,
\end{equation}
which correspons to \eqref{realwage}. By using \eqref{opomega}, we can define $\tilde{\omega}:L^1(\Omega)\to \mathbb{R}$ as
\begin{equation}\label{opaverageomega}
\tilde{\omega}[\lambda]:=\frac{1}{\Lambda}\int_\Omega\omega[\lambda](y)\lambda(y) dy
\end{equation}
and $\Psi:L^1(\Omega)\to L^1(\Omega)$ as
\begin{equation}\label{opPsi}
\Psi[\lambda](x) := v\left[\omega[\lambda](x)-\tilde{\omega}[\lambda]\right]\lambda(x),\hspace{3mm}x\in\Omega,
\end{equation}
which correspons to the right-hand side of \eqref{dynamics}. At last, By using \eqref{opPsi}, the dynamics of the system \eqref{1} can be formulated as an initial value problem of an ODE in $L^1(\Omega)$ such that
\begin{equation}\label{ode}
\left\{
\begin{aligned}
&\frac{d\lambda}{dt} (t) = \Psi[\lambda(t)],\\
&\lambda(0)=\lambda_0\in L^1_\Lambda(\Omega).
\end{aligned}\right.
\end{equation}

\subsection{A unique local solution}
In this subsection, we construct a unique local solution to \eqref{ode} on a time interval $[0,c]$ with $c>0$. Let us denote the Banach space of all $L^1(\Omega)$-valued continuous functions on the interval $[0,c]$ by 
\begin{equation}\label{X}
X:=C([0,c];L^1(\Omega))
\end{equation}
with the norm
\[
\left\|\lambda\right\|_X := \max_{t\in[0,c]}\left\|\lambda(t)\right\|_{L^1}.
\]
for any $\lambda\in X$. Similar to \citet[p.78]{ZeidlerFixedPoint}, let $Q_{b}$ denote the set
\begin{equation}\label{Q}
Q_{b}:=\left\{(t,\lambda)\in \mathbb{R}\times L^1(\Omega)\middle| t\in[0,a], \left\|\lambda-\lambda_0\right\|_{L^1}\leq b\right\}.
\end{equation}
Here, $b>0$ must be small enough to satisfy at least $\Lambda-b>0$. We have to begin with checking the operator $w$ introduced by \eqref{opw} is well defined. In other words, we must be sure that $W\in C_{0+}(\Omega)$ exists uniquely for each $\lambda\in L^1(\Omega)$.
\begin{theo}\label{th:fp}
For any $\lambda\in L^1(\Omega)$ such that 
\begin{equation}\label{Lam1lambdaLam2}
0<\Lambda_1\leq \left\|\lambda\right\|_{L^1}\leq \Lambda_2,
\end{equation}
if the inequality
\begin{equation}\label{sufficientcondition}
\frac{\mu}{\sigma}\left(\frac{T_{\max}}{T_{\min}}\right)^{\sigma-1}\frac{\Lambda_2}{\Lambda_1} < 1
\end{equation}
holds, then \eqref{fp} has a uniuqe fixed point $W\in C_{0+}(\Omega)$.
\end{theo}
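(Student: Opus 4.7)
\emph{Plan.} The right-hand side of \eqref{fp} defines an affine integral operator $\mathcal{T}:C(\Omega)\to C(\Omega)$ by
\begin{equation}
\mathcal{T}[W](x) := \frac{\mu}{\sigma F}\int_\Omega\left[W(y)|\lambda(y)|+\phi(y)\right]G[\lambda](y)^{\sigma-1}T(x,y)^{1-\sigma}\,dy,
\end{equation}
and I will apply the Banach fixed point theorem to $\mathcal{T}$ on the closed cone $C_{0+}(\Omega)\subset(C(\Omega),\|\cdot\|_\infty)$. First I will check that $\mathcal{T}$ is well defined and maps $C_{0+}(\Omega)$ into itself: continuity of $\mathcal{T}[W]$ in $x$ follows by dominated convergence, since $T\in C(\Omega\times\Omega)$ and $T(x,y)^{1-\sigma}\le T_{\min}^{1-\sigma}$ provides a uniform dominating function; non-negativity is preserved because the integrand is non-negative when $W\ge 0$, $\phi\ge 0$ and $|\lambda|\ge 0$.

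\emph{Contraction estimate.} The key simplification is the identity $(\sigma-1)\cdot\frac{1}{1-\sigma}=-1$, so that
\begin{equation}
G[\lambda](y)^{\sigma-1}=\frac{F}{\int_\Omega|\lambda(z)|T(y,z)^{1-\sigma}\,dz}\le \frac{F}{T_{\max}^{1-\sigma}\|\lambda\|_{L^1}}=\frac{F\,T_{\max}^{\sigma-1}}{\|\lambda\|_{L^1}}\le \frac{F\,T_{\max}^{\sigma-1}}{\Lambda_1},
\end{equation}
where the first inequality uses $T(y,z)^{1-\sigma}\ge T_{\max}^{1-\sigma}$ (the direction flips because $1-\sigma<0$). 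For $W_1,W_2\in C(\Omega)$, taking the difference, bounding $|W_1(y)-W_2(y)|$ by $\|W_1-W_2\|_\infty$, and using additionally $T(x,y)^{1-\sigma}\le T_{\min}^{1-\sigma}$ together with $\int_\Omega|\lambda(y)|\,dy=\|\lambda\|_{L^1}\le\Lambda_2$, I obtain
\begin{equation}
\|\mathcal{T}[W_1]-\mathcal{T}[W_2]\|_\infty\le \frac{\mu}{\sigma}\left(\frac{T_{\max}}{T_{\min}}\right)^{\sigma-1}\frac{\Lambda_2}{\Lambda_1}\|W_1-W_2\|_\infty,
\end{equation}
whose coefficient is strictly less than $1$ by the hypothesis \eqref{sufficientcondition}. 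Since $C_{0+}(\Omega)$ is a closed subset of the Banach space $(C(\Omega),\|\cdot\|_\infty)$ and $\mathcal{T}$ contracts it into itself, the Banach fixed point theorem yields a unique $W\in C_{0+}(\Omega)$ with $\mathcal{T}[W]=W$; the same contraction estimate upgrades uniqueness from $C_{0+}(\Omega)$ to all of $C(\Omega)$.

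\emph{Main obstacle.} The only substantive bookkeeping is that $1-\sigma<0$ reverses the direction of the inequalities for $T(\cdot,\cdot)^{1-\sigma}$ and, consequently, for $G[\lambda]^{\sigma-1}$ as a function of $\|\lambda\|_{L^1}$; tracking the flips carefully is what produces exactly the constant $\tfrac{\mu}{\sigma}(T_{\max}/T_{\min})^{\sigma-1}\Lambda_2/\Lambda_1$ appearing in \eqref{sufficientcondition}, rather than a version with $T_{\min}$ and $T_{\max}$ (or $\Lambda_1$ and $\Lambda_2$) swapped.
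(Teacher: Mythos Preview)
Your proposal is correct and follows essentially the same route as the paper: define the affine integral operator given by the right-hand side of \eqref{fp}, verify it maps $C_{0+}(\Omega)$ into itself, and obtain the contraction constant $\tfrac{\mu}{\sigma}(T_{\max}/T_{\min})^{\sigma-1}\Lambda_2/\Lambda_1$ via exactly the same bounds on $G[\lambda]^{\sigma-1}$ and $T(x,y)^{1-\sigma}$, then invoke the Banach fixed point theorem. Your additional remarks on continuity via dominated convergence and the upgrade of uniqueness to all of $C(\Omega)$ are minor elaborations beyond what the paper records, but the argument is the same.
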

\noindent
{\bf Note}: If $\frac{\Lambda_2}{\Lambda_1}\to 1$ and $\frac{T_{\max}}{T_{\min}}\to 1$, the left side of \eqref{sufficientcondition} converges to $\frac{\mu}{\sigma}<1$.

\vspace{3mm}
The following two lemmas are essential for constructiong a unique local solution to \eqref{ode}. See Subsection \ref{subsec:prooflembounded} and \ref{subsec:prooflemLip} for their proofs.
\begin{lem}\label{lem:bounded}
If
\begin{equation}\label{sufficientconditionLb}
\frac{\mu}{\sigma}\left(\frac{T_{\max}}{T_{\min}}\right)^{\sigma-1}\frac{\Lambda+b}{\Lambda-b} < 1
\end{equation}
holds, there exists $K>0$ which does not depend on $\lambda_0$ and satisfy
\begin{equation}\label{bounded}
\left\|\Psi[\lambda]\right\|_{L^1} \leq K
\end{equation}
for any $(t,\lambda)\in Q_{b}$.
\end{lem}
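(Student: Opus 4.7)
My plan is to bound $\|\Psi[\lambda]\|_{L^1}$ by producing, step by step, uniform sup-norm bounds on $G[\lambda]$, $w[\lambda]$, and $\omega[\lambda]$ valid throughout $Q_b$, and then exploiting the trivial estimate $|\tilde\omega[\lambda]|\le\|\omega[\lambda]\|_\infty$ together with $\|\lambda\|_{L^1}\le \Lambda+b$. Note first that for $(t,\lambda)\in Q_b$ the triangle inequality gives $\Lambda-b\le\|\lambda\|_{L^1}\le\Lambda+b$, so in particular $\lambda$ satisfies the hypothesis \eqref{Lam1lambdaLam2} of Theorem~\ref{th:fp} with $\Lambda_1=\Lambda-b$ and $\Lambda_2=\Lambda+b$; hence \eqref{sufficientconditionLb} ensures that $w[\lambda]\in C_{0+}(\Omega)$ is well defined.

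Next I would control $G[\lambda]$ pointwise. Since $1-\sigma<0$, the inequalities $T_{\max}^{1-\sigma}\le T(x,y)^{1-\sigma}\le T_{\min}^{1-\sigma}$ combined with definition \eqref{opG} and $\Lambda-b\le\|\lambda\|_{L^1}\le\Lambda+b$ sandwich $G[\lambda]^{1-\sigma}$ between explicit constants, and raising to the negative power $1/(1-\sigma)$ produces strictly positive constants $G_{\min},G_{\max}$, depending only on $\mu,\sigma,F,\Lambda,b,T_{\min},T_{\max}$, such that $G_{\min}\le G[\lambda](x)\le G_{\max}$ for all $x\in\Omega$ and every $\lambda$ under consideration. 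In particular $G[\lambda]^{-\mu}\le G_{\min}^{-\mu}$ and $G[\lambda]^{\sigma-1}\le G_{\max}^{\sigma-1}$.

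The main step is then a self-bounding argument for $w[\lambda]$. Taking the sup-norm of both sides of the fixed-point identity \eqref{fp}, pulling out $T(x,y)^{1-\sigma}\le T_{\min}^{1-\sigma}$ and $G[\lambda]^{\sigma-1}\le G_{\max}^{\sigma-1}$, I obtain
\begin{equation}
\|w[\lambda]\|_\infty\le \frac{\mu T_{\min}^{1-\sigma}G_{\max}^{\sigma-1}}{\sigma F}\bigl(\|w[\lambda]\|_\infty\|\lambda\|_{L^1}+\Phi\bigr).
\end{equation}
The key algebraic observation, using $G_{\max}^{\sigma-1}=FT_{\max}^{\sigma-1}/(\Lambda-b)$, is that the coefficient of $\|w[\lambda]\|_\infty$ on the right-hand side equals $(\mu/\sigma)(T_{\max}/T_{\min})^{\sigma-1}\|\lambda\|_{L^1}/(\Lambda-b)$, which is strictly less than $1$ by \eqref{sufficientconditionLb}. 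Rearranging yields an explicit uniform constant $W_{\max}$ with $\|w[\lambda]\|_\infty\le W_{\max}$ on $Q_b$. This absorption step is the main obstacle, since it is precisely what forces the sufficient condition \eqref{sufficientconditionLb}; without a bound $<1$ the linear term cannot be moved to the left-hand side.

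Once $\|w[\lambda]\|_\infty\le W_{\max}$ is secured, definition \eqref{opomega} yields $\|\omega[\lambda]\|_\infty\le W_{\max}G_{\min}^{-\mu}$, and \eqref{opaverageomega} then gives $|\tilde\omega[\lambda]|\le \|\omega[\lambda]\|_\infty$. Substituting into \eqref{opPsi} and integrating,
\begin{equation}
\|\Psi[\lambda]\|_{L^1}\le v\bigl(\|\omega[\lambda]\|_\infty+|\tilde\omega[\lambda]|\bigr)\|\lambda\|_{L^1}\le 2v\,W_{\max}G_{\min}^{-\mu}(\Lambda+b)=:K.
\end{equation}
Since $W_{\max},G_{\min},G_{\max}$ depend only on the model parameters $\mu,\sigma,v,F,\Phi,\Lambda,b,T_{\min},T_{\max}$ and not on the specific initial datum $\lambda_0$, the constant $K$ is of the required form, completing the proof.
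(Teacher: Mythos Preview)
Your approach is essentially identical to the paper's: bound $\|\lambda\|_{L^1}$ from both sides, sandwich $G[\lambda]$ between positive constants, run the self-absorbing estimate on the fixed-point identity for $w[\lambda]$ (this is where \eqref{sufficientconditionLb} is used, exactly as you say), pass to $\omega[\lambda]$, and then estimate $\Psi$.

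There is one small slip. From \eqref{opaverageomega} you only get
\[
|\tilde\omega[\lambda]|\le\frac{1}{\Lambda}\|\omega[\lambda]\|_\infty\|\lambda\|_{L^1}\le\frac{\Lambda+b}{\Lambda}\,\|\omega[\lambda]\|_\infty,
\]
not $|\tilde\omega[\lambda]|\le\|\omega[\lambda]\|_\infty$, because $\lambda\in Q_b$ need not satisfy $\|\lambda\|_{L^1}\le\Lambda$. This is harmless for the argument: it merely replaces your factor $2$ by $1+(\Lambda+b)/\Lambda$, which is exactly the constant the paper obtains, namely $K=v\,\omega_U(\Lambda+b)\bigl(1+(\Lambda+b)/\Lambda\bigr)$ with $\omega_U=W_{\max}G_{\min}^{-\mu}$.
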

\begin{lem}\label{lem:lip}
If \eqref{sufficientconditionLb} holds, there exists $L>0$ which does not depend on $\lambda_0$ and satisfy
\begin{equation}\label{Lip}
\left\|\Psi[\lambda_1]-\Psi[\lambda_2]\right\|_{L^1} \leq L\left\|\lambda_1-\lambda_2\right\|_{L^1}
\end{equation}
for any $(t,\lambda_1)\in Q_{b}$ and $(t,\lambda_2)\in Q_{b}$.
\end{lem}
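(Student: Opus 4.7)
\medskip
\noindent\textbf{Proof plan for Lemma \ref{lem:lip}.}
The plan is to establish $L^1$ (or sup-norm) Lipschitz continuity of each building block $G[\cdot]$, $w[\cdot]$, $\omega[\cdot]$, $\tilde\omega[\cdot]$ on $Q_b$, and then assemble them into a Lipschitz estimate for $\Psi$. Throughout I will use the standing fact that on $Q_b$ one has $\Lambda-b\le\|\lambda\|_{L^1}\le\Lambda+b$, so (exactly as in the note after Theorem \ref{th:fp}) $G[\lambda](x)$ is pinched between two positive constants independent of $\lambda\in Q_b$, and by Lemma \ref{lem:bounded} (equivalently, by iterating the contraction that defines $w[\lambda]$) there is a constant $M>0$ with $\|w[\lambda]\|_\infty\le M$ and hence $\|\omega[\lambda]\|_\infty\le M'$ uniformly on $Q_b$.

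First I would handle $G$. For any $x$, the map $\lambda\mapsto F^{-1}\!\int_\Omega|\lambda(y)|T(x,y)^{1-\sigma}dy$ is Lipschitz from $L^1(\Omega)$ into $\mathbb{R}$ with constant $T_{\min}^{1-\sigma}/F$ (using $\big||\lambda_1|-|\lambda_2|\big|\le|\lambda_1-\lambda_2|$). Composing with $s\mapsto s^{1/(1-\sigma)}$, which is smooth and hence Lipschitz on the compact positive interval that contains all of these integrals when $\lambda\in Q_b$, yields $\|G[\lambda_1]-G[\lambda_2]\|_\infty\le C_G\|\lambda_1-\lambda_2\|_{L^1}$. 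The same smoothness argument applied to $s\mapsto s^{\sigma-1}$ and $s\mapsto s^{-\mu}$ on the appropriate bounded positive interval gives analogous Lipschitz bounds for $G[\lambda]^{\sigma-1}$ and $G[\lambda]^{-\mu}$.

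The main obstacle is the Lipschitz estimate for the fixed point $w[\lambda]$, since $\lambda$ appears both as a coefficient and implicitly through $G[\lambda]$. I would denote by $A[\lambda]:C(\Omega)\to C(\Omega)$ the affine operator
\[
A[\lambda]W(x):=\frac{\mu}{\sigma F}\int_\Omega\bigl[W(y)|\lambda(y)|+\phi(y)\bigr]G[\lambda](y)^{\sigma-1}T(x,y)^{1-\sigma}dy,
\]
so that $w[\lambda]=A[\lambda]w[\lambda]$, and split
\[
w[\lambda_1]-w[\lambda_2]=\bigl(A[\lambda_1]w[\lambda_1]-A[\lambda_1]w[\lambda_2]\bigr)+\bigl(A[\lambda_1]w[\lambda_2]-A[\lambda_2]w[\lambda_2]\bigr).
\]
The first bracket is bounded in sup norm by the contraction constant $\kappa:=\tfrac{\mu}{\sigma}(T_{\max}/T_{\min})^{\sigma-1}(\Lambda+b)/(\Lambda-b)<1$ (from \eqref{sufficientconditionLb}) times $\|w[\lambda_1]-w[\lambda_2]\|_\infty$. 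The second is controlled by the Lipschitz bound for $G[\cdot]^{\sigma-1}$, the $L^1$-Lipschitz dependence of the factor $|\lambda|$, and the uniform bound $\|w[\lambda_2]\|_\infty\le M$. Rearranging (using $1-\kappa>0$) yields $\|w[\lambda_1]-w[\lambda_2]\|_\infty\le C_w\|\lambda_1-\lambda_2\|_{L^1}$.

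With $w$ handled, $\omega[\lambda]=w[\lambda]G[\lambda]^{-\mu}$ is Lipschitz in sup norm by the product rule applied to uniformly bounded Lipschitz factors, and $\tilde\omega[\lambda]=\Lambda^{-1}\int_\Omega\omega[\lambda]\lambda$ is Lipschitz as a real-valued map (write $\omega[\lambda_1]\lambda_1-\omega[\lambda_2]\lambda_2=(\omega[\lambda_1]-\omega[\lambda_2])\lambda_1+\omega[\lambda_2](\lambda_1-\lambda_2)$ and integrate). Finally, using the decomposition
\[
\Psi[\lambda_1]-\Psi[\lambda_2]=v\bigl(\omega[\lambda_1]-\tilde\omega[\lambda_1]\bigr)(\lambda_1-\lambda_2)+v\bigl(\omega[\lambda_1]-\omega[\lambda_2]\bigr)\lambda_2+v\bigl(\tilde\omega[\lambda_2]-\tilde\omega[\lambda_1]\bigr)\lambda_2
\]
and bounding each term by $\|\cdot\|_\infty\|\cdot\|_{L^1}$, with $\|\lambda_2\|_{L^1}\le\Lambda+b$ and all sup-norm factors uniformly controlled on $Q_b$, gives \eqref{Lip} with a constant $L$ built out of $v,\mu,\sigma,F,T_{\min},T_{\max},\Lambda,b,\Phi$ and independent of $\lambda_0$.
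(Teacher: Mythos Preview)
Your proposal is correct and follows essentially the same route as the paper's proof: Lipschitz bounds for $G[\cdot]$ via smoothness of $s\mapsto s^{1/(1-\sigma)}$ on a compact positive interval, then for $w[\cdot]$ by inserting the fixed-point equation, absorbing the $w$-difference term using the contraction constant $\kappa<1$ from \eqref{sufficientconditionLb}, and finally assembling $\omega$, $\tilde\omega$, and $\Psi$ by product-type splittings. The only cosmetic differences are that the paper phrases the power-function estimates through Fr\'echet derivatives and carries out the $w$ step by a direct three-term expansion of the integrand rather than your cleaner operator splitting $A[\lambda_1]w[\lambda_1]-A[\lambda_1]w[\lambda_2]+A[\lambda_1]w[\lambda_2]-A[\lambda_2]w[\lambda_2]$, and it uses a slightly different (but equivalent) two-term decomposition of $\Psi[\lambda_1]-\Psi[\lambda_2]$.
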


\noindent
{\bf Note}: The condition \eqref{sufficientconditionLb} is equivalent to placing $\Lambda_1=\Lambda-b$ and $\Lambda_2=\Lambda+b$ in the condition \eqref{sufficientcondition}.

\vspace{5mm}
\noindent
By using these lemmas, we can construct a unique local solution to \eqref{ode}.
\begin{theo}\label{th:localsol}
Suppose that $c>0$ satisfies
\[
c < \min\left\{a,\frac{b}{K},\frac{1}{L}\right\},
\]
then there exists a unique local solution $\lambda\in C^1([0,c];L^1_\Lambda(\Omega))\subset X$ to \eqref{ode} for any $\lambda_0\in L^1_\Lambda(\Omega)$. The constant $c>0$ does not depend on $\lambda_0$.
\end{theo}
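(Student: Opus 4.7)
The plan is to rewrite \eqref{ode} as the integral equation
\begin{equation*}
\lambda(t) = \lambda_0 + \int_0^t \Psi[\lambda(s)]\,ds
\end{equation*}
in the Banach space $X = C([0,c]; L^1(\Omega))$ and solve it via Banach's contraction principle on the closed ball
\begin{equation*}
B := \left\{\lambda\in X \,\middle|\, \|\lambda(t)-\lambda_0\|_{L^1}\le b \text{ for all } t\in[0,c]\right\},
\end{equation*}
viewed as a closed subset of $X$ with $\lambda_0$ treated as the constant trajectory. Since $c\le a$, every $\lambda\in B$ satisfies $(s,\lambda(s))\in Q_b$ for all $s\in[0,c]$, so both Lemma \ref{lem:bounded} and Lemma \ref{lem:lip} apply throughout.

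I would first verify that the Picard operator $F[\lambda](t):=\lambda_0+\int_0^t\Psi[\lambda(s)]\,ds$ is well defined as a Bochner integral: continuity of $\lambda:[0,c]\to L^1(\Omega)$ combined with the Lipschitz estimate \eqref{Lip} implies that $s\mapsto\Psi[\lambda(s)]$ is continuous into $L^1(\Omega)$. The self-map property follows from Lemma \ref{lem:bounded}, since
\begin{equation*}
\|F[\lambda](t)-\lambda_0\|_{L^1} \le \int_0^t\|\Psi[\lambda(s)]\|_{L^1}\,ds \le Kc < b
\end{equation*}
by the hypothesis $c<b/K$. Contraction follows from Lemma \ref{lem:lip}:
\begin{equation*}
\|F[\lambda_1](t)-F[\lambda_2](t)\|_{L^1} \le L\int_0^t\|\lambda_1(s)-\lambda_2(s)\|_{L^1}\,ds \le Lc\,\|\lambda_1-\lambda_2\|_X,
\end{equation*}
with contraction constant $Lc<1$. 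Banach's fixed-point theorem then delivers a unique fixed point $\lambda\in B$, and the continuity of $s\mapsto\Psi[\lambda(s)]$ together with the integral identity promotes $\lambda$ to $C^1([0,c];L^1(\Omega))$ with $d\lambda/dt=\Psi[\lambda]$.

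To finish I must check that $\lambda(t)\in L^1_\Lambda(\Omega)$ for every $t\in[0,c]$. Mass conservation follows by integrating the ODE over $\Omega$: from \eqref{opaverageomega},
\begin{equation*}
\int_\Omega \Psi[\lambda(t)](x)\,dx = v\int_\Omega\omega[\lambda(t)]\lambda(t)\,dx - v\tilde{\omega}[\lambda(t)]\int_\Omega\lambda(t)\,dx = 0,
\end{equation*}
so $\|\lambda(t)\|_{L^1}=\Lambda$ for all $t$. For non-negativity, the scalar coefficient $\gamma(s,x):=v\bigl(\omega[\lambda(s)](x)-\tilde{\omega}[\lambda(s)]\bigr)$ is bounded on $[0,c]\times\Omega$, because $\omega[\lambda(s)]\in C_{0+}(\Omega)$ is uniformly bounded along the trajectory by the estimates that underlie Lemma \ref{lem:bounded}. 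Hence the equation reduces to the pointwise linear ODE $\partial_t\lambda(t,x)=\gamma(t,x)\lambda(t,x)$, whose solution $\lambda(t,x)=\lambda_0(x)\exp\!\bigl(\int_0^t\gamma(s,x)\,ds\bigr)$ is non-negative whenever $\lambda_0\ge0$. Uniformity of $c$ in $\lambda_0$ is automatic: the constants $K$ and $L$ depend only on $\mu,\sigma,F,v,\Lambda,\Phi,b$, and on $T$, not on the specific initial datum.

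The main obstacle I anticipate is the non-negativity step: justifying the pointwise exponential formula rigorously within the $L^1$-framework requires either extracting an almost-everywhere scalar ODE along the $x$-fiber from the Bochner identity, or, more safely, running the Picard iteration $\lambda^{(n+1)}=F[\lambda^{(n)}]$ starting from the non-negative $\lambda_0$, verifying inductively that each iterate stays non-negative (using the explicit form of $\Psi$ and a Duhamel-type representation for each iterate), and then passing to the $L^1$-limit, which preserves sign almost everywhere.
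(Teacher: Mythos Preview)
Your approach is essentially the paper's: apply the Picard--Lindel\"of scheme (which you spell out via the contraction on $B$, whereas the paper simply cites it), then use the exponential representation $\lambda(t)=\lambda_0\exp\bigl(\int_0^t\gamma(s,\cdot)\,ds\bigr)$ for non-negativity and integrate over $\Omega$ for mass conservation. The order is reversed relative to the paper, but that is immaterial.

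There is, however, one concrete gap in the mass-conservation step. Your displayed identity
\[
\int_\Omega \Psi[\lambda(t)](x)\,dx \;=\; v\!\int_\Omega\omega[\lambda(t)]\lambda(t)\,dx \;-\; v\,\tilde{\omega}[\lambda(t)]\!\int_\Omega\lambda(t)\,dx \;=\; 0
\]
is circular: by definition $\tilde\omega[\lambda(t)]=\Lambda^{-1}\int_\Omega\omega[\lambda(t)]\lambda(t)$, so the middle expression equals $v\,\tilde\omega[\lambda(t)]\bigl(\Lambda-\int_\Omega\lambda(t)\bigr)$, which vanishes only if $\int_\Omega\lambda(t)=\Lambda$ is already known. (The paper's own proof glosses over the very same point when it writes the final ``$=\Lambda$''.) The fix is easy once non-negativity is in hand: set $M(t):=\int_\Omega\lambda(t)\,dx$ and read off from the integral equation that $M(t)=\Lambda+v\int_0^t\tilde\omega[\lambda(s)]\bigl(\Lambda-M(s)\bigr)\,ds$; this linear scalar Volterra equation has the unique continuous solution $M\equiv\Lambda$. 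A minor caution on your fallback for non-negativity: the plain Picard iterates $\lambda^{(n+1)}=\lambda_0+\int_0^t\Psi[\lambda^{(n)}]$ do \emph{not} obviously stay non-negative, since $\Psi$ is negative where $\omega<\tilde\omega$; the exponential formula you state first is the correct route and is exactly what the paper uses.
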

\begin{proof}
From the Picard–Lindel\"{o}f theorem,\footnote{See \citet[pp.78-79]{ZeidlerFixedPoint} or \citet[p.147]{oishi1997} for example.} we obtain the unique local solution 
\[
\lambda\in C^1([0,c];L^1(\Omega)).
\]
The task is now to show that $\lambda\geq 0$ a.e. $x\in \Omega$ and that $\int_\Omega\lambda(x)dx=\Lambda$. To show the former, we only have to see that the solution $\lambda\in C^1([0,c];L^1(\Omega))$ to \eqref{ode} can be expressed by
\begin{equation}\label{lambdaexp}
\lambda(t) = \lambda_0\exp{\int_0^t \left(\omega[\lambda(s)]-\tilde{\omega}[\lambda]\right)ds}
\end{equation}
a.e. $x\in \Omega$.\footnote{Differentiating both sides of \eqref{lambdaexp} by $t$ yields \eqref{ode}.} Hence, $\lambda(t)\geq 0$ a.e. $x\in\Omega$ for all $t\in[0,c]$. To show the latter, we integrate both sides of the first equation of \eqref{ode} with the variable $t$ and using \eqref{opPsi} and the second equation of \eqref{ode}. Then, we obtain
\begin{equation}\label{solintegralform}
\lambda(t)=\lambda_0+v\int_0^t\left(\omega[\lambda(s)]-\tilde{\omega}[\lambda(s)]\right)\lambda(s)ds.
\end{equation}
Since $\lambda(t)\geq 0$ a.e. $x\in\Omega$ for all $t\in[0,c]$, it follows from \eqref{solintegralform} that
\[
\begin{aligned}
\left\|\lambda(t)\right\|_{L^1}
&= \int_\Omega\lambda(t,x)dx \\
&= \int_\Omega\lambda_0(x)dx
+ \int_\Omega \int_0^t\Psi[\lambda(s)]dsdx \\
&=\Lambda + v\int_0^t\int_\Omega\left(\omega[\lambda(s)]-\tilde{\omega}[\lambda(s)]\right)\lambda(s)dxds \\
&= \Lambda
\end{aligned}
\]
for all $t\in[0, c]$.
\end{proof}

\noindent
{\bf Note}: The local solution is in $C^1$ class. This is immediate from \eqref{ode} and the fact that $\Psi[\lambda(t)]$ is continuous with respect to $t$.

\subsection{A unique global solution}

A unique global solution can be constructed by applying Theorem \ref{th:localsol} repeatedly. That is, since a time interval $[0, c]$ over which a unique local solution exists in Theorem \ref{th:localsol} does not depend on an initial value $\lambda_0$, it is easy to extend the local solution by re-taking an initial value as $\lambda_0=\lambda(c)$ and re-applying Theorem \ref{th:localsol}. Thus, we now have a local solution on $[0, 2c]$. All that remains is to repeat this process. Finally, we have the following theorem.
\begin{theo}
For any $\lambda_0\in L^1_\Lambda(\Omega)$, there exists a unique global solution $\lambda\in C^1([0,\infty);L^1_\Lambda(\Omega))$ to \eqref{ode}.
\end{theo}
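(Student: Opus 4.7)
The plan is to iterate Theorem \ref{th:localsol}, exploiting the crucial feature that the length $c$ of the local existence interval does not depend on the initial datum. First I would apply Theorem \ref{th:localsol} with initial value $\lambda_0$ to obtain a unique $\lambda^{(1)}\in C^1([0,c];L^1_\Lambda(\Omega))$ solving \eqref{ode}. Because the conclusion of that theorem already guarantees $\lambda^{(1)}(c)\in L^1_\Lambda(\Omega)$ (mass conservation and nonnegativity are part of the local statement), I can re-apply Theorem \ref{th:localsol} with the fresh datum $\lambda^{(1)}(c)$ to obtain $\lambda^{(2)}\in C^1([0,c];L^1_\Lambda(\Omega))$, and then define $\lambda(t):=\lambda^{(2)}(t-c)$ on $[c,2c]$. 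Iterating, set $\lambda^{(k+1)}(0):=\lambda^{(k)}(c)$ and $\lambda(t):=\lambda^{(k+1)}(t-kc)$ on $[kc,(k+1)c]$ for every $k\in\mathbb{N}$. This produces a function $\lambda\in C^0([0,\infty);L^1_\Lambda(\Omega))$.

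Next I would verify $C^1$ regularity at the gluing times $t=kc$. The left derivative there equals $\Psi[\lambda^{(k)}(c)]$ and the right derivative equals $\Psi[\lambda^{(k+1)}(0)]=\Psi[\lambda^{(k)}(c)]$, so they agree; since $t\mapsto \Psi[\lambda(t)]$ is continuous on each subinterval and matches at the junctions, $\frac{d\lambda}{dt}=\Psi[\lambda(t)]$ holds in $C([0,\infty);L^1(\Omega))$, giving $\lambda\in C^1([0,\infty);L^1_\Lambda(\Omega))$. Global uniqueness follows by induction: any other global solution $\tilde\lambda$ with $\tilde\lambda(0)=\lambda_0$ must coincide with $\lambda$ on $[0,c]$ by the uniqueness part of Theorem \ref{th:localsol}, and inductively on $[kc,(k+1)c]$ using $\tilde\lambda(kc)=\lambda(kc)$.

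The point I would emphasize (rather than a hard obstacle) is why the iteration does not degenerate: the constants $K$ and $L$ from Lemmas \ref{lem:bounded} and \ref{lem:lip} and the radius $b$ depend only on the fixed parameters $\mu,\sigma,F,T_{\min},T_{\max}$ and on $\Lambda$, together with the structural condition \eqref{sufficientconditionLb}. Because every restarted initial datum $\lambda^{(k)}(c)$ again lies in $L^1_\Lambda(\Omega)$, the same ball $Q_b$ around it satisfies \eqref{sufficientconditionLb} with identical constants, so the same $c$ works at every step. Hence the endpoints $kc$ march off to infinity and $\lambda$ is defined on $[0,\infty)$, completing the proof.
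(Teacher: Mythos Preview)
Your proposal is correct and follows essentially the same approach as the paper: iterate Theorem~\ref{th:localsol}, using the fact that the local existence time $c$ is independent of the initial datum in $L^1_\Lambda(\Omega)$, to extend the solution from $[0,c]$ to $[0,2c]$ and so on. Your write-up is actually more thorough than the paper's, since you spell out the $C^1$ matching at the gluing times and the inductive uniqueness argument.
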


\section{Racetrack economy}\label{sec:racetrack}
In Sections \ref{sec:racetrack}, \ref{sec:instability} and \ref{sec:numerical}, we consider the FE model \eqref 
{1} on a circle $C$ of which radius is $\rho>0$. This is a setting often referred to as a racetrack economy. A function $h$ on $C$ is identified with a periodic function $\tilde{h}$ on a closed interval $[-\pi,\pi]$. We can put $x=\rho\theta$ for $x\in C$ and $\theta\in[-\pi,\pi]$, thus we have
\begin{equation}\label{hhtilde}
h(x)=h(\rho\theta)=:\tilde{h}(\theta).
\end{equation}
Then, the integral on $C$ is computed as 
\begin{equation}\label{intonmpipi}
\int_{C}h(x)dx = \int_{-\pi}^\pi \tilde{h}(\theta)\rho d\theta.
\end{equation}
To avoid complicated symbols, $\tilde{h}$ will be written as $h$ again from here on when no confusion can arise.

The distance between two points $x=\rho\theta$ and $y=\rho\theta^\prime$ on $C$ is defined as
\begin{equation}\label{distonmpipi}
d(x,y):=\rho\min\left\{|\theta-\theta^\prime|,2\pi-|\theta-\theta^\prime|\right\}
\end{equation}
It is obvious that $d(x,y)=d(y,x)$. In the following sections \ref{sec:instability} and \ref{sec:numerical}, we use
\begin{equation}\label{}
T(x,y) = e^{\tau d(x,y)}.
\end{equation}
as $T(x,y)$ in \eqref{1}. Then, we see that
\begin{equation}\label{Tmin}
T_{\min} := \min_{x,y\in C} T(x,y) = 1
\end{equation}
and
\begin{equation}\label{Tmax}
T_{\max} := \max_{x,y\in C} T(x,y) = e^{\tau\rho\pi}.
\end{equation}
It is convenient to define 
\begin{equation}\label{defalpha}
\alpha:=\tau(\sigma-1)
\end{equation}
since $\tau(\sigma-1)$ often appears in the integral equations of \eqref{1}.

\section{Instability of a homogeneous stationary solution}\label{sec:instability}
In this Section, the existence of a homogeneous stationary solution is shown, and its stability is investigated by using Fourier analysis.

\subsection{A homogeneous stationary solution}
Suppose that the immobile population distribution is spatially uniform 
\begin{equation}\label{stphi}
\phi=\ol{\phi}\equiv \frac{\Phi}{2\pi\rho},
\end{equation}
which satisfies \eqref{totalimpop}. In this case, if the mobile population distribution is spatially uniform, it can be shown that there exists a homogeneous stationary solution in which all unknown functions are spatially uniform. In fact, substituting \eqref{stphi} and 
\begin{equation}\label{stlambda}
\lambda=\ol{\lambda}\equiv \frac{\Lambda}{2\pi\rho},
\end{equation}
which satisfies \eqref{totalmobpop}, into \eqref{1}, and applying
\begin{equation}
\int_C e^{-\alpha d(x,y)}dy
= \frac{2(1-e^{-\alpha\rho\pi})}{\alpha},
\end{equation}
to the second and the third equations in \eqref{1}, we have the homogeneous states
\begin{align}
&Y=\ol{Y} \equiv \ol{w}\ol{\lambda}+\ol{\phi}\geq 0,\label{stY}\\
&w=\ol{w} \equiv \frac{\frac{\mu\ol{\phi}}{\sigma\ol{\lambda}}}{1-\frac{\mu}{\sigma}}\geq 0,\label{stw}\\
&G=\ol{G} \equiv \left[\frac{2\ol{\lambda}(1-e^{-\alpha\rho\pi})}{F\alpha}\right]^{\frac{1}{1-\sigma}}\geq 0,\label{stG}\\
&\omega=\ol{\omega} \equiv \ol{w}\ol{G}^{-\mu}\geq 0.\label{stomega}
\end{align}

\subsection{Stability analysis}
Let $\varDelta\lambda$, $\varDelta Y$, $\varDelta G$, $\varDelta w$, and $\varDelta \omega$ be small perturbations added to the homogeneous states \eqref{stlambda}, \eqref{stY}, \eqref{stw}, \eqref{stG}, and \eqref{stomega}, respectively. From \eqref{totalmobpop},
\begin{equation}\label{totaldel0}
\int_C \varDelta\lambda(t,x)dx=0,~\forall t\in[0,\infty).
\end{equation}
Substituting $\lambda=\ol{\lambda}+\varDelta\lambda$, $Y=\ol{Y}+\varDelta Y$, $G=\ol{G}+\varDelta G$, $w=\ol{w}+\varDelta w$, and $\omega=\ol{\omega}+\varDelta \omega$ into \eqref{1} and ignoring higher order terms of the perturbations, we obtain the following linearized system
\begin{equation}\label{l}
\left\{
\begin{aligned}
&\varDelta Y(t,\theta) = \ol{\lambda}\varDelta w(t,\theta) + \ol{w}\varDelta \lambda(t,\theta),\\
&\varDelta G(t,\theta) = \frac{\ol{G}^\sigma}{F(1-\sigma)}\int_{-\pi}^\pi\varDelta \lambda(t,\theta^\prime)e^{-\alpha d(\rho\theta,\rho\theta^\prime)}\rho d\theta^\prime,\\ 
&\varDelta w(t,\theta) = \frac{\mu}{\sigma F}(\sigma-1)\ol{Y}\ol{G}^{\sigma-2}\int_{-\pi}^\pi\varDelta G(t,\theta^\prime)e^{-\alpha d(\rho\theta,\rho\theta^\prime)}\rho d\theta^\prime\\
&\hspace{30mm}+\frac{\mu}{\sigma F}\ol{G}^{\sigma-1}\int_{-\pi}^\pi\varDelta Y(t,\theta^\prime)e^{-\alpha d(\rho\theta,\rho\theta^\prime)}\rho d\theta^\prime,\\
&\varDelta \omega(t,\theta) = -\mu\ol{w}\ol{G}^{-\mu-1}\varDelta G(t,\theta) + \ol{G}^{-\mu}\varDelta w(t,\theta),\\
&\frac{\partial}{\partial t}\varDelta \lambda(t,\theta) = v\ol{\lambda}\varDelta\omega(t,\theta).
\end{aligned}\right.
\end{equation}
We expand all the perturbations in \eqref{l} into Fourier series. The expansion of a periodic function $\varDelta h$ on $[-\pi,\pi]$ is defined as
\begin{equation}\label{Fseries}
\varDelta h(\theta)=\frac{1}{2\pi}\sum_{k=0,\pm1,\pm2,\cdots}\hat{h}_ke^{ik\theta}
\end{equation}
where $\hat{h}_k$ is a Fourier coefficient defined by
\begin{equation}\label{Fc}
\hat{h}_k := \int_{-\pi}^{\pi}\varDelta h(\psi)e^{ik\psi}d\psi 
\end{equation}
As a result of the Fourier expansion, we have to calculate the convolution\footnote{The same convolution appears in the analysis of NEG models on the racetrack. See \citet{OhtakeYagi_Asym}, \citet{Ohtake2023city}, \citet{Ohtake2023cont}, and \citet{Ohtake2025agriculture} for example.}
\begin{align}
\int_{-\pi}^{\pi}e^{ik\theta^\prime}e^{-\alpha d(\rho\theta,\rho\theta^\prime)}\rho d\theta^\prime.
\end{align}
By a careful calculation, we obtain
\begin{equation}
\int_{-\pi}^{\pi}e^{ik\theta^\prime}e^{-\alpha d(\rho\theta,\rho\theta^\prime)}\rho d\theta^\prime = E_ke^{ik\theta},
\end{equation}
where
\begin{equation}\label{Ek}
E_k = \frac{2\alpha\rho^2\left(1-(-1)^{|k|}e^{-\alpha\rho\pi}\right)}{k^2+\alpha^2\rho^2}.
\end{equation}
 Then, by defining $Z_k$ as
\begin{equation}\label{Zk}
Z_k:= \frac{\alpha^2\rho^2\left(1-(-1)^{|k|}e^{-\alpha\rho\pi}\right)}{(k^2+\alpha^2\rho^2)(1-e^{-\alpha\rho\pi})},
\end{equation}
we see that
\begin{equation}\label{stGsigm1EkFlamZk}
\ol{G}^{\sigma-1}E_k = \frac{F}{\ol{\lambda}}Z_k.
\end{equation}
It is known that \footnote{The variable $Z_k$ corresponds to $Z$ in \citet[p.73, p.90]{FujiKrugVenab}. This variable commonly appears in the stability analysis of NEG models on the racetrack. See \citet{Ohtake2023city} or \citet{Ohtake2025agriculture} for the proof of \eqref{ddalpaZk>0}, \eqref{limZk0}, and \eqref{limZk1}.}
\begin{align}
&\frac{d}{d\alpha}Z_k \geq 0,\label{ddalpaZk>0}\\
&\lim_{\alpha\rho\to 0}Z_k = 0,\label{limZk0}\\
&\lim_{\alpha\rho\to\infty}Z_k = 1.\label{limZk1}
\end{align} 
Then, from the linearized system \eqref{l}, we have the following equations for the Fourier coefficients
\begin{align}
&\hat{Y}_k = \ol{\lambda}\hat{w}_k+\ol{w}\hat{\lambda}_k,\label{eqYhatk}\\
&\hat{G}_k = \frac{\ol{G}}{(1-\sigma)\ol{\lambda}}Z_k\hat{\lambda}_k,\label{eqGhatk}\\
&\left(1-\frac{\mu}{\sigma}Z_k\right)\hat{w}_k = \left(-\frac{\mu\ol{Y}}{\sigma\ol{\lambda}^2}Z_k^2+\frac{\mu\ol{w}}{\sigma\ol{\lambda}}Z_k\right)\hat{\lambda}_k,\label{eqwhatk}\\
&\hat{\omega}_k = -\mu\ol{w}\ol{G}^{-\mu-1}\hat{G}_k + \ol{G}^{-\mu}\hat{w}_k,\label{eqomegahatk}\\
&\frac{d}{dt}\hat{\lambda}_k = v\ol{\lambda}\hat{\omega}_k.\label{eqddtlambdahatk}
\end{align}
Because of \eqref{totaldel0}, $\hat{\lambda}_0=0$. Therefore, we only have to consider the case $k\neq 0$. Solving \eqref{eqwhatk} for $\hat{w}_k$ yields 
\begin{equation}\label{whatk}
\hat{w}_k = \frac{-\frac{\mu\ol{Y}}{\sigma\ol{\lambda}^2}Z_k^2+\frac{\mu\ol{w}}{\sigma\ol{\lambda}}Z_k}{1-\frac{\mu}{\sigma}Z_k}\hat{\lambda}_k.
\end{equation}
Substituting \eqref{eqGhatk} and \eqref{whatk} into \eqref{eqomegahatk} yields
\begin{equation}\label{omegahatk}
\hat{\omega}_k = -\frac{\mu\ol{G}^{-\mu}}{\ol{\lambda}}\left(\frac{\ol{w}}{1-\sigma}Z_k + \frac{\frac{\ol{Y}}{\ol{\lambda}}Z_k^2-\ol{w}Z_k}{\sigma-\mu Z_k}\right)\hat{\lambda}_k.
\end{equation}

From \eqref{eqddtlambdahatk} and \eqref{omegahatk}, we have
\begin{equation}\label{dynlamk}
\frac{d}{dt}\hat{\lambda}_k = \Gamma_k\hat{\lambda}_k
\end{equation}
where
\begin{equation}\label{Gammak}
\Gamma_k := -v\mu\ol{G}^{-\mu}\left(\frac{\ol{w}}{1-\sigma}Z_k + \frac{\frac{\ol{Y}}{\ol{\lambda}}Z_k^2-\ol{w}Z_k}{\sigma-\mu Z_k}\right)
\end{equation}
which is called an {\it eigenvalue} of the system \eqref{l}. Note that
\begin{equation}
\Gamma_k = \Gamma_{-k}
\end{equation}
for any $k\neq 0$. It is easy to solve \eqref{dynlamk} as
\begin{equation}
\hat{\lambda}_k(t) = \hat{\lambda}(0)e^{t\Gamma_k},~t\geq 0
\end{equation}
Therefore, if $\Gamma_k<0$ (resp.~$\Gamma_k>0$), the k-th mode $e^{ik\theta}$ decays (resp.~grows) over time. In this sense, the homogeneous stationeary solution is stable (resp.~unstable) when $\Gamma_k<0$ (resp.~$\Gamma_k>0$). For convenience, we also use the expression that the $k$-th mode is stable (resp. unstable) when $\Gamma_k<0$ (resp.~$\Gamma_k>0$).

Note that $\sigma-\mu Z_k > 0$ because $\sigma>1$, $\mu\in(0,1)$, and $Z_k\in(0,1)$. Therefore, from \eqref{Gammak}, we can show that there exists
\begin{equation}\label{Z*}
Z^* = \frac{\ol{w}+\frac{\ol{w}\sigma}{\sigma-1}}{\frac{\mu\ol{w}}{\sigma-1}+\frac{\ol{Y}}{\ol{\lambda}}} > 0
\end{equation} 
such that the $k$-th mode is stable (resp.~unstable) when $Z_k>Z^*$ (resp.~$Z_k<Z^*$). See Fig.~\ref{fig:signSk} for the sketch of $(\sigma-\mu Z_k)\Gamma_k$ whose sign is the same as that of $\Gamma_k$. 
\begin{figure}[H]
\centering
\begin{tikzpicture}
 \draw[name path=xaxis,->,thin] (0,0)--(5.0,0)node[right]{$Z_k$};
 \draw[->,thin] (0,-1.5)--(0,1.5)node[right]{$$};
 \draw (0,0)node[below left]{O};
 \draw[name path=Omega1,blue,thin,domain=0:4.5] plot(\x,{-0.30*pow(\x,2)+\x})node[above right]{$(\sigma-\mu Z_k)\Gamma_k$};
 \path[name intersections={of=Omega1 and xaxis}];
 \fill[black] (intersection-2) circle (0.07) node[above right]{$Z^*$};
\end{tikzpicture}
\caption{Sketch of $(\sigma-\mu Z_k)\Gamma_k$}
\label{fig:signSk}
\end{figure}
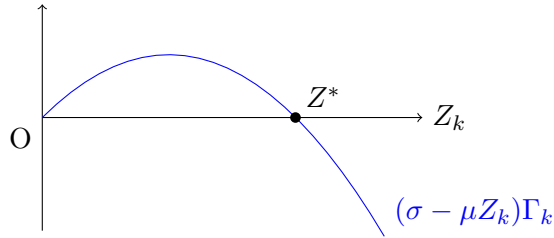
\noindent
{\bf Note}: Sinc $Z_k<1$, in order for the $k$-th mode to actually stabilize, $Z^*<1$ must be satisfied. This is a so-called assumption of no black holes,\footnote{\citet[p.~59]{FujiKrugVenab}} which provides a necessary and sufficient condition for any mode to be stable if $Z$ is sufficiently close to $1$. Putting $Z^*<1$ in \eqref{Z*} and using \eqref{stY} and \eqref{stw}, we obtain the assumption of no black holes
\begin{equation}\label{nbh}
\sigma-1 > \mu,
\end{equation}
which requires a relatively low preference for variety.

\vspace{5mm}
Suppose that the assumption of no black holes \eqref{nbh} holds. As in \eqref{ddalpaZk>0}, since $Z_k$ is monotonically non-decreasing with respect to $\tau$, any $k$-th mode is stable (resp. unstable) if $\tau$ is larger (resp. less) than a {\it critical point} $\tau_k^*$ on which $\Gamma_k=0$. Fig.~\ref{fig:eigen} shows the eigenvalue $\Gamma_k$ for several values of $k$.\footnote{The parameters are fixed to $\Phi=1.0$, $\Lambda=1.0$, $\mu=0.6$, $F=1.0$, $\sigma=3.0$, and $\rho=1.0$. The following figures are drawn by using Matplotlib \citep{matplotlib}.} From this, we can see that the smaller the value of $k$, the smaller the value of $\tau_k^*$.\footnote{Analytically, we can show that $\tau_{|k|}^*<\tau_{|k|+2}^*$ and $\tau_{-|k|}^*<\tau_{-|k|-2}^*$. See Theorem \ref{th:tau*k} in Subsection \ref{subsec:deptau*k} for details.} This means that the absolute value of frequencies of unstable modes decreases as transport costs decrease.\footnote{Note that $\Gamma_k = \Gamma_{-k}$ for $k\neq 0$.} This is a common property shared by many basic NEG models of racetrack economy, both discrete and continuous spaces.\footnote{See \citet{ChiAsh08}, \citet{AkaTakaIke}, \citet{OhtakeYagi_point}, \citet{OhtakeYagi_Asym}, and \citet{Ohtake2023cont} for example.}

\begin{figure}[H]
\centering
\includegraphics[width=0.6\columnwidth]{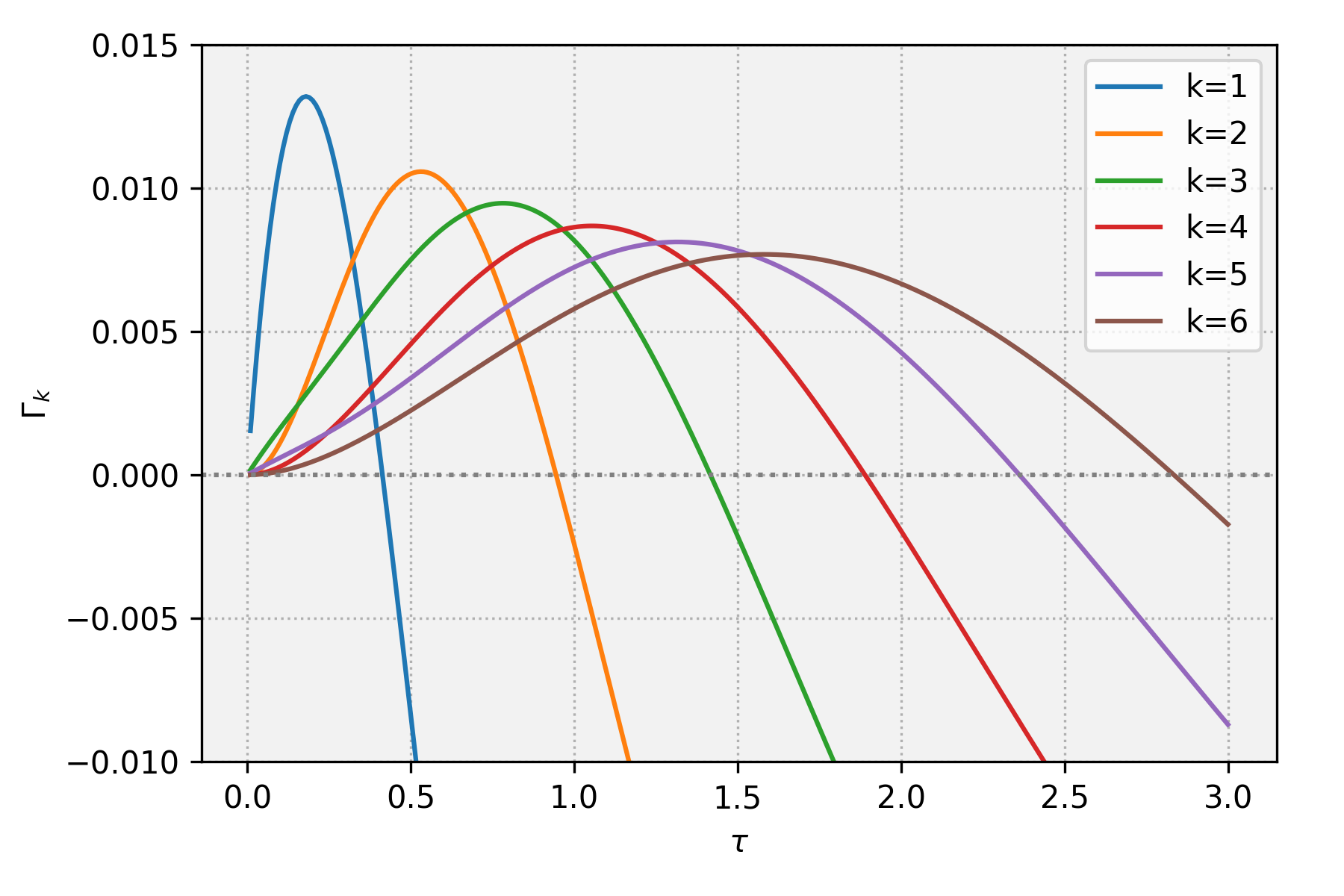}
\caption{Eigenvalues for various frequencies}
\label{fig:eigen}
\end{figure}

For a more detailed view, Fig.~\ref{fig:hmaps} shows the eigenvalues plotted as a heat map on the $\tau$-$\sigma$ plane.\footnote{The parameters are fixed to $\Phi=1.0$, $\Lambda=1.0$, $\mu=0.6$, $F=1.0$, and $\rho=1.0$.} The black line is the curve on which $\Gamma_k=0$. Let us call the curve the {\it critical curve}. Inside  the area bounded by the critical curve (lower left area in each heatmap), the eigenvalues are positive, while outside (upper right area in each heatmap), the eigenvalues are negative. From these figures, we can see the $\sigma$-dependence of critical points. That is, the smaller $\sigma$ is, the larger critical points become. In other words, the stronger the preference for variety, the more agglomeration can occur even with larger transport costs. Furthermore, Fig.~\ref{fig:ccurves} shows only the critical curves for each frequency. We can see that the critical curve shifts outwards as the absolute value of frequencies increases. That is, for any given $\sigma$, modes with large frequencies destabilize earlier as transport costs decrease. \citet{ohtake2024pattern} confirms that the shape and frequency dependence of the critical curve is similar to other representative NEG models.\footnote{Specifically, the Krugman-type model originated from \citet{Krug91}, in which the utility function is given in the Cobb-Douglas form, and the Pfl{\"u}ger-type model originated from \citet{Pfl}, in which the utility function is given in a quasilinear-log form.}

\begin{figure}[H]
\centering
\includegraphics[width=\columnwidth]{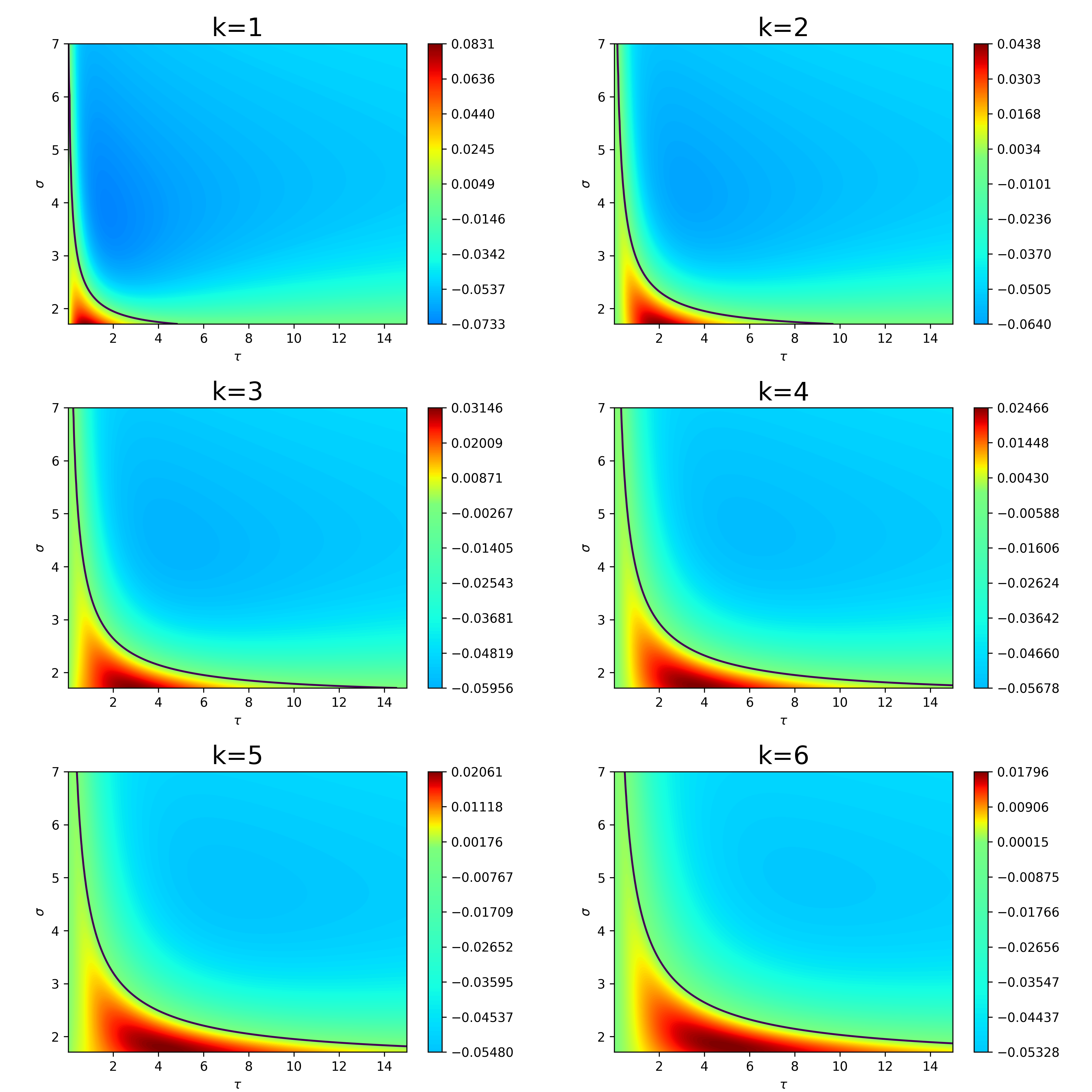}
\caption{Heatmaps of eigenvalues}
\label{fig:hmaps}
\end{figure}

\begin{figure}[H]
\centering
\includegraphics[width=0.6\columnwidth]{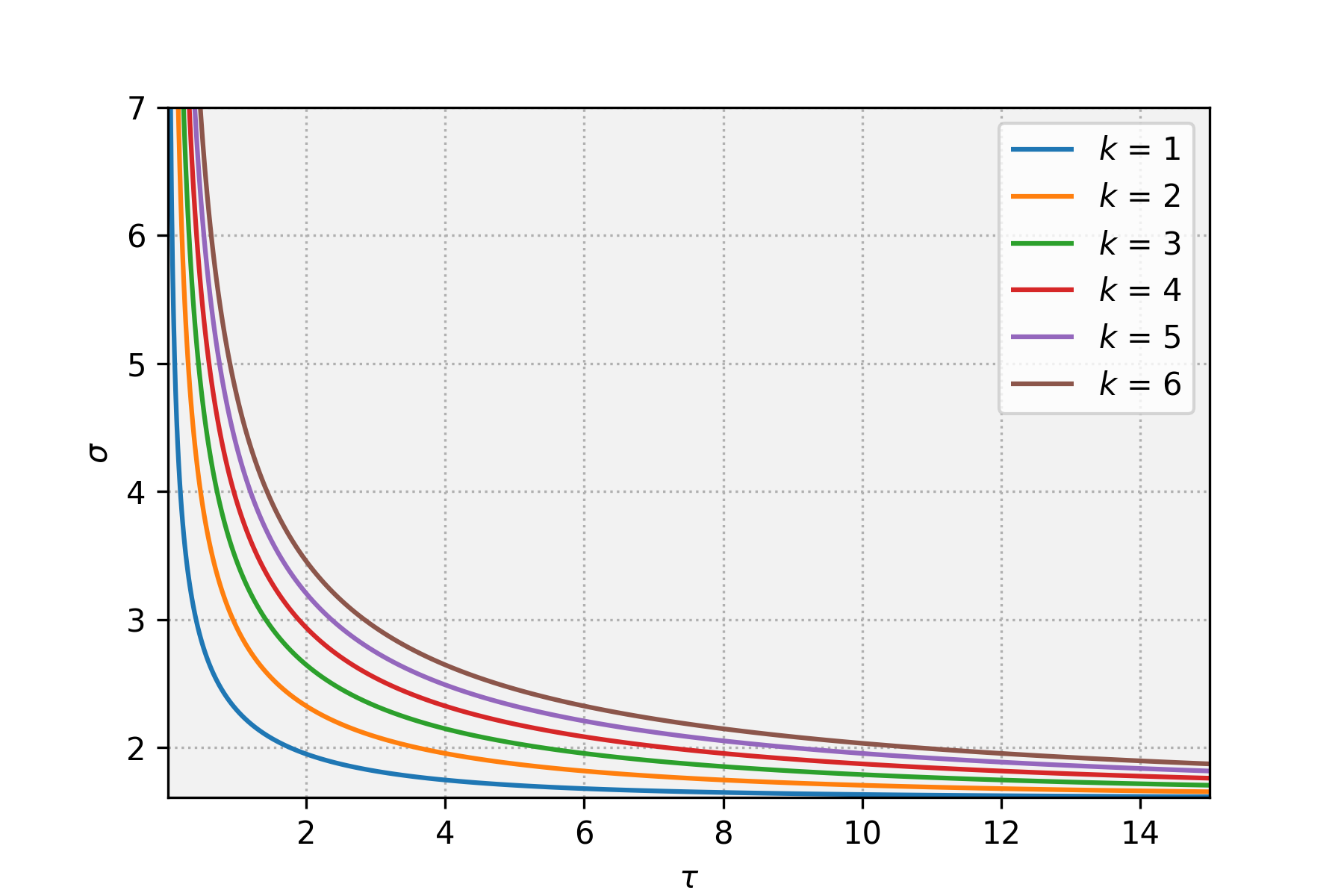}
\caption{Critical curves for various frequencies}
\label{fig:ccurves}
\end{figure}

\section{Numerical simulations for time evolution}\label{sec:numerical}
It is difficult to analytically calculate the final state of a solution that starts near the unstable homogeneous stationary solution. In this section, we observe by numerical computation that the solution asymptotically approaches a non-homogeneous stationary solution.

\subsection{Numerical scheme}
To solve equation (1) numerically, continuous time and space must be discretized. With $\Delta t=0.01$, the time variable $t$ is discretized into the series of $t_k=k\Delta t$ for $k=0,1,2,\cdots$. With $\Delta \theta=2\pi/N$, the space variable $\theta\in[-\pi,\pi]$ is discretized into $N$ nodes $\theta_i=-\pi+(i-1)\Delta\theta$ for $i=1,2,\cdots,N$.\footnote{In the simulation of this paper, $N=255$.} Then, the functions in the model \eqref{1} are approximately expressed as $N$-dimensional vectors. For example, $\lambda(t_k,\cdot)$ is approximated as $\lambda^k:=[\lambda_1^k,\lambda_2^k,\cdots,\lambda_{N}^k]\in\mathbb{R}^N$. Then, the integral of a function $h$ on $\mathbb{R}\times [-\pi,\pi]$ is simply approximated by the Riemann sum
\begin{equation}
\int_{-\pi}^\pi h(t_k,\theta)\rho d\theta \simeq \sum_{i=1}^{N} h^k_i\rho\Delta\theta,~k=0,1,2,\cdots.
\end{equation}
For a given $\lambda^k\in\mathbb{R}^N$ at each time step $t_k$, by substituting \eqref{priceindexeq} and \eqref{incomex} into \eqref{nominalwageeq}, we obtain the instantaneous equilibrium equation
\begin{equation}\label{insteq}
w(x) = \frac{\mu}{\sigma}\int_\Omega\frac{w(y)\lambda(y)+\phi(y)}{\int_\Omega\lambda(t,z)T(y,z)^{1-\sigma}dz}T(x,y)^{1-\sigma}dy,\hspace{1mm}x\in\Omega,
\end{equation}
We approximate \eqref{insteq} as
\begin{equation}
w_i = \frac{\mu}{\sigma}\sum_{j=1}^N\frac{w_j\lambda_j+\phi_j}{\sum_{k=1}^N\lambda_kT(y_j,z_k)^{1-\sigma}\Delta z}T(x_i,y_j)^{1-\sigma}\Delta y,\hspace{1mm}i=1,2,\cdots,N.
\end{equation}
and have to solve this nonlinear algebraic equations. In most cases, a simple iterative method works well.\footnote{In fact, the iterations converge even for parameter values that far exceed the sufficient condition \eqref{sufficientcondition} for the operator $E$ defined in \eqref{mapE} to be a contraction. This suggests that the sufficient condition \eqref{sufficientcondition} is too strict.} The differential equation for $\lambda$ of \eqref{1} is solved by the explicit Euler method. Thus, we start with an initial value $\lambda^0\in\mathbb{R}^N$ and obtain a sequence $\lambda^1,\lambda^2,\cdots$ as an approximation of the time evolution of the unknown function $\lambda$. When $|\lambda_i^{n+1}-\lambda_i^{n}|<10^{-10}~(\forall i=1,2,\cdots,N)$, a numerical solution $\lambda^*=\lambda^{n+1}$ is considered to be an approximated stationary solution and the computation for time evolution is stopped. The simulation code in Julia\footnote{See \citet{be2017julia} for details on Julia language.} is available at \url{https://github.com/k-ohtake/contspace-femodel}.

\subsection{Numerical results}

The following figures show numerical stationary solutions $\lambda^*$ obtained in this way and the corresponding distributions of the real wage $\omega^*$ on the interval $[-\pi,\pi]$. In this simulation, fixed parameters are set to $\Phi=1$, $\Lambda=1$, $\mu=0.6$, $F=1$, $v=1$, and $\rho=1$. Control parameters are $\tau>0$ and $\sigma>1$. 

Figs.~\ref{fig:t1p6}-\ref{fig:t0p25} show the results when $\sigma$ is fixed to $3.0$ and $\tau$ is varied. For each value of $\tau$, the mobile population deisity $\lambda$ at statioanry solution forms several spikes. In addition, at the spike regions, higher real wages are achieved relative to other regions. As we decrease $\tau=1.6$, $1.3$, $1.1$, $0.9$, $0.5$, and $0.2$, we see that the number of final spikes decreases from $6$, $5$, $4$, $3$, $2$, and $1$. In other words, lower transport costs promote agglomeration in a smaller number of regions. This qualitative propertiy is similar to that of other models of racetrack economy.\footnote{In the case of discrete space, for example, \citet{AkaTakaIke} and \citet{IkeAkaKon} must be cited. In the case of continuous space, see, for example, \citet{OhtakeYagi_point}, \citet{Ohtake2023city}, \citet{Ohtake2023cont}, and \citet{ohtake2024pattern}.}

\begin{figure}[H]
 \begin{subfigure}{0.5\columnwidth}
  \centering
  \includegraphics[width=\columnwidth]{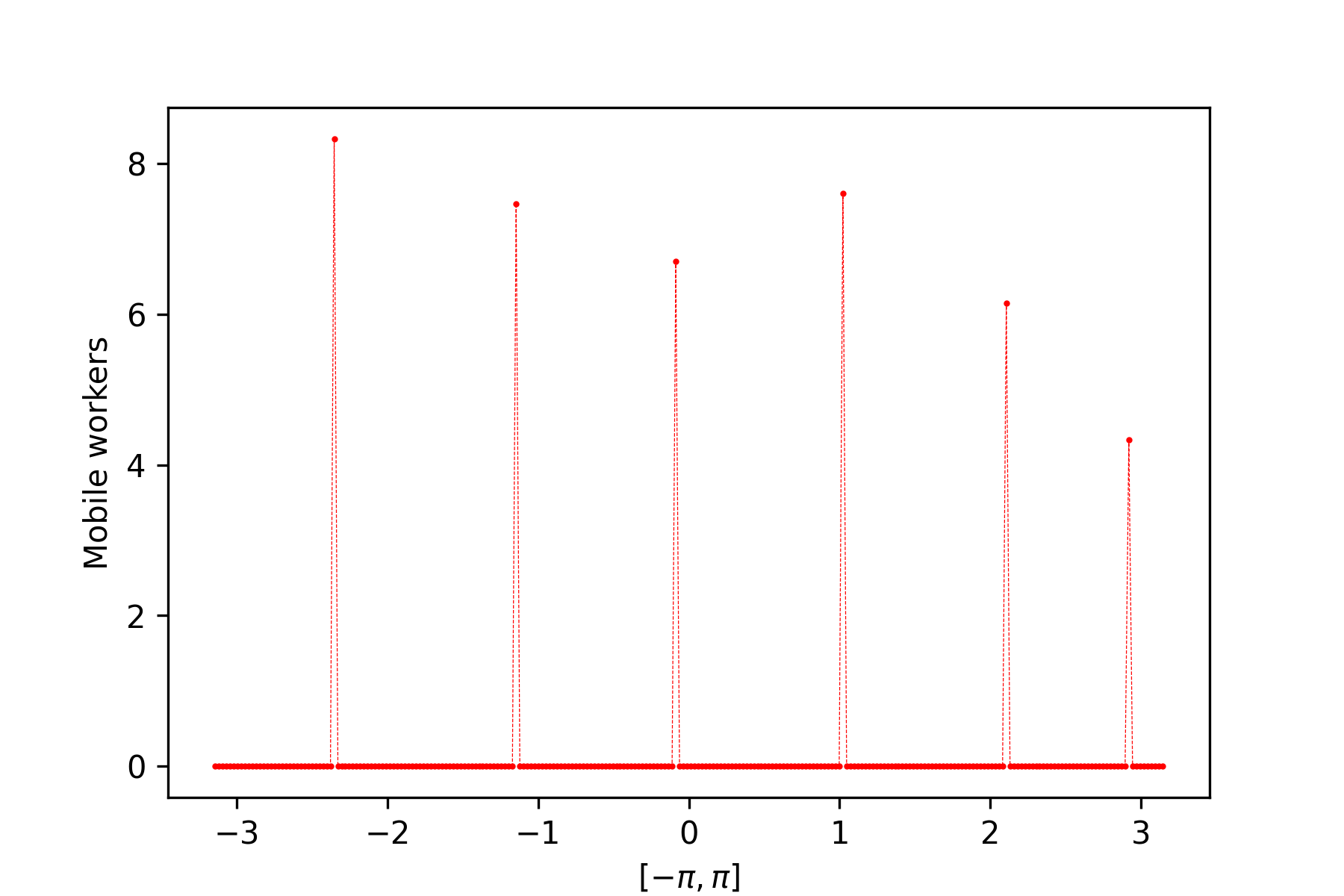}
  \caption{Mobile population}
 \end{subfigure}
 \begin{subfigure}{0.5\columnwidth}
  \centering
  \includegraphics[width=\columnwidth]{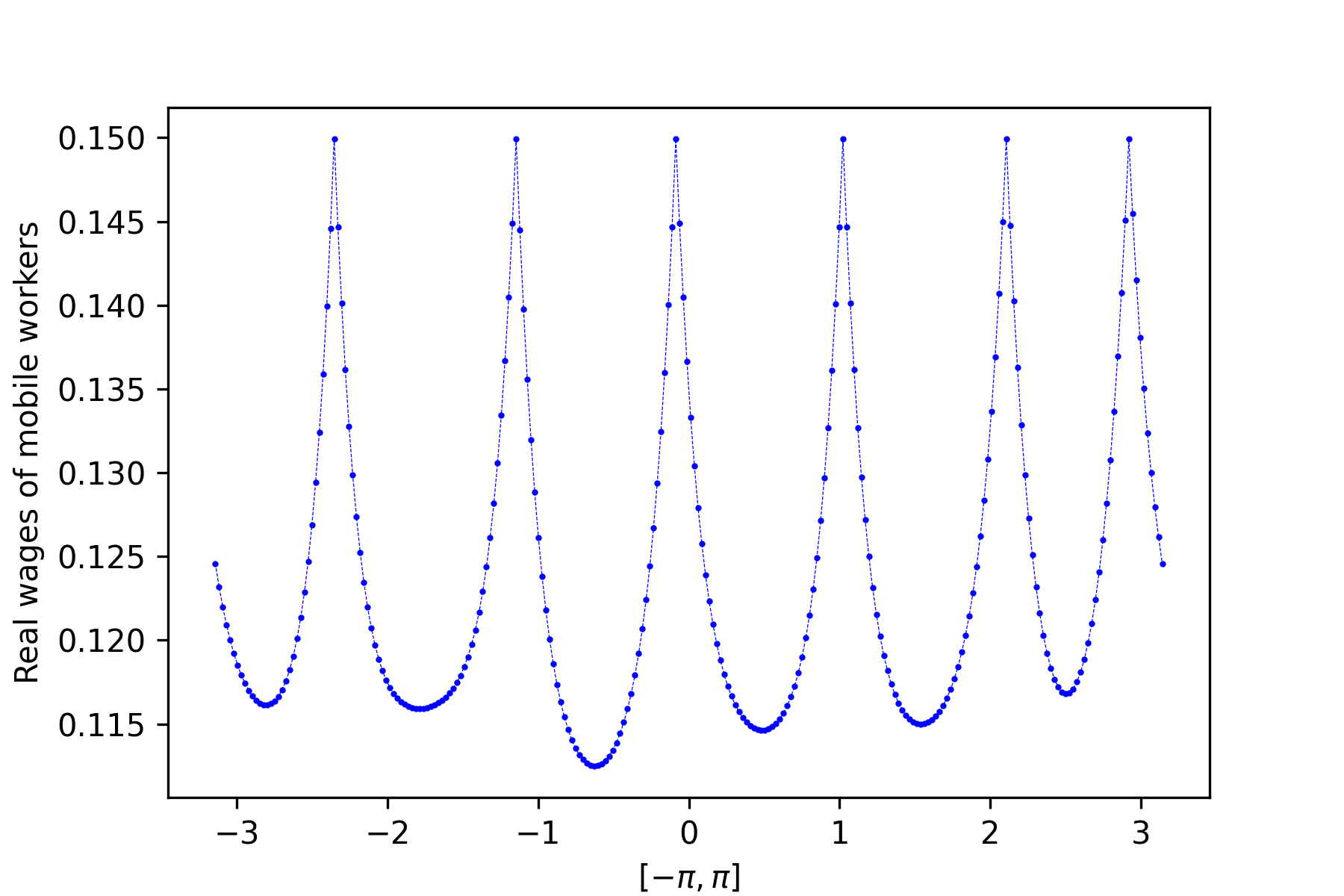}
  \caption{Real wage}
 \end{subfigure}\\
 \caption{Stationary solution for $(\sigma,\tau)=(3.0, 1.6)$}
 \label{fig:t1p6}
\end{figure}

\begin{figure}[H]
 \begin{subfigure}{0.5\columnwidth}
  \centering
  \includegraphics[width=\columnwidth]{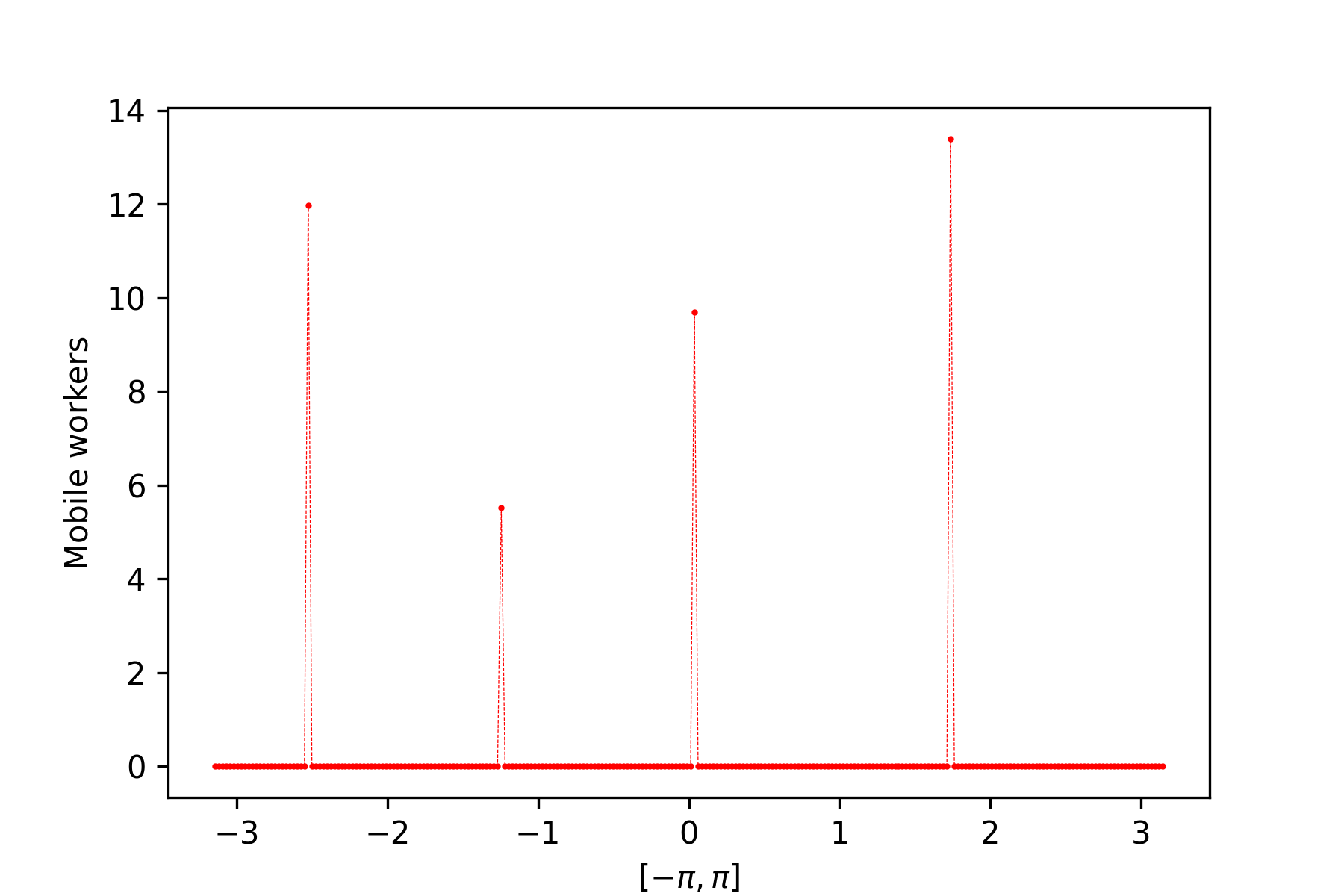}
  \caption{Mobile population}
 \end{subfigure}
 \begin{subfigure}{0.5\columnwidth}
  \centering
  \includegraphics[width=\columnwidth]{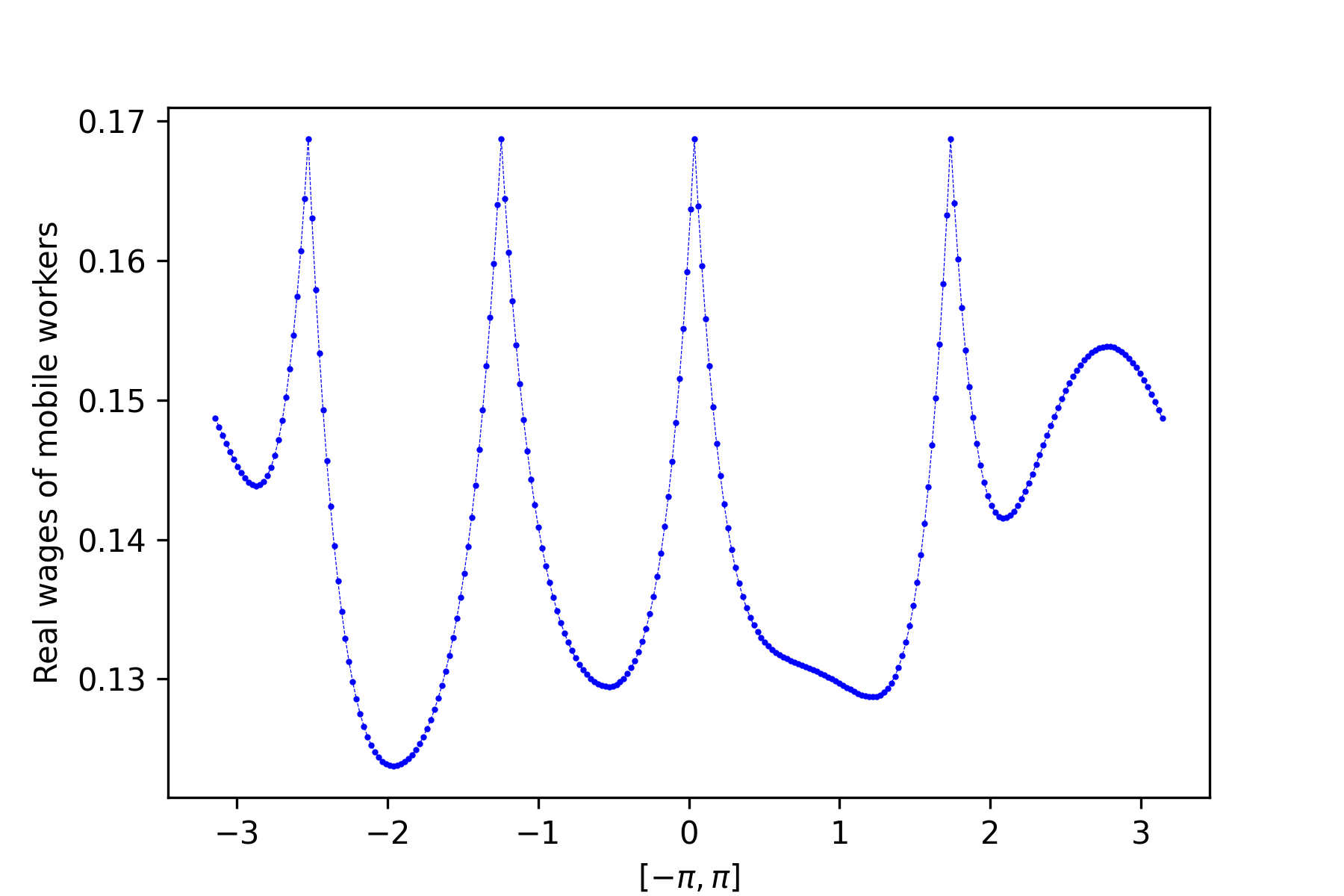}
  \caption{Real wage}
 \end{subfigure}\\
 \caption{Stationary solution for $(\sigma,\tau)=(3.0, 1.3)$}
 \label{fig:t1p3}
\end{figure}

\begin{figure}[H]
 \begin{subfigure}{0.5\columnwidth}
  \centering
  \includegraphics[width=\columnwidth]{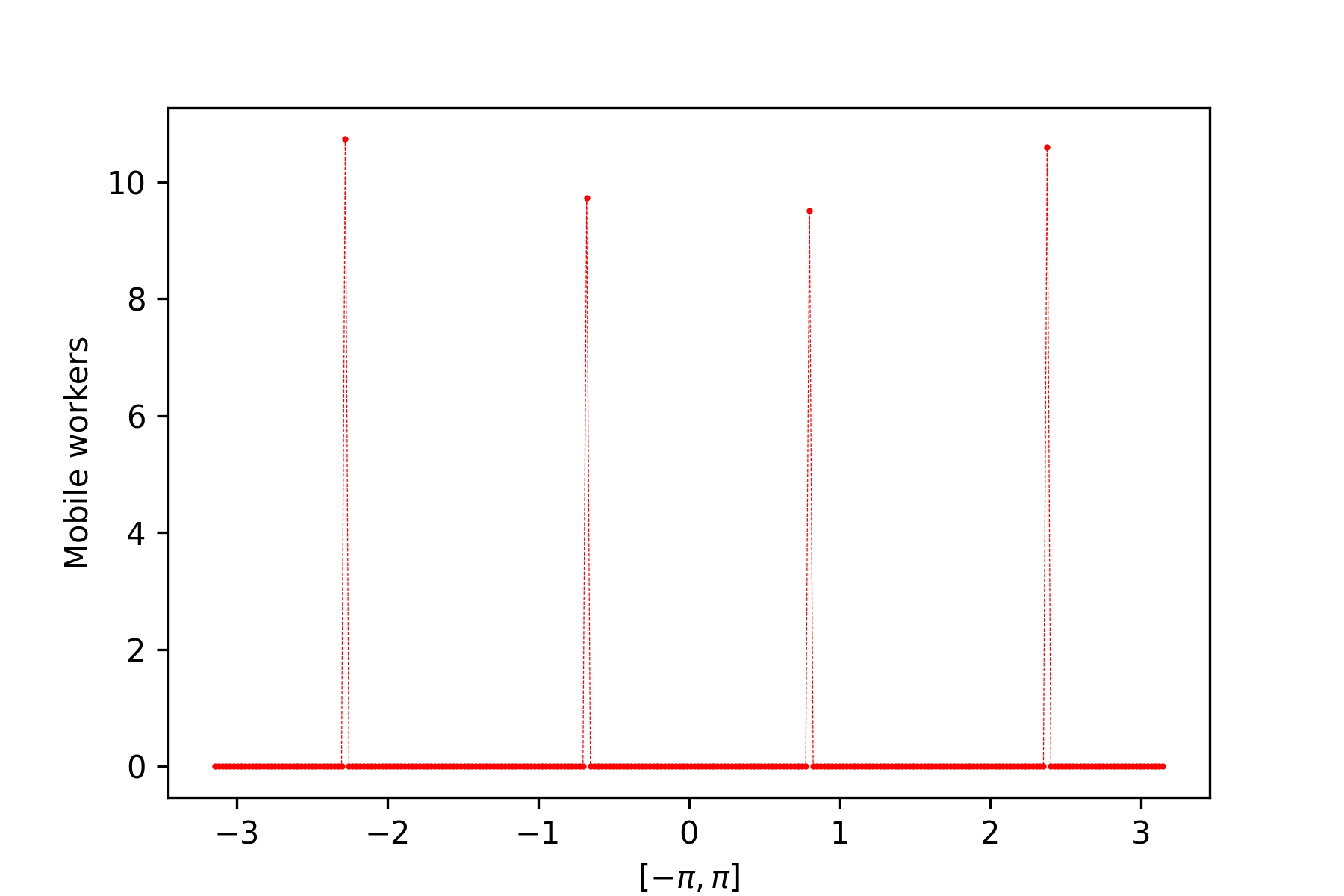}
  \caption{Mobile population}
 \end{subfigure}
 \begin{subfigure}{0.5\columnwidth}
  \centering
  \includegraphics[width=\columnwidth]{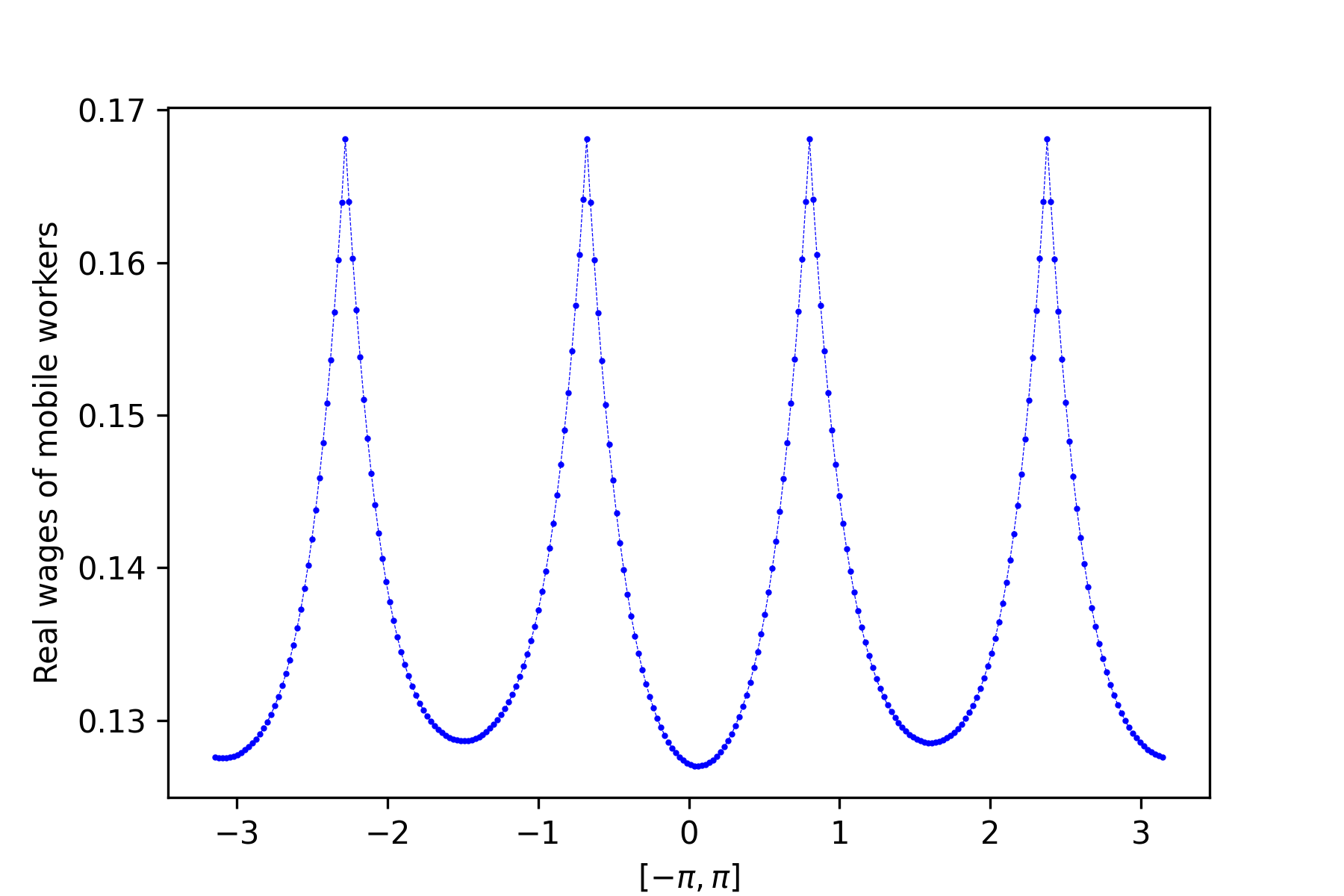}
  \caption{Real wage}
 \end{subfigure}\\
 \caption{Stationary solution for $(\sigma,\tau)=(3.0, 1.1)$}
 \label{fig:t1p1}
\end{figure}

\begin{figure}[H]
 \begin{subfigure}{0.5\columnwidth}
  \centering
  \includegraphics[width=\columnwidth]{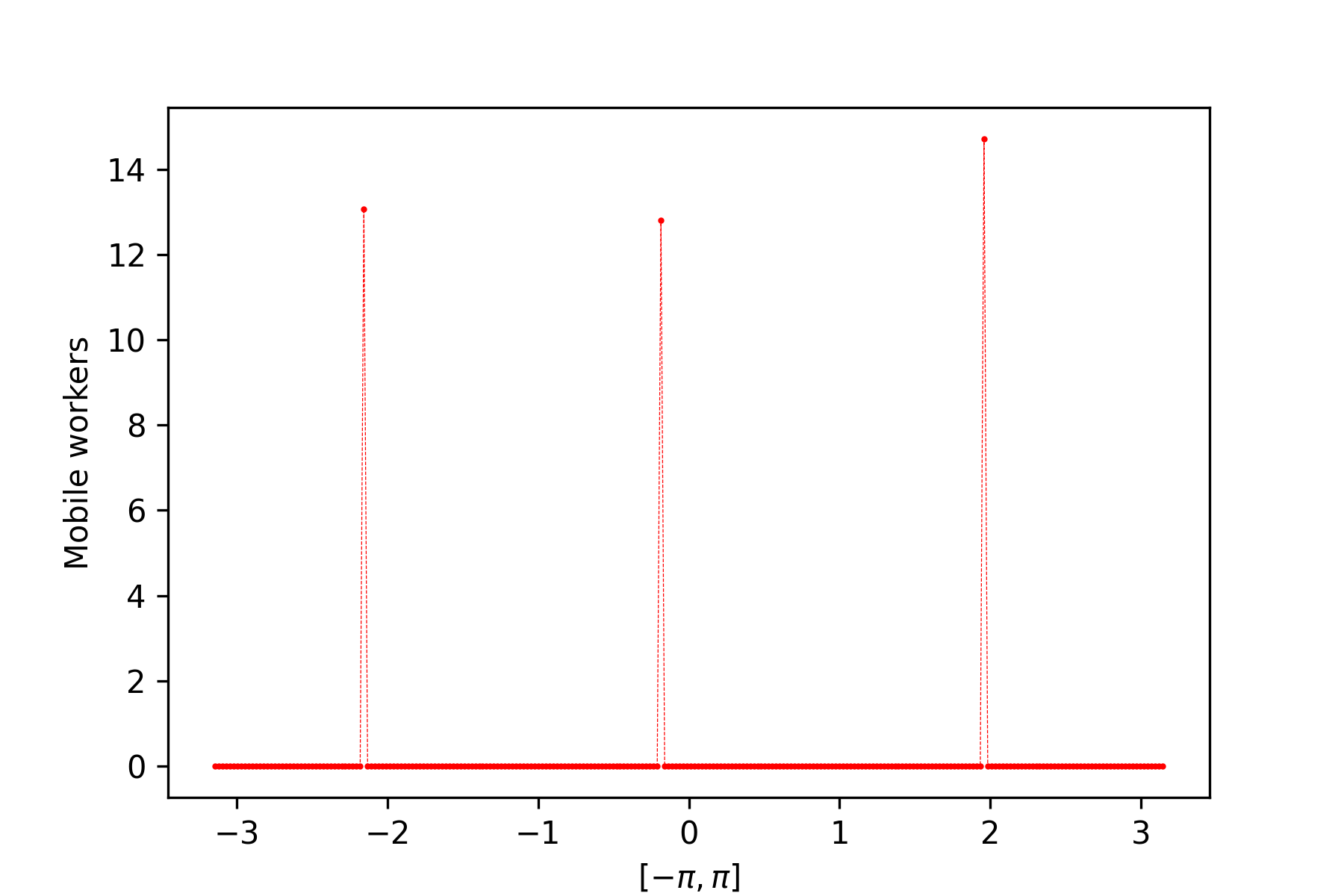}
  \caption{Mobile population}
 \end{subfigure}
 \begin{subfigure}{0.5\columnwidth}
  \centering
  \includegraphics[width=\columnwidth]{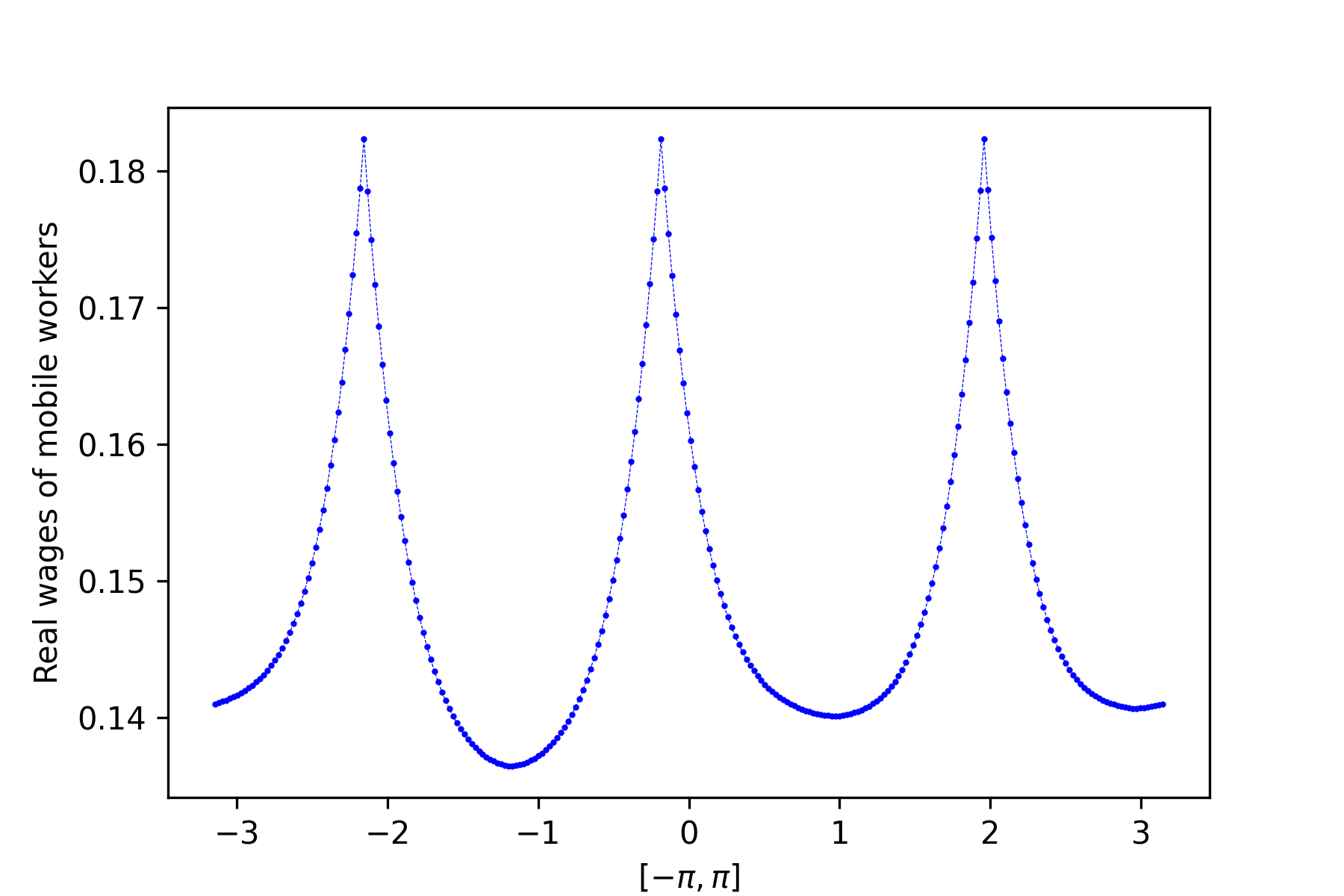}
  \caption{Real wage}
 \end{subfigure}\\
 \caption{Stationary solution for $(\sigma,\tau)=(3.0, 0.9)$}
 \label{fig:t0p9}
\end{figure}

\begin{figure}[H]
 \begin{subfigure}{0.5\columnwidth}
  \centering
  \includegraphics[width=\columnwidth]{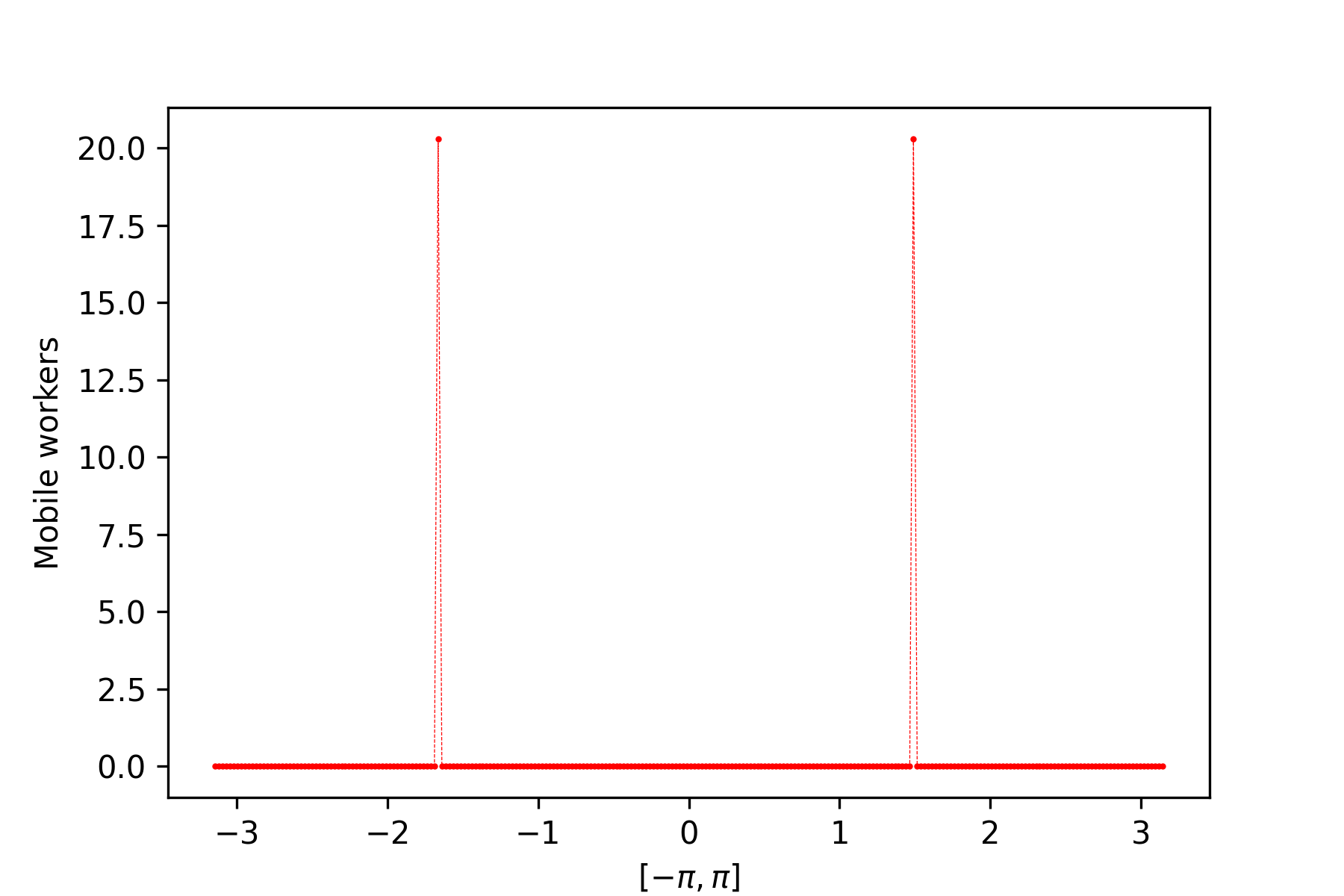}
  \caption{Mobile population}
 \end{subfigure}
 \begin{subfigure}{0.5\columnwidth}
  \centering
  \includegraphics[width=\columnwidth]{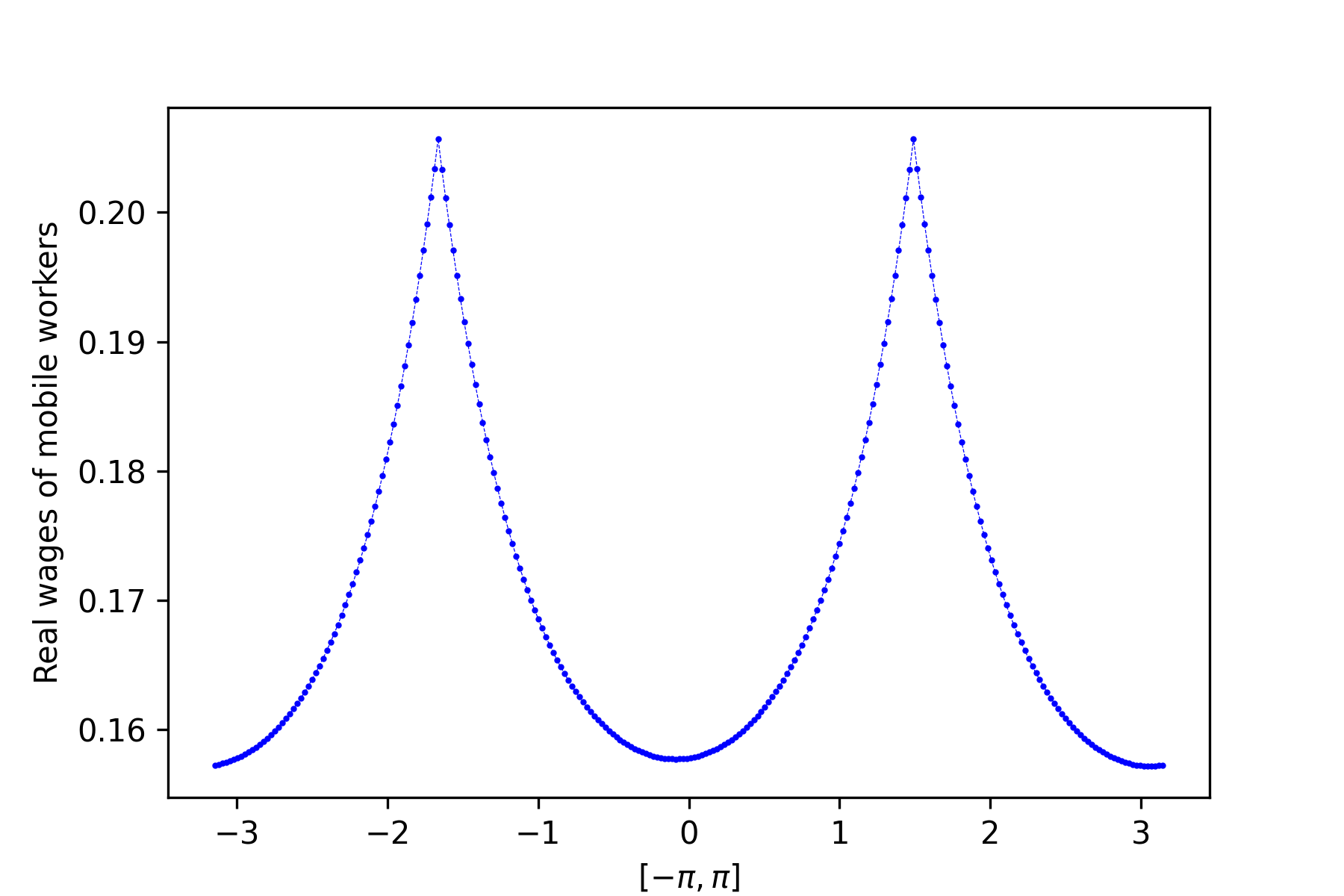}
  \caption{Real wage}
 \end{subfigure}\\
 \caption{Stationary solution for $(\sigma,\tau)=(3.0, 0.5)$}
 \label{fig:t0p5}
\end{figure}

\begin{figure}[H]
 \begin{subfigure}{0.5\columnwidth}
  \centering
  \includegraphics[width=\columnwidth]{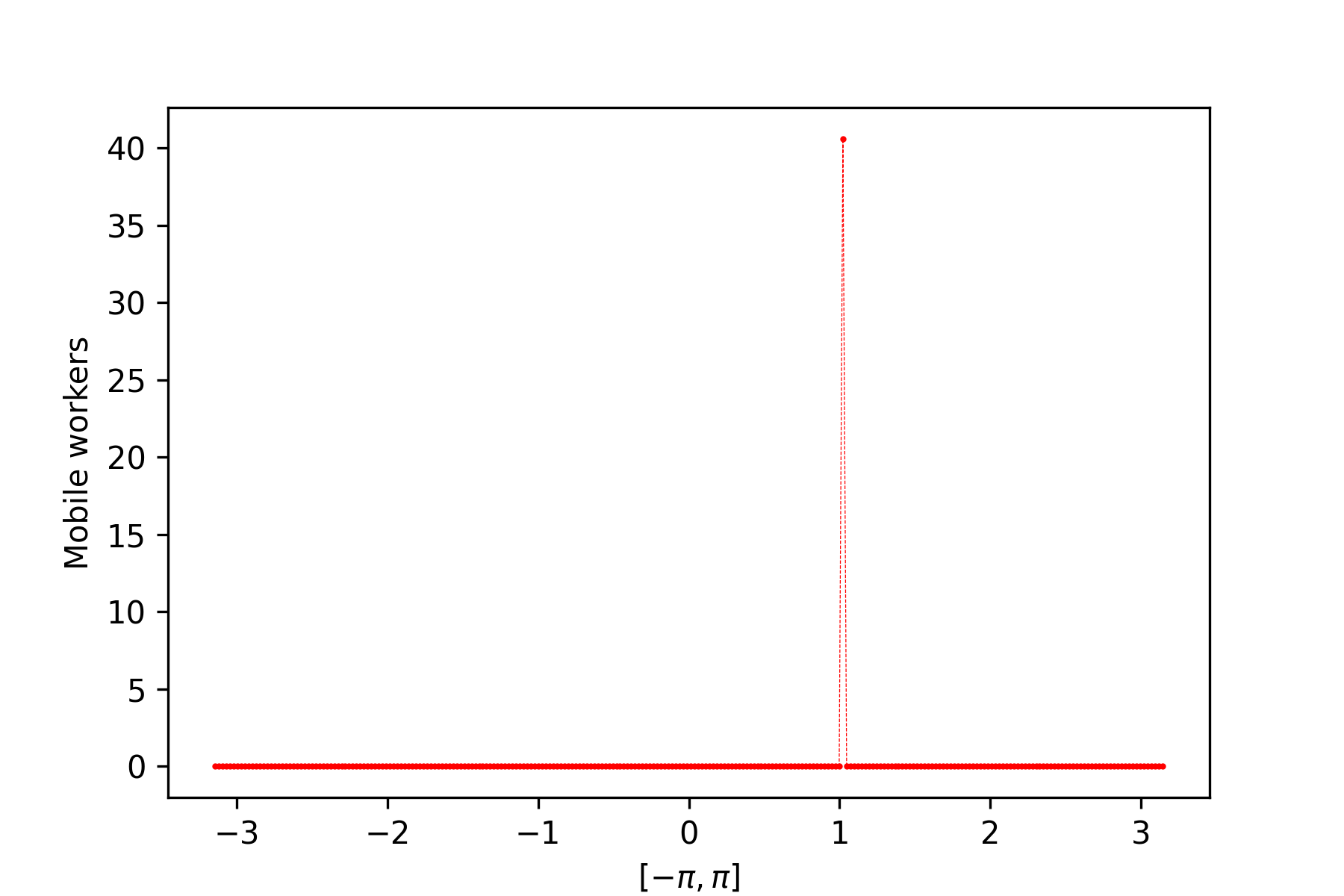}
  \caption{Mobile population}
 \end{subfigure}
 \begin{subfigure}{0.5\columnwidth}
  \centering
  \includegraphics[width=\columnwidth]{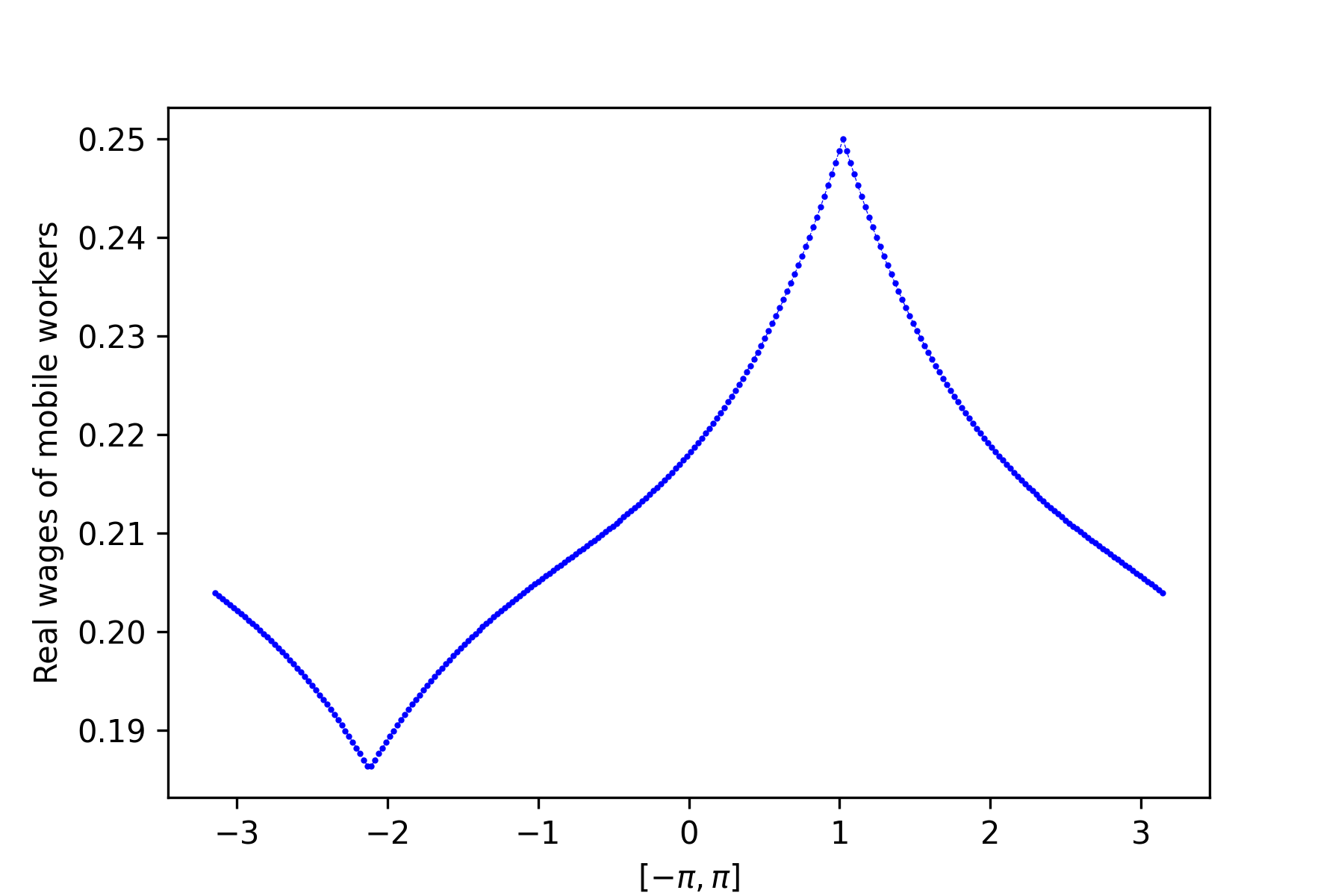}
  \caption{Real wage}
 \end{subfigure}\\
 \caption{Stationary solution for $(\sigma,\tau)=(3.0, 0.2)$}
 \label{fig:t0p2}
\end{figure}

Similarly, simulations confirm that agglomeration is facilitated by a stronger preference for variety.\footnote{This is also a basic property of standard NEG models. For discussions on the continuous racetrack economy, see, for example, \citet[pp.~92-93]{FujiKrugVenab} and \citet[Section 5]{Ohtake2023cont}.}  Figs.~\ref{fig:s2p7}-\ref{fig:s1p7} show the results when $\tau$ is fixed to $2.0$ and $\sigma$ is varied as $2.7$, $2.5$, $2.4$, $2.2$, $2.0$, and $1.7$. It is also observed that the number of spikes in the mobile population decreases to $6$, $5$, $4$, $3$, $2$, and $1$.

\begin{figure}[H]
 \begin{subfigure}{0.5\columnwidth}
  \centering
  \includegraphics[width=\columnwidth]{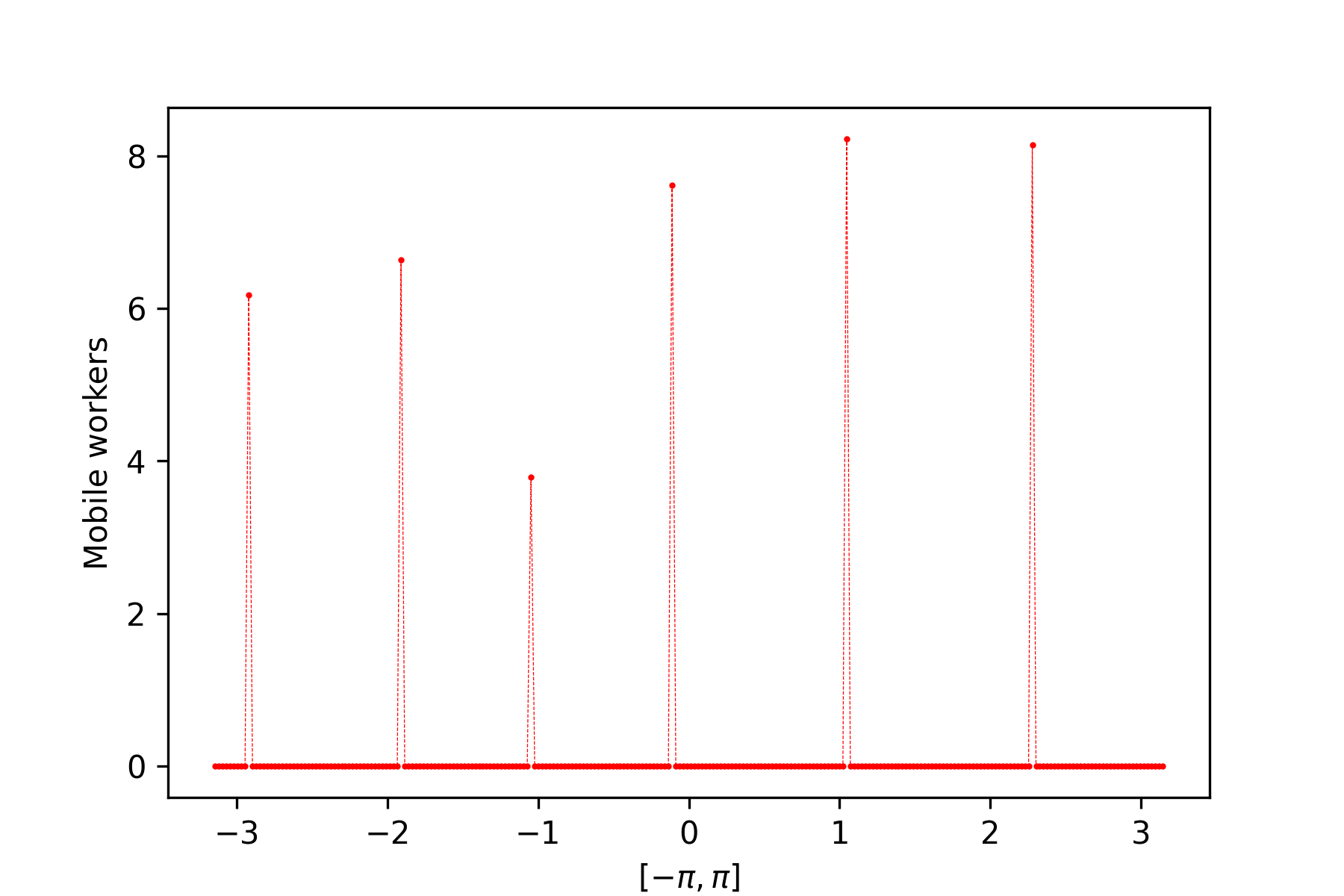}
  \caption{Mobile population}
 \end{subfigure}
 \begin{subfigure}{0.5\columnwidth}
  \centering
  \includegraphics[width=\columnwidth]{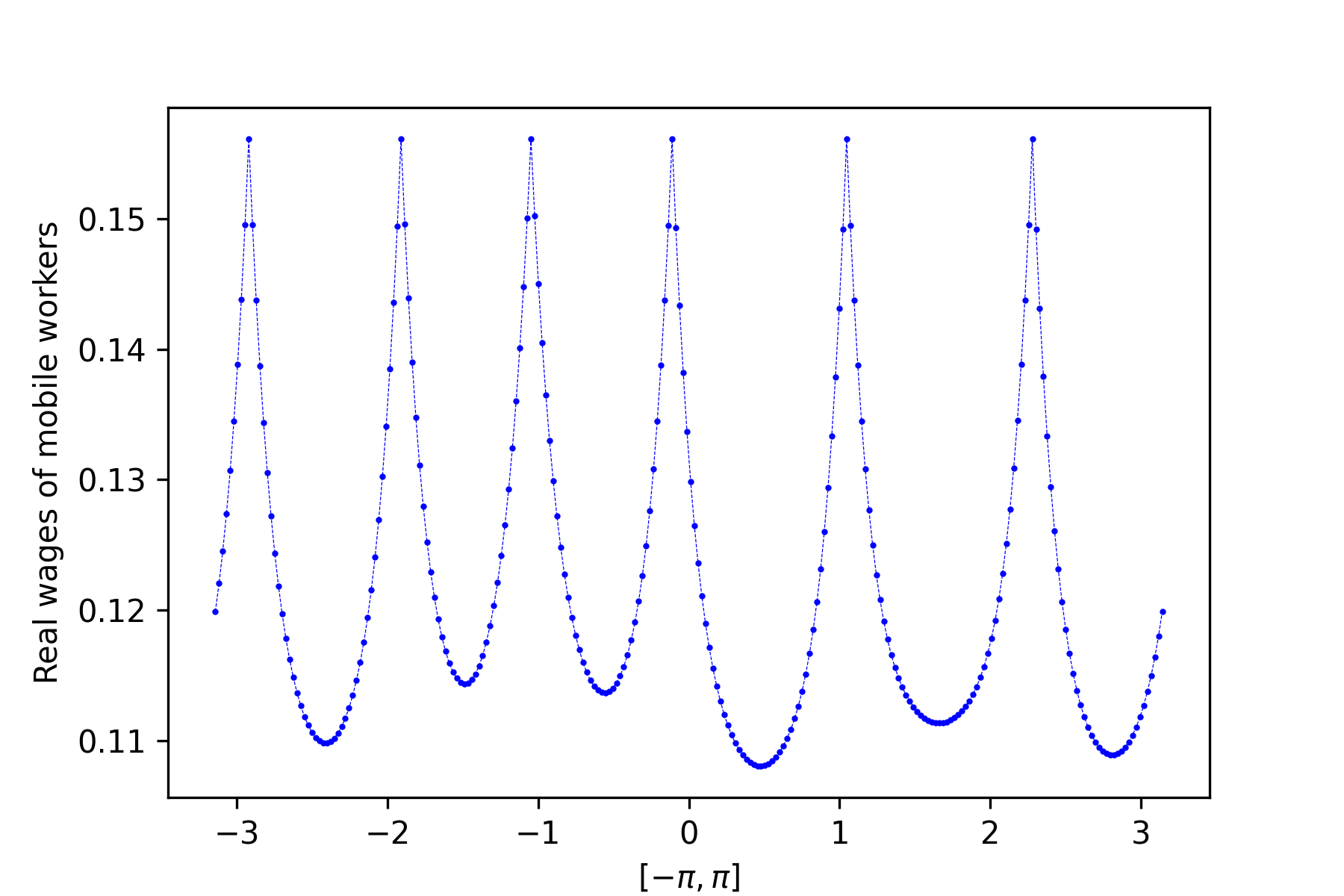}
  \caption{Real wage}
 \end{subfigure}\\
 \caption{Stationary solution for $(\sigma,\tau)=(2.7, 2.0)$}
 \label{fig:s2p7}
\end{figure}

\begin{figure}[H]
 \begin{subfigure}{0.5\columnwidth}
  \centering
  \includegraphics[width=\columnwidth]{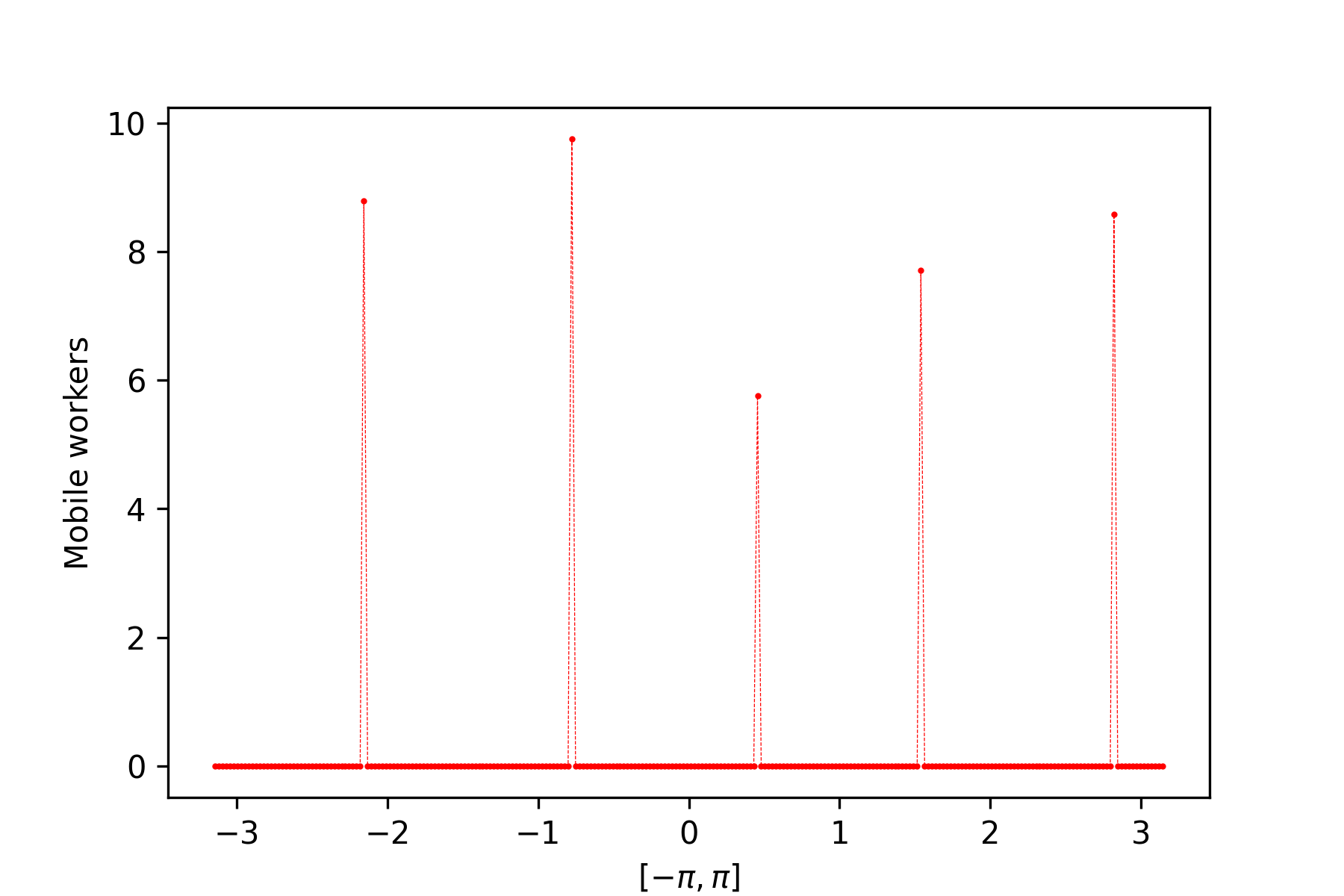}
  \caption{Mobile population}
 \end{subfigure}
 \begin{subfigure}{0.5\columnwidth}
  \centering
  \includegraphics[width=\columnwidth]{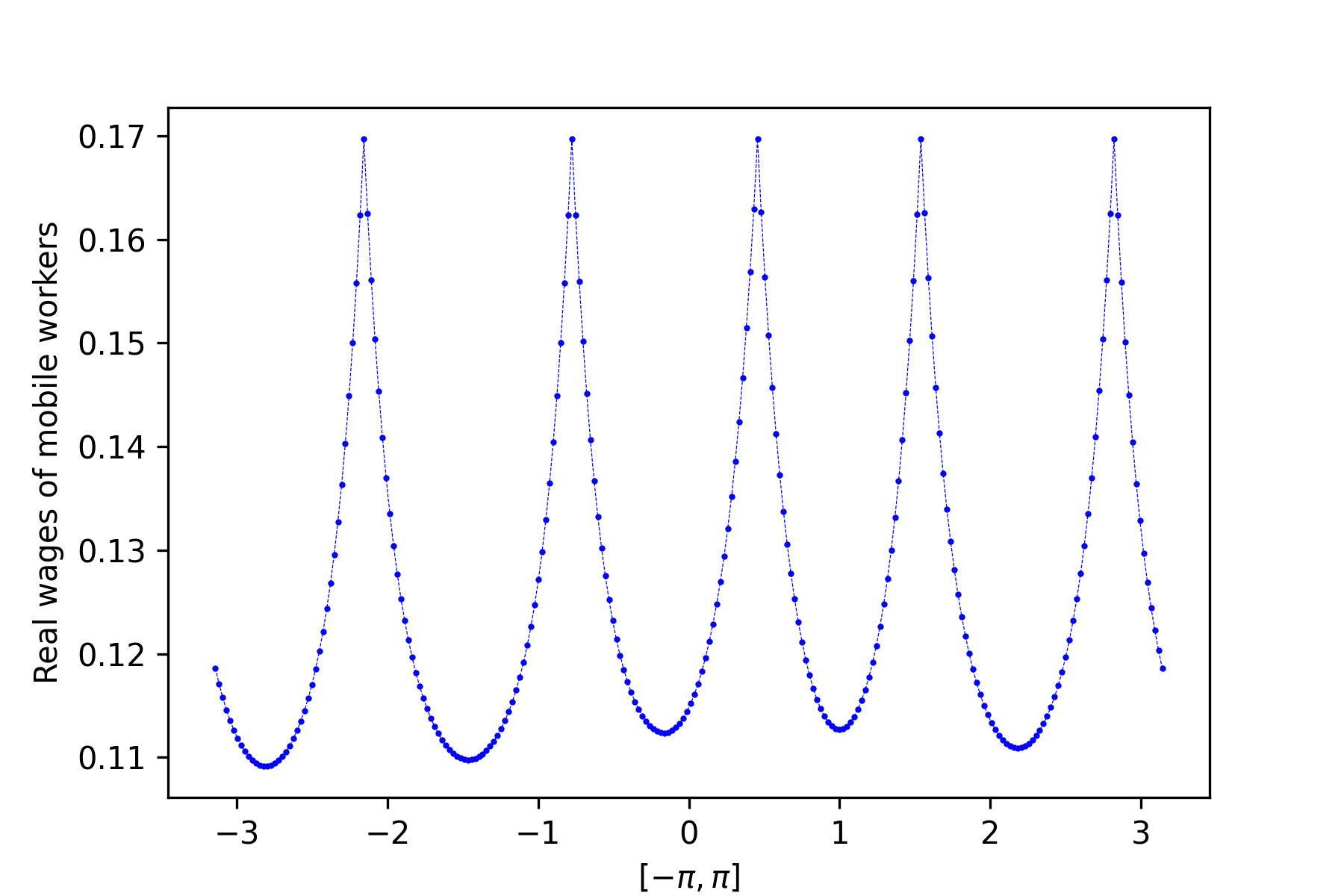}
  \caption{Real wage}
 \end{subfigure}\\
 \caption{Stationary solution for $(\sigma,\tau)=(2.5, 2.0)$}
 \label{fig:s2p5}
\end{figure}

\begin{figure}[H]
 \begin{subfigure}{0.5\columnwidth}
  \centering
  \includegraphics[width=\columnwidth]{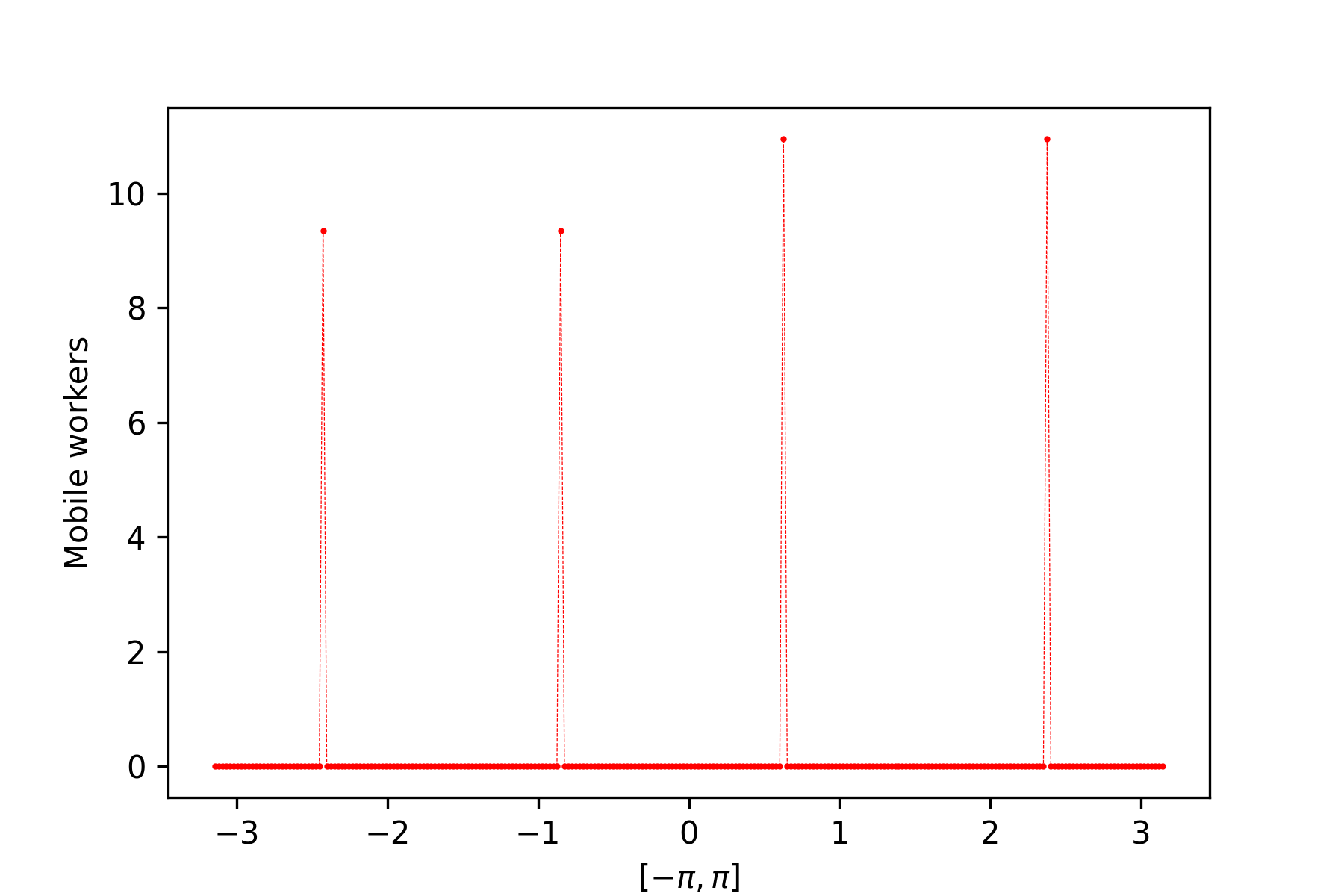}
  \caption{Mobile population}
 \end{subfigure}
 \begin{subfigure}{0.5\columnwidth}
  \centering
  \includegraphics[width=\columnwidth]{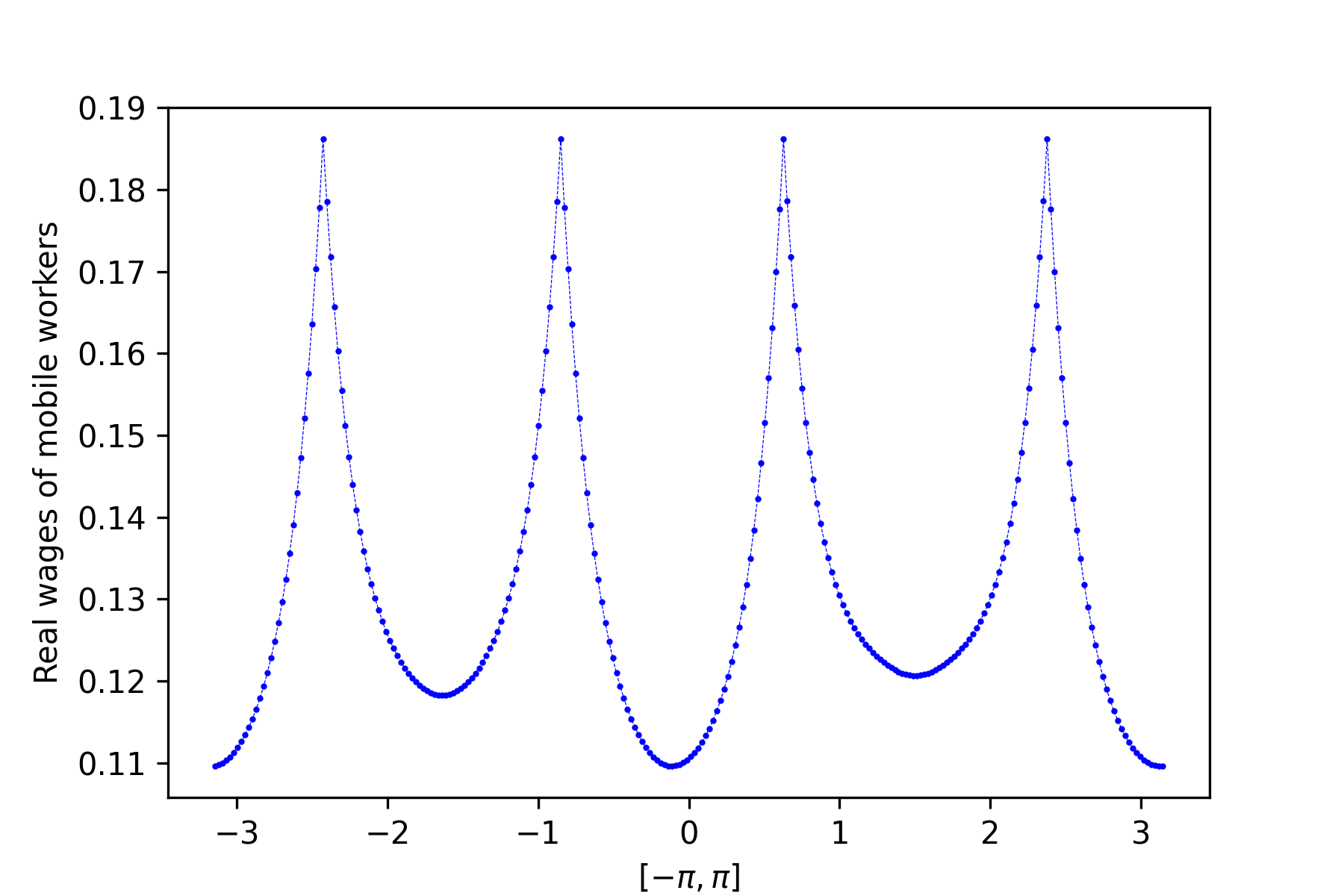}
  \caption{Real wage}
 \end{subfigure}\\
 \caption{Stationary solution for $(\sigma,\tau)=(2.4, 2.0)$}
 \label{fig:s2p4}
\end{figure}

\begin{figure}[H]
 \begin{subfigure}{0.5\columnwidth}
  \centering
  \includegraphics[width=\columnwidth]{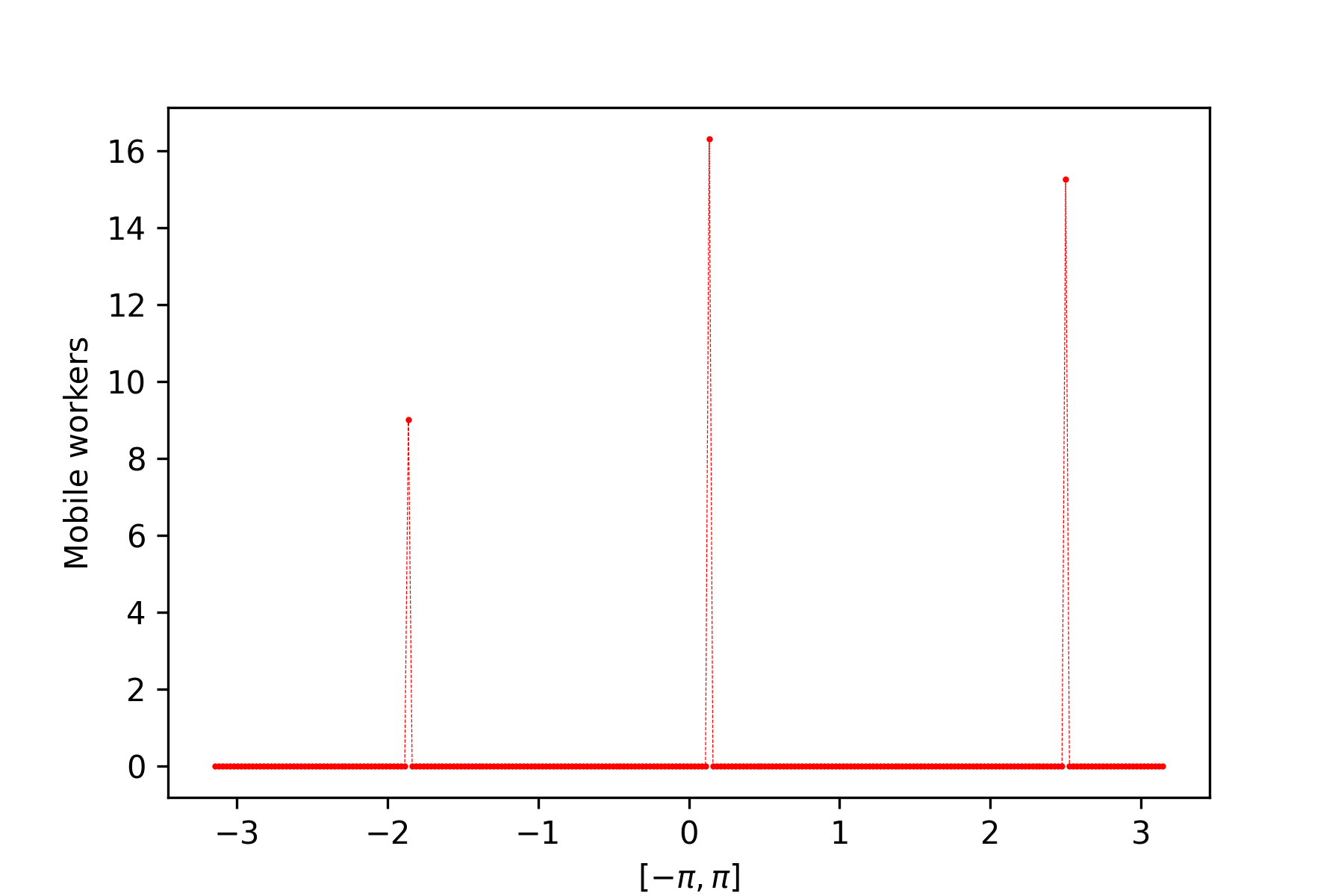}
  \caption{Mobile population}
 \end{subfigure}
 \begin{subfigure}{0.5\columnwidth}
  \centering
  \includegraphics[width=\columnwidth]{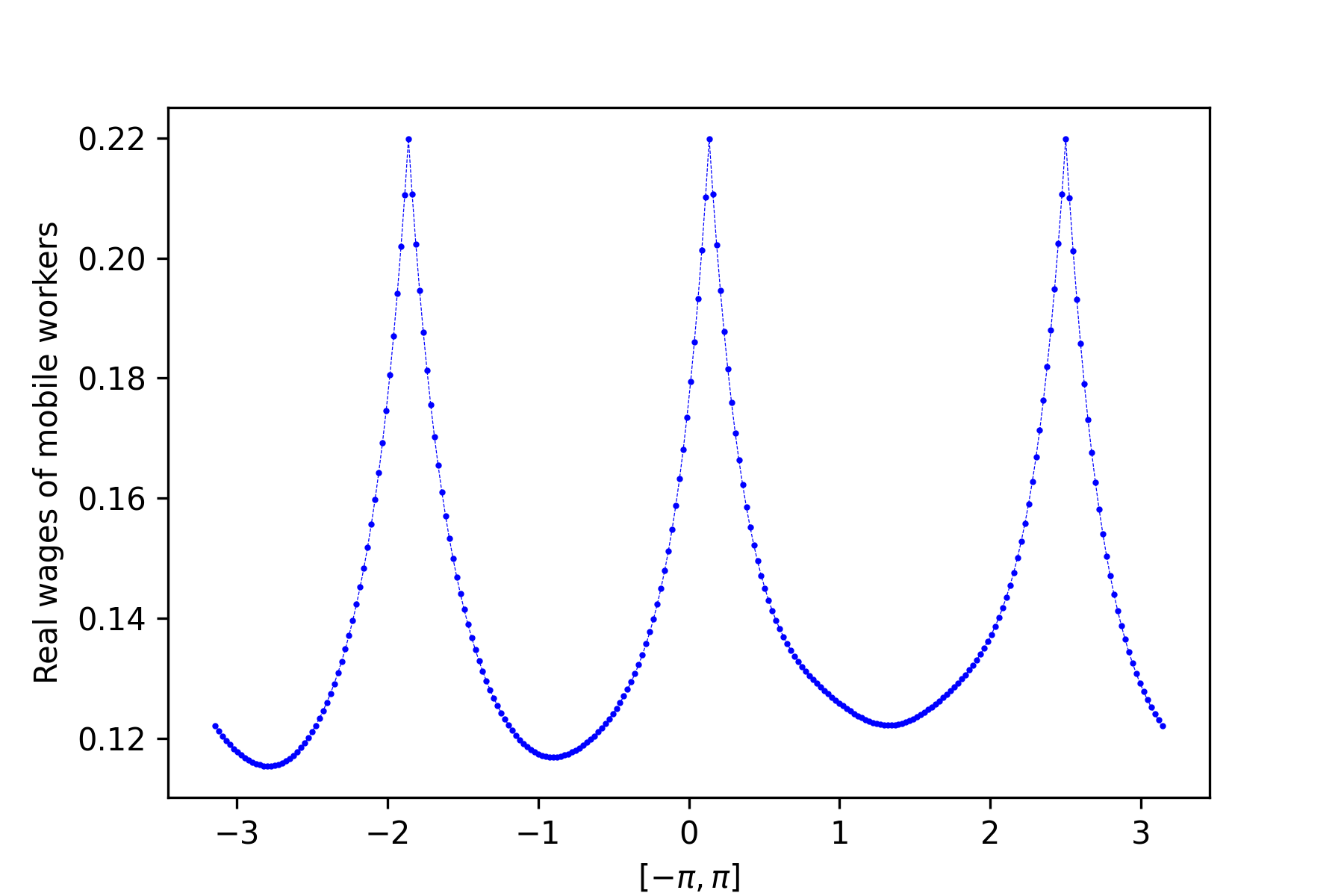}
  \caption{Real wage}
 \end{subfigure}\\
 \caption{Stationary solution for $(\sigma,\tau)=(2.2, 2.0)$}
 \label{fig:s2p2}
\end{figure}

\begin{figure}[H]
 \begin{subfigure}{0.5\columnwidth}
  \centering
  \includegraphics[width=\columnwidth]{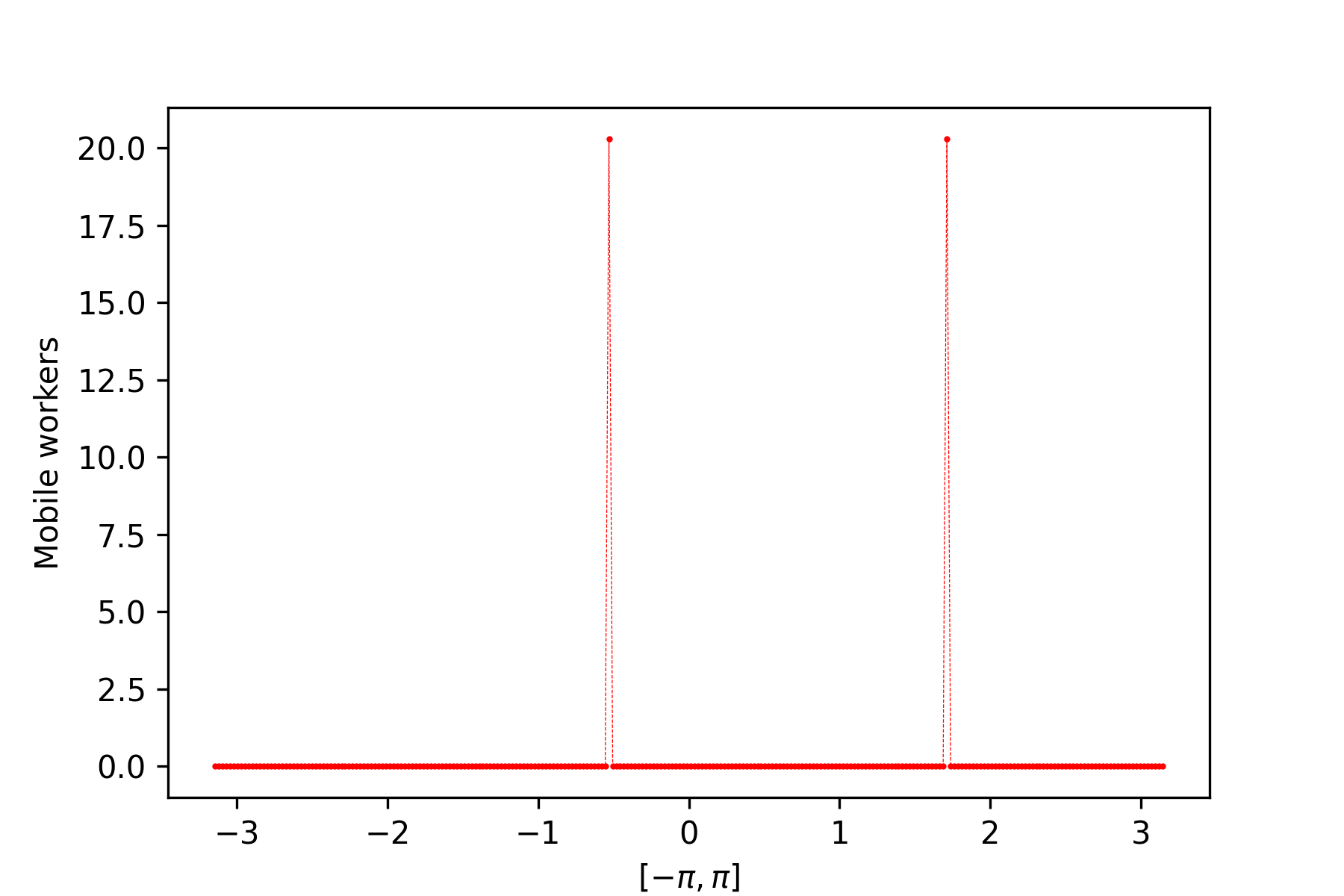}
  \caption{Mobile population}
 \end{subfigure}
 \begin{subfigure}{0.5\columnwidth}
  \centering
  \includegraphics[width=\columnwidth]{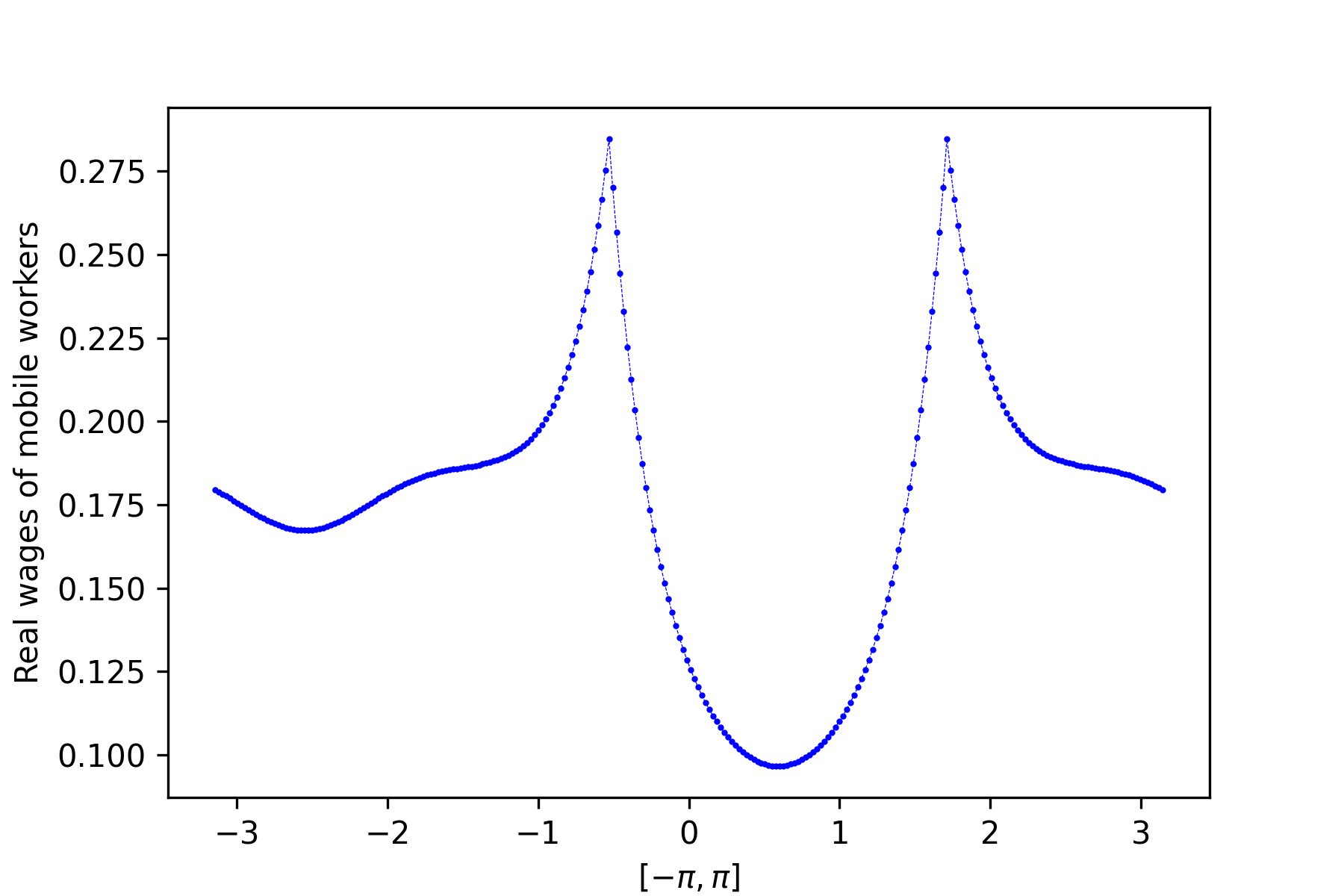}
  \caption{Real wage}
 \end{subfigure}\\
 \caption{Stationary solution for $(\sigma,\tau)=(2.0, 2.0)$}
 \label{fig:s2p0}
\end{figure}

\begin{figure}[H]
 \begin{subfigure}{0.5\columnwidth}
  \centering
  \includegraphics[width=\columnwidth]{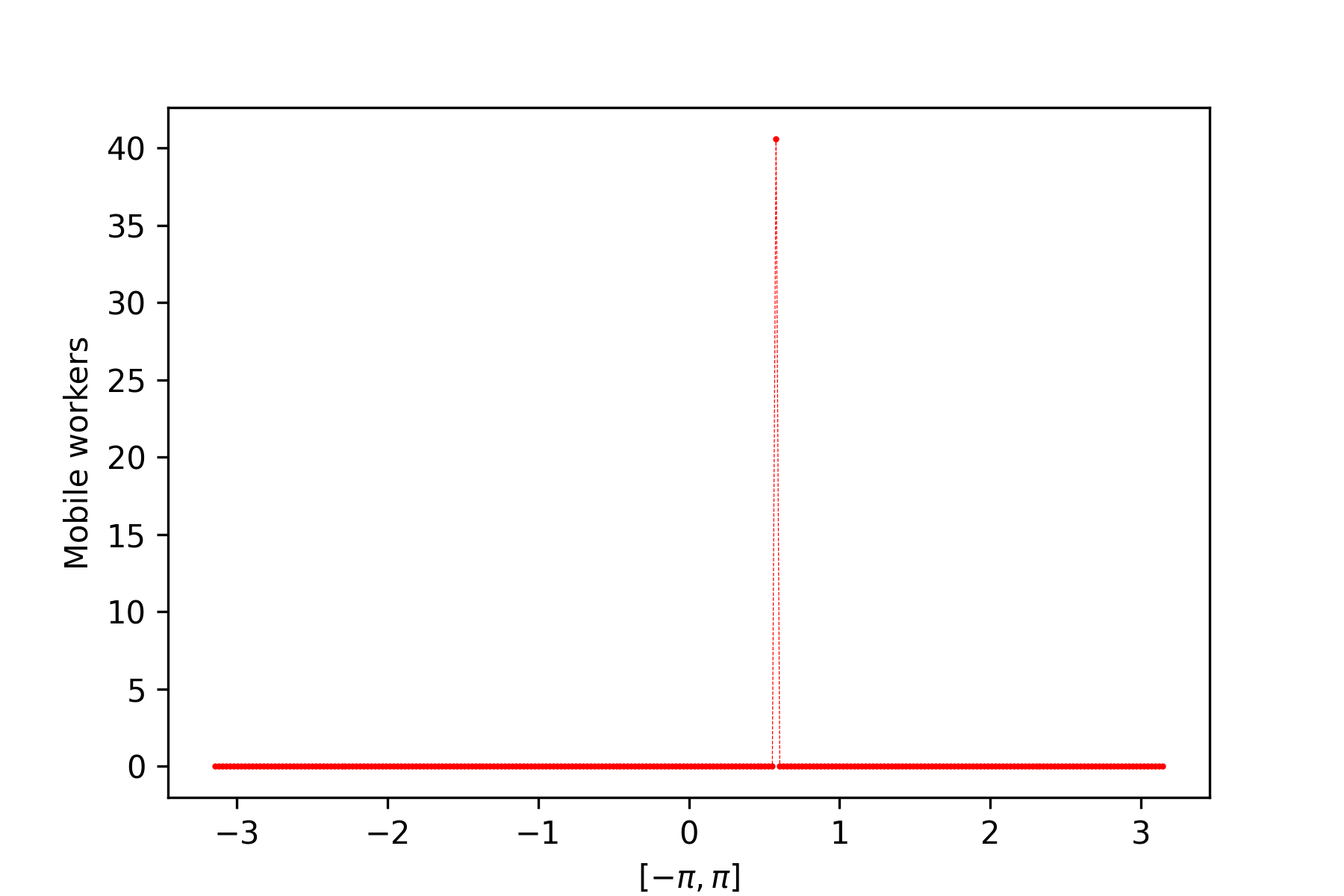}
  \caption{Mobile population}
 \end{subfigure}
 \begin{subfigure}{0.5\columnwidth}
  \centering
  \includegraphics[width=\columnwidth]{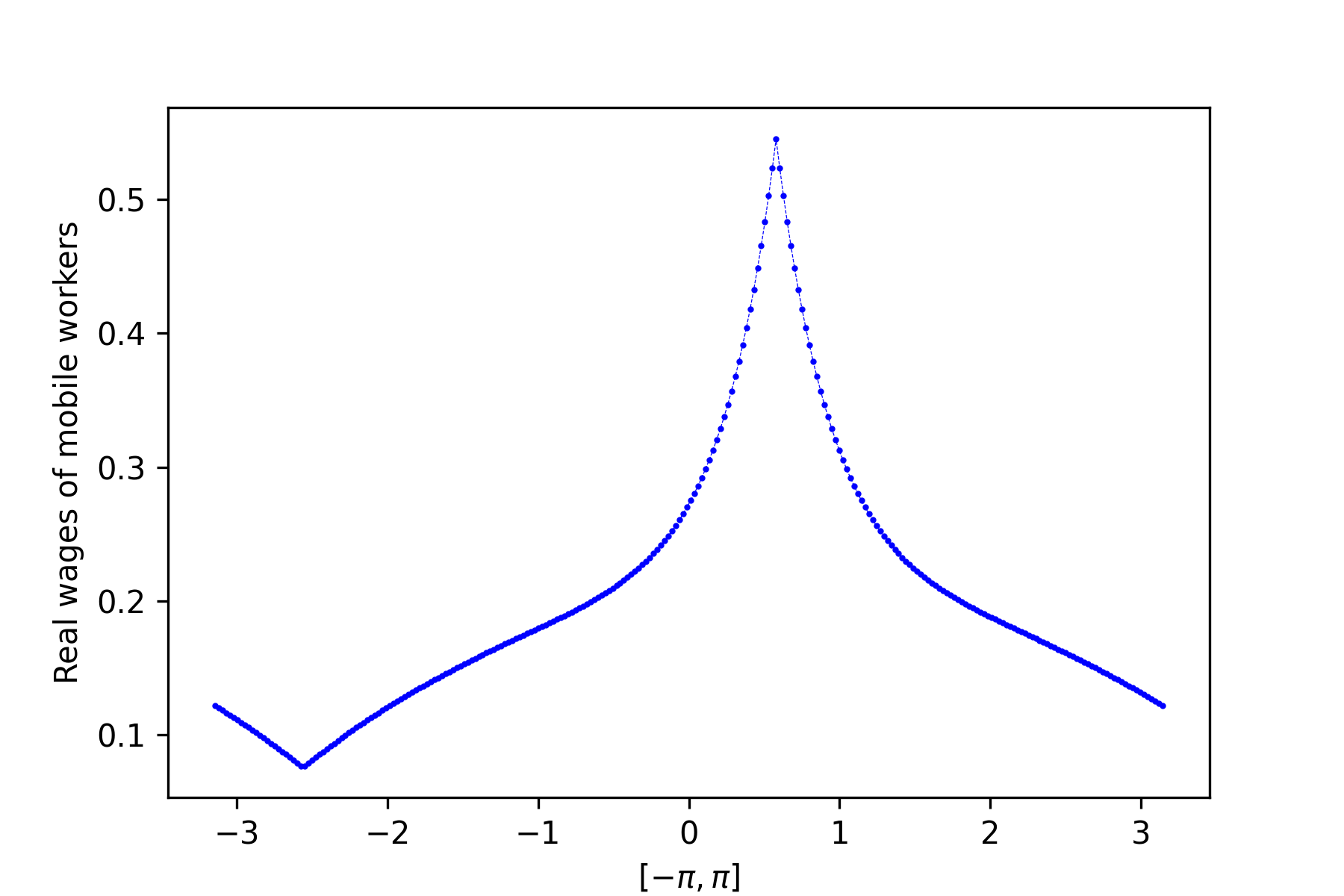}
  \caption{Real wage}
 \end{subfigure}\\
 \caption{Stationary solution for $(\sigma,\tau)=(1.7, 2.0)$}
 \label{fig:s1p7}
\end{figure}

\section{Conclusion and discussion}\label{sec:cd}
In this paper, we have considered the FE model in continuous space. We have formulated it as an initial value problem for an ordinary differential equation in a Banach space, and constructed a unique global solution. The stability of the homogeneous stationary solution have been investigated using Fourier analysis, and it has been found that the absolute value of the frequency of unstable modes decreases as transport costs decrease and as the preference for variety strengthens. The aymptotic behavior of solutions that started near the homogeneous stationary solution has been simulated numerically, and it has been observed that the numerical solution asymptotically approaches a distribution with several numbers of spikes. The number of spikes decreases as transport costs decrease and preference for variety strengthens. 

There are two significant points to this study. First, by clarifying the unique existence and behavior of the global solution over time, this study has provided a mathematical foundation for existing research focusing on stationary solutions. Second, this study has confirmed that the behavior of solutions to the FE model in continuous space is qualitatively similar to that of many basic NEG models such as original CP model and quasi-log-linear utility model. One might consider this obvious, but recalling that there is a model in which the behavior of solutions changes qualitatively in a multi-regional setting,\footnote{\citet{Ohtake2022agg} showed that in a model where a quasi-linear utility having a quadratic subutility, the homogeneous stationary solution is always unstable when the number of regions is a multiple of four.} it is by no means self-evident, and theoretical confirmation is important.

It would be interesting to explore a more applicable sufficient condition for the unique existence of a global solution. In constructing the global solution, the fixed point theorem has been used, but the sufficient condition assumed there is strict compared to the parameter range in which the simulation actually works. Since the instantaneous equilibrium equation is a Fredholm-type integral equation, it may be possible to utilize its characteristics to make the sufficient condition more applicable.

\section{Appendix}\label{sec:appendix}

\subsection{Proof for Theorem \ref{th:fp}}\label{pr:thfp}
We define an operator $E:C(\Omega)\to C(\Omega)$ by the right-hand side of \eqref{fp}
\begin{equation}\label{mapE}
\begin{aligned}
E[W](x)&:=\frac{\mu}{\sigma F}\int_\Omega(W(y)\left|\lambda(y)\right|+\phi(y))\\
&\hspace{20mm}\times G[\lambda](y)^{\sigma-1}T(x,y)^{1-\sigma}dy,\hspace{3mm}x\in\Omega.
\end{aligned}
\end{equation}
First, we must verify that $E$ is a mapping from the closed subset $C_{0+}(\Omega)$ to itself. It is immediate that for any $W\in C_{0+}(\Omega)$, the integrand of \eqref{mapE} is nonnegative almost everywhere $y\in\Omega$ and thus $E[W]\in C_{0+}(\Omega)$.

Second, we show that $E$ is a contraction on $C_{0+}(\Omega)$. Let $\lambda\in L^1(\Omega)$ satisfying \eqref{Lam1lambdaLam2} be fixed. For $W_1\in C_{0+}(\Omega)$ and $W_2\in C_{0+}(\Omega)$, by \eqref{Tbound}, \eqref{opG}, and \eqref{Lam1lambdaLam2}, we see that
\[
\begin{aligned}
\left|E[W_1](x)-E[W_2](x)\right| &= \left|\frac{\mu}{\sigma}\int_\Omega\frac{\left(W_1(y)-W_2(y)\right)\left|\lambda(y)\right|T(x,y)^{1-\sigma}}{\int_\Omega\left|\lambda(z)\right|T(y,z)^{1-\sigma}dz}dy\right|\\
&\leq \frac{\mu}{\sigma}\int_\Omega\frac{\left|\lambda(y)\right|T(x,y)^{1-\sigma}}{\int_\Omega\left|\lambda(z)\right|T(y,z)^{1-\sigma}dz}dy\left\|W_1-W_2\right\|_{\infty}\\
&\leq \frac{\mu}{\sigma}\left(\frac{T_{\max}}{T_{\min}}\right)^{\sigma-1}\frac{\Lambda_2}{\Lambda_1}\left\|W_1-W_2\right\|_{\infty}.
\end{aligned}
\]
This immediately yields 
\[
\left\|E[W_1]-E[W_2]\right\|_{\infty}\leq \frac{\mu}{\sigma}\left(\frac{T_{\max}}{T_{\min}}\right)^{\sigma-1}\frac{\Lambda_2}{\Lambda_1}\left\|W_1-W_2\right\|_{\infty}.
\]
Therefore, if \eqref{sufficientcondition} holds, then $E$ becomes a contraction. The Banach fixed point theorem completes the proof.
\qed

\subsection{Proof for Lemma \ref{lem:bounded}}\label{subsec:prooflembounded}
First, $\lambda\in Q_{\lambda_0}$ is shown to be bounded. For any $\lambda\in Q_{\lambda_0}$ 
\begin{equation}\label{ll0b}
\left\|\lambda-\lambda_0\right\|_{L^1}\leq b.
\end{equation}
holds. From the basic properties of normed spaces, 
\begin{equation}\label{absnn}
\left|\left\|\lambda\right\|_{L^1}-\left\|\lambda_0\right\|_{L^1}\right| \leq \left\|\lambda-\lambda_0\right\|_{L^1}
\end{equation}
holds.\footnote{See \citet[p.~769~(23c)]{ZeidlerFixedPoint}.} From \eqref{ll0b} and \eqref{absnn}, it is verified that 
\begin{equation}\label{absb}
\left|\left\|\lambda\right\|_{L^1}-\left\|\lambda_0\right\|_{L^1}\right| \leq b.
\end{equation}
Since $\left\|\lambda_0\right\|_{L^1}=\Lambda$ and $\Lambda-b>0$, the inequality \eqref{absb} immediately yields
\begin{equation}\label{lambdabound}
0<\Lambda-b \leq \left\|\lambda\right\|_{L^1} \leq \Lambda+b.
\end{equation}

Second, the range of the operator $G$ is shown to be bounded. By \eqref{Tbound}, \eqref{opG}, and \eqref{lambdabound}, we see that
\begin{align}\label{Glowerupperbounds}
0<F^{\frac{1}{\sigma-1}}T_{\min}\left(\Lambda+b\right)^{\frac{1}{1-\sigma}}\leq 
\left|G[\lambda](x)\right| 
\leq F^{\frac{1}{\sigma-1}}T_{\max}\left(\Lambda-b\right)^{\frac{1}{1-\sigma}}.
\end{align}
Thus, we obtain 
\begin{equation}\label{Gbound}
0<G_{L}\leq \left\|G[\lambda]\right\|_\infty\leq G_{U},
\end{equation}
where
\begin{align}
&G_{L} := F^{\frac{1}{\sigma-1}}T_{\min}\left(\Lambda+b\right)^{\frac{1}{1-\sigma}},\label{Glower}\\
&G_{U} := F^{\frac{1}{\sigma-1}}T_{\max}\left(\Lambda-b\right)^{\frac{1}{1-\sigma}}\label{Gupper}
\end{align}

Third, the range of the operator $w$ is shown to be bounded.  By \eqref{Tbound}, \eqref{totalimpop}, \eqref{fp}, \eqref{lambdabound}, \eqref{Gbound}, and \eqref{Gupper}, we see that
\[
\begin{aligned}
\left|w[\lambda](x)\right|
&= \left|\frac{\mu}{\sigma F}\int_\Omega\left(w[\lambda](y)\left|\lambda(y)\right|+\phi(y)\right)G[\lambda](y)^{\sigma-1}T(x,y)^{1-\sigma}dy\right| \\
&\leq \frac{\mu}{\sigma F}(\Lambda+b)G_{U}^{\sigma-1}T_{\min}^{1-\sigma}\left\|w[\lambda]\right\|_{\infty}
+ \frac{\mu}{\sigma F}\Phi G_{U}^{\sigma-1}T_{\min}^{1-\sigma}\\
&\leq \frac{\mu}{\sigma}\left(\frac{T_{\max}}{T_{\min}}\right)^{\sigma-1}\frac{\Lambda+b}{\Lambda-b}\left\|w[\lambda]\right\|_\infty
+ \frac{\mu}{\sigma}\left(\frac{T_{\max}}{T_{\min}}\right)^{\sigma-1}\frac{\Phi}{\Lambda-b}.
\end{aligned}
\]
This yields
\[
\left\{1-\frac{\mu}{\sigma}\left(\frac{T_{\max}}{T_{\min}}\right)^{\sigma-1}\frac{\Lambda+b}{\Lambda-b}\right\}\left\|w[\lambda]\right\|_\infty
\leq \frac{\mu}{\sigma}\left(\frac{T_{\max}}{T_{\min}}\right)^{\sigma-1}\frac{\Phi}{\Lambda-b}.
\]
Therefore, if \eqref{sufficientconditionLb} holds, then
\begin{equation}\label{wbound}
\left\|w[\lambda]\right\|_\infty
\leq  \frac{\frac{\mu}{\sigma}\left(\frac{T_{\max}}{T_{\min}}\right)^{\sigma-1}\frac{\Phi}{\Lambda-b}}{1-\frac{\mu}{\sigma}\left(\frac{T_{\max}}{T_{\min}}\right)^{\sigma-1}\frac{\Lambda+b}{\Lambda-b}} =: w_{U}.
\end{equation}

Fourth, the range of the operator $\omega$ is shown to be bounded. By \eqref{opomega}, \eqref{Gbound}, and \eqref{wbound}, we have
\begin{equation}\label{omegabound}
\left\|\omega[\lambda]\right\|_\infty
\leq w_{U}G_{L}^{-\mu} =: \omega_{U}.
\end{equation}

Finaliy, we can show that the range of the operator $\Psi$ is bounded. By \eqref{opPsi}, \eqref{lambdabound}, and \eqref{omegabound}, we have
\begin{align}
\left\|\Psi[\lambda]\right\|_{L^1}&=
\int_\Omega\left|\Psi[\lambda](x)\right|dx\\
&\leq v\int_\Omega\left|\omega[\lambda](x)\lambda(x)\right|dx\\
&\hspace{5mm}+\frac{v}{\Lambda}\left|\int_\Omega\omega[\lambda](y)\lambda(y)dy\right|\int_\Omega\left|\lambda(x)\right|dx\\
&\leq v\omega_{U}(\Lambda+b)\left(1+\frac{\Lambda+b}{\Lambda}\right)=:K.
\end{align}
Note that the constant $K$ does not depend on $\lambda_0$.
\qed

\subsection{Proof for Lemma \ref{lem:lip}}\label{subsec:prooflemLip}
Let us introduce some concepts from functional analysis for the proof. Let an operator $\cl{F}:\mathfrak{X}\to \mathfrak{Y}$ be Fr\'{e}chet differentiable on a nonempty convex open set $D\subset\mathfrak{X}$, where $\mathfrak{X}$ and $\mathfrak{Y}$ are Banach spaces with their norms $\left\|\cdot\right\|_{\mathfrak X}$ and $\left\|\cdot\right\|_{\mathfrak Y}$, respectively.\footnote{In this proof, we specifically consider $D=C_+(\Omega)$ defined by the set of positive functions of $C(\Omega)$. In fact, since the lower bound of $G$ is positive (see \eqref{Glowerupperbounds}), it can be shown that the lower bounds of $w$ and $\omega$ are also positive, i.e., $w,\omega\in C_+(\Omega)$.} The Fr\'{e}chet derivative $\cl{F}^\prime(x)$ of $\cl{F}$ is defined by a linear operator $\mathfrak X$ to $\mathfrak Y$ satisfying
\begin{equation}
\cl{F}(x+h)-\cl{F}(x) = \cl{F}^\prime(x)(h) + o(\left\|h\right\|_{\mathfrak{X}}),
\end{equation}
where $o$ is the little-o notation. The operator norm of a linear operator $A:\mathfrak{X}\to \mathfrak{Y}$ is defined by
\begin{equation}\label{}
\left\|A\right\|_{\rm op}:=\sup_{\left\|h\right\|_{\mathfrak{X}}=1}\left\|A(h)\right\|_{\mathfrak{Y}}.
\end{equation}
In the following, we often use the well known fact that
\begin{equation}\label{LipFrechet}
\left\|\cl{F}(x_1)-\cl{F}(x_2)\right\|_{\mathfrak Y}\leq \sup_{x\in D}\left\|\cl{F}^\prime(x)\right\|_{\rm op}\left\|x_1-x_2\right\|_{\mathfrak X}
\end{equation}
for any $x_1$ and $x_2$ in $\mathfrak{X}$.\footnote{See \citet[4.1b, p.191]{ZeidlerFixedPoint}.}

Firstly, we show that the operator $G$ is Lipschitz continuous. Let us denote $G[\lambda_i]$ by $G_i$ for $i=1,2$. From \eqref{opG}, we see that
\begin{align}
\left|G_1(x)-G_2(x)\right|
&\leq F^{\frac{1}{\sigma-1}}\left|g_1^{\frac{1}{1-\sigma}} - g_2^{\frac{1}{1-\sigma}}\right|\\
&\leq F^{\frac{1}{\sigma-1}}\left\|g_1^{\frac{1}{1-\sigma}} - g_2^{\frac{1}{1-\sigma}} \right\|_{\infty},\label{G1G2F11msigmag1g2maxnorm}
\end{align}
where $g_i(x):=\int_\Omega|\lambda_i(y)|T(x,y)^{1-\sigma}dy$ for $i=1,2$. From \eqref{Tbound} and \eqref{lambdabound}, it is easy to see that
\begin{equation}\label{giest}
g_i \geq T_{\max}^{1-\sigma}(\Lambda-b)>0
\end{equation}
for $i=1,2$. The Fr\'{e}chet derivative of $\cl{F}(g)=g^{\frac{1}{1-\sigma}}$ is $\cl{F}^\prime(g)=\frac{1}{1-\sigma}g^{\frac{\sigma}{1-\sigma}}$. From \eqref{giest}, the operator norm satisfies
\begin{align}
\left\|\cl{F}^\prime(g)\right\|_{\rm op} &= \sup_{\left\|h\right\|_\infty=1}\left\|\frac{1}{1-\sigma}g^{\frac{\sigma}{1-\sigma}}h\right\|_{\infty}\\
&= \frac{1}{\sigma-1}\left\|g^{\frac{\sigma}{1-\sigma}}\right\|_\infty\\
&\leq \frac{1}{\sigma-1}T_{\max}^\sigma(\Lambda-b)^{\frac{\sigma}{1-\sigma}}\label{1sig1TmaxsigLambsig1sig}
\end{align}
By using \eqref{LipFrechet} and \eqref{1sig1TmaxsigLambsig1sig}, we obtain
\begin{equation}\label{g1g211sigmanormmuGL}
\left\|g_1^{\frac{1}{1-\sigma}} - g_2^{\frac{1}{1-\sigma}}\right\|_{\infty}
\leq \frac{1}{\sigma-1}T_{\max}^{\sigma}(\Lambda-b)^{\frac{\sigma}{1-\sigma}}\left\|g_1-g_2\right\|_\infty
\end{equation}
It follows that
\begin{align}
\left|g_1(x)-g_2(x)\right|
&= \left|\int_\Omega\left(|\lambda_1(y)|-|\lambda_2(y)|\right)T(x,y)^{1-\sigma}dy\right|\\
&\leq T_{\min}^{1-\sigma}\int_\Omega\left||\lambda_1(y)|-|\lambda_2(y)|\right|dy \\
&\leq T_{\min}^{1-\sigma}\left\|\lambda_1-\lambda_2\right\|_{L^1}\label{g1g2lips}
\end{align}
The last inequality is due to
\begin{equation}\label{abslm12}
\left||\lambda_1(y)|-|\lambda_2(y)|\right|\leq\left|\lambda_1(y)-\lambda_2(y)\right| \text{~for a.e.~}y\in\Omega.
\end{equation}
As a result of \eqref{G1G2F11msigmag1g2maxnorm}, \eqref{g1g211sigmanormmuGL}, and \eqref{g1g2lips}, we obtain
\begin{equation}\label{Glip}
\left\|G_1-G_2\right\|_\infty
\leq L_G\left\|\lambda_1-\lambda_2\right\|_{L^1}.
\end{equation}
where
\[
L_G := \frac{1}{\sigma-1}F^{\frac{1}{\sigma-1}}T_{\max}^{\sigma}T_{\min}^{1-\sigma}(\Lambda-b)^{\frac{\sigma}{1-\sigma}} > 0.
\]

Secondly, we show that the operator $w$ is Lipschitz continuous. Let us denote $w[\lambda_i]$ by $w_i$ for $i=1,2$. It follows from \eqref{fp} that
\begin{equation}\label{w1w2musigmaFIJ}
\begin{aligned}
\left|w_1-w_2\right| 
&\leq \frac{\mu}{\sigma F}\int_\Omega\left|w_1(y)|\lambda_1(y)|G_1(y)^{\sigma-1}-w_2(y)|\lambda_2(y)|G_2(y)^{\sigma-1}\right|T(x,y)^{1-\sigma}dy\\
&\hspace{7mm}+\frac{\mu}{\sigma F}\int_\Omega\phi(y)\left|G_1(y)^{\sigma-1}-G_2(y)^{\sigma-1}\right|T(x,y)^{1-\sigma}dy\\
&= \frac{\mu}{\sigma F}\left(I+J\right),
\end{aligned}
\end{equation}
where 
\[
\begin{aligned}
&I = \int_\Omega\left|w_1(y)|\lambda_1(y)|G_1(y)^{\sigma-1}-w_2(y)|\lambda_2(y)|G_2(y)^{\sigma-1}\right|T(x,y)^{1-\sigma}dy,\\
&J = \int_\Omega\phi(y)\left|G_1(y)^{\sigma-1}-G_2(y)^{\sigma-1}\right|T(x,y)^{1-\sigma}dy.
\end{aligned}
\]
It follows from \eqref{Tbound}, \eqref{lambdabound}, \eqref{Gbound}, \eqref{wbound}, and \eqref{abslm12} that
\begin{align}
&I \leq \int_\Omega\left|w_1(y)\right|\left|\lambda_1(y)\right|\left|G_1(y)^{\sigma-1}-G_2(y)^{\sigma-1}\right|T(x,y)^{1-\sigma}dy\\
&\hspace{10mm} + \int_\Omega |w_1(y)|\left||\lambda_1(y)|-|\lambda_2(y)|\right|G_2(y)^{\sigma-1}T(x,y)^{1-\sigma}dy\\
&\hspace{15mm}+\int_\Omega|\lambda_2(y)||w_1(y)-w_2(y)|G_2(y)^{\sigma-1}T(x,y)^{1-\sigma}dy\\
&\hspace{3mm}\leq w_{U}(\Lambda+b)T_{\min}^{1-\sigma}\left\|G_1^{\sigma-1}-G_2^{\sigma-1}\right\|_{\infty} \label{G1sigm1G2sigm1wULambTmin}\\
&\hspace{25mm}+ w_{U}G_{U}^{\sigma-1}T_{\min}^{1-\sigma}\left\|\lambda_1-\lambda_2\right\|_{L^1}\\
&\hspace{35mm}+ (\Lambda+b)G_{U}^{\sigma-1}T_{\min}^{1-\sigma}\left\|w_1-w_2\right\|_{\infty}.
\end{align}
By \eqref{Gbound} and \eqref{LipFrechet}, we have
\begin{equation}\label{G1G2sigmamin1eqG1G2}
\left\|G_1^{\sigma-1}-G_2^{\sigma-1}\right\|_{\infty}\leq (\sigma-1)C_G^{\sigma-2}\left\|G_1-G_2\right\|_{\infty},
\end{equation}
where 
\begin{equation}
C_G := \left\{
\begin{aligned}
&G_{L},\hspace{3mm}\text{when $\sigma-2\leq 0$},\\
&G_{U},\hspace{3mm}\text{when $\sigma-2 > 0$}.
\end{aligned}
\right.
\end{equation}
By applying \eqref{Glip} and \eqref{G1G2sigmamin1eqG1G2} to \eqref{G1sigm1G2sigm1wULambTmin}, we have
\begin{equation}\label{Ilip}
\begin{aligned}
I &\hspace{3mm}\leq w_{U}(\Lambda+b)(\sigma-1)C_G^{\sigma-2}L_GT_{\min}^{1-\sigma}\left\|\lambda_1-\lambda_2\right\|_{\infty} \\
&\hspace{10mm}+ w_{U}G_{U}^{\sigma-1}T_{\min}^{1-\sigma}\left\|\lambda_1-\lambda_2\right\|_{L^1}\\
&\hspace{15mm}+ (\Lambda+b)G_{U}^{\sigma-1}T_{\min}^{1-\sigma}\left\|w_1-w_2\right\|_{\infty}.
\end{aligned}
\end{equation}
By the same manner, we obtain 
\begin{equation}\label{Jlip}
J \leq (\sigma-1)C_G^{\sigma-2}L_GT_{\min}^{1-\sigma}\Phi\left\|\lambda_1-\lambda_2\right\|_{L^1}.
\end{equation}
By applying \eqref{Ilip} and \eqref{Jlip} to \eqref{w1w2musigmaFIJ}, we have
\begin{equation}
\begin{aligned}
\left\|w_1-w_2\right\|_{\infty}
\leq &\frac{\mu}{\sigma F}\left\{w_{U}(\Lambda+b)(\sigma-1)C_G^{\sigma-2}L_GT_{\min}^{1-\sigma}\right.\\
&\hspace{10mm}\left.+ w_{U}G_{U}^{\sigma-1}T_{\min}^{1-\sigma}\right.\\
&\hspace{10mm}\left. + (\sigma-1)C_G^{\sigma-2}L_GT_{\min}^{1-\sigma}\Phi\right\}\left\|\lambda_1-\lambda_2\right\|_{L^1}\\
&\hspace{10mm} + \frac{\mu}{\sigma F}(\Lambda+b)G_{U}^{\sigma-1}T_{\min}^{1-\sigma}\left\|w_1-w_2\right\|_{\infty}.\label{w1w2normpre}
\end{aligned}
\end{equation}
Therefore, if
\begin{equation}\label{1minusmusigmaFLbGUsigm1Tmin1msigmagtzero}
1-\frac{\mu}{\sigma F}(\Lambda+b)G_{U}^{\sigma-1}T_{\min}^{1-\sigma}
 > 0
\end{equation}
holds, then we obtain from \eqref{w1w2normpre} that
\begin{equation}\label{wlip}
\left\|w_1-w_2\right\|_{\infty}\leq L_w \left\|\lambda_1-\lambda_2\right\|_{L^1},
\end{equation}
where
\begin{align}
L_w &:=
\frac{
\frac{\mu}{\sigma F}\left\{(\sigma-1)C_G^{\sigma-2}L_GT_{\min}^{1-\sigma}\left(w_U(\Lambda+b)+\Phi\right) + w_UG_U^{\sigma-1}T_{\min}^{1-\sigma}\right\}
}
{
1-\frac{\mu}{\sigma F}(\Lambda+b)G_U^{\sigma-1}T_{\min}^{1-\sigma}
}\\
&>0.
\end{align}
The condition \eqref{1minusmusigmaFLbGUsigm1Tmin1msigmagtzero} is equivalent to \eqref{sufficientconditionLb} because of \eqref{Gupper}.

Thirdly, we show that the operator $\omega$ is Lipschitz continuous. Let us denote $\omega[\lambda_i]$ by $\omega_i$ for $i=1,2$. By \eqref{opomega}, \eqref{Gbound}, \eqref{wbound}, and \eqref{LipFrechet}, we see that
\begin{align}
\left|\omega_1(x)-\omega_2(x)\right|
&\leq \left\|w_1\right\|_{\infty}\left\|G_1^{-\mu}-G_2^{-\mu}\right\|_{\infty}+\left\|w_1-w_2\right\|_{\infty}\left\|G_2^{-\mu}\right\|_{\infty} \\
&\leq \mu w_{U} G_{L}^{-\mu-1}\left\|G_1-G_2\right\|_{\infty}+{G_{L}}^{-\mu}\left\|w_1-w_2\right\|_{\infty}
\end{align}
Then, by applying \eqref{Glip} and \eqref{wlip}, we obtain
\begin{equation}\label{omegalip}
\left\|\omega_1-\omega_2\right\|_{\infty}\leq L_\omega \left\|\lambda_1-\lambda_2\right\|_{L^1},
\end{equation}
where 
\[
L_\omega := \mu w_{U}G_{L}^{-\mu-1}L_G + {G_{L}}^{-\mu}L_w > 0.
\]

Finally, we show that the operator $\Psi$ is Lipschitz continuous. Let us denote $\Psi[\lambda_i]$ and $\tilde{\omega}[\lambda_i]$ by $\Psi_i$ and $\tilde{\omega}_i$ for $i=1,2$, respectively. From \eqref{opaverageomega} and \eqref{opPsi}, we have
\begin{equation}\label{psi1psi2ololintolol}
\begin{aligned}
\left|\Psi_1(x)-\Psi_2(x)\right|
&\leq v\left|\omega_1(x)\lambda_1(x)-\omega_2(x)\lambda_2(x)\right|\\
&\hspace{5mm}+v\left|\tilde{\omega}_2\lambda_2(x)-\tilde{\omega}_1\lambda_1(x)\right|.
\end{aligned}
\end{equation}
At the first term of the right-hand side of \eqref{psi1psi2ololintolol}, it is easy to see that
\begin{align}
\left|\omega_1(x)\lambda_1(x)-\omega_2(x)\lambda_2(x)\right|
&=\left|\omega_1(x)\lambda_1(x)-\omega_2(x)\lambda_1\right.\\
&\left.\hspace{7mm}+\omega_2(x)\lambda_1(x)-\omega_2(x)\lambda_2(x)\right| \\
&\leq \left|\omega_1(x)-\omega_2(x)\right|\left|\lambda_1(x)\right|\\
&\hspace{7mm}+ \left|\omega_2(x)\right| \left|\lambda_1(x)-\lambda_2(x)\right|.\label{om1om2lmplusom2lm1lm2}
\end{align}
Integrating the both sides of \eqref{om1om2lmplusom2lm1lm2} over $\Omega$ and applying \eqref{lambdabound}, \eqref{omegabound}, and \eqref{omegalip}, we obtain
\begin{equation}\label{firstterm}
\begin{aligned}
\left\|\omega_1\lambda_1-\omega_2\lambda_2\right\|_{L^1}
\leq \left\{(\Lambda+b)L_\omega + \omega_U\right\}\left\|\lambda_1-\lambda_2\right\|_{L^1}.
\end{aligned}
\end{equation}
At the second term of the right-hand side of \eqref{psi1psi2ololintolol}, it is easy to see that
\begin{align}
\left|\tilde{\omega}_2\lambda_2(x)-\tilde{\omega}_1\lambda_1(x)\right|
&=\frac{1}{\Lambda}\left|\int_\Omega\omega_2(y)\lambda_2(y)dy\lambda_2(x)
-\int_\Omega\omega_1(y)\lambda_1(y)dy\lambda_1(x)\right|\\
&\leq \frac{1}{\Lambda}\left|\int_\Omega\omega_2(y)\lambda_2(y)dy\right|\left|\lambda_1(x)-\lambda_2(x)\right| \\
&\hspace{10mm}+ \frac{\left|\lambda_1(x)\right|}{\Lambda}\left|\int_\Omega\omega_2(y)\lambda_2(y)dy-\int_\Omega\omega_1(y)\lambda_1(y)dy\right| \\
&\leq \frac{1}{\Lambda}\left|\int_\Omega\omega_2(y)\lambda_2(y)dy\right|\left|\lambda_1(x)-\lambda_2(x)\right| \\
&\hspace{10mm}+ \frac{\left|\lambda_1(x)\right|}{\Lambda}\left|\int_\Omega\omega_2(y)\left(\lambda_2(y)-\lambda_1(y)\right)dy\right| \\
&\hspace{10mm}+ \frac{\left|\lambda_1(x)\right|}{\Lambda}\left|\int_\Omega\lambda_1(y)\left(\omega_2(y)-\omega_1(y)\right)dy\right|
\label{omtil2lm2-ometil1lm1}
\end{align}
Integrating the both sides of \eqref{omtil2lm2-ometil1lm1} and applying \eqref{lambdabound}, \eqref{omegabound}, and \eqref{omegalip} yields
\begin{align}
\left\|\tilde{\omega}_2\lambda_2-\tilde{\omega}_1\lambda_1\right\|_{L^1}
&= \frac{1}{\Lambda}\left\|\int_\Omega\omega_2(y)\lambda_2(y)dy\lambda_2
-\int_\Omega\omega_1(y)\lambda_1(y)dy\lambda_1\right\|_{L^1}\\
&\leq \frac{(\Lambda+b)(2\omega_U+(\Lambda+b)L_\omega)}{\Lambda}\left\|\lambda_1-\lambda_2\right\|_{L^1}.\label{secondterm}
\end{align}
By applying \eqref{firstterm} and \eqref{secondterm} to \eqref{psi1psi2ololintolol}, we obtain
\begin{equation}\label{Psilip}
\begin{aligned}
\left\|\Psi_1-\Psi_2\right\|_{L^1}
\leq L\left\|\lambda_1-\lambda_2\right\|_{L^1},
\end{aligned}
\end{equation}
where
\[
L :=(\Lambda+b)L_\omega+\omega_U+\frac{(\Lambda+b)(2\omega_U+(\Lambda+b)L_\omega)}{\Lambda} > 0.
\]
Note that $L$ does not depend on $\lambda_0$.
\qed

\subsection{Dependence of $\tau_k^*$ on frequency}\label{subsec:deptau*k}
The following theprem is essentially the same as \citet[Theorem 6]{Ohtake2025agriculture}.
\begin{theo}\label{th:tau*k}
Assume that \eqref{nbh} holds. For any integer $k\neq 0$, the critical point satisfies that $\tau_{|k|}^*<\tau_{|k|+2}^*$ and $\tau_{-|k|}^*<\tau_{-|k|-2}^*$.
\end{theo}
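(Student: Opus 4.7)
The approach is based on translating the critical-point condition $\Gamma_k(\tau_k^*) = 0$ into the equivalent equation $Z_k(\tau_k^*) = Z^*$, where $Z^*$, defined in \eqref{Z*}, depends neither on $k$ nor on $\tau$. Since $-v\mu\ol{G}^{-\mu} < 0$ and $Z_k > 0$ for $k \neq 0$ and $\tau > 0$, and since $\sigma - \mu Z_k > 0$, the common factor $Z_k$ cancels inside the parenthesis of \eqref{Gammak} and, after multiplying by $\sigma - \mu Z_k$, the condition $\Gamma_k = 0$ reduces to a linear equation in $Z_k$ whose unique root is exactly $Z^*$. The assumption of no black holes \eqref{nbh} guarantees $Z^* < 1$, so together with \eqref{limZk0}, \eqref{limZk1}, and the monotonicity \eqref{ddalpaZk>0}, there is a $\tau_k^* > 0$ at which $Z_k(\tau_k^*) = Z^*$.

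The heart of the proof is then a single frequency comparison: for any fixed $\alpha > 0$ and any integer $k$ with $|k| \geq 1$,
\begin{equation*}
Z_{|k|}(\alpha) > Z_{|k|+2}(\alpha).
\end{equation*}
This is read off directly from \eqref{Zk}: the integers $|k|$ and $|k|+2$ share parity, so $(-1)^{|k|} = (-1)^{|k|+2}$, the two numerators coincide, and the denominators differ only through $|k|^2 + \alpha^2\rho^2 < (|k|+2)^2 + \alpha^2\rho^2$. Evaluating at $\tau = \tau_{|k|}^*$ yields $Z_{|k|+2}(\tau_{|k|}^*) < Z_{|k|}(\tau_{|k|}^*) = Z^*$, and since $\tau \mapsto Z_{|k|+2}(\tau)$ is non-decreasing with $Z_{|k|+2}(\tau_{|k|+2}^*) = Z^*$, we conclude $\tau_{|k|}^* < \tau_{|k|+2}^*$. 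The symmetric inequality $\tau_{-|k|}^* < \tau_{-|k|-2}^*$ is then immediate from the identity $Z_k = Z_{-k}$, which gives $\tau_{-m}^* = \tau_m^*$ for every integer $m \geq 1$.

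The main (and rather mild) obstacle is to verify that $Z^*$ truly does not depend on $\tau$. Using \eqref{stphi}, \eqref{stlambda}, \eqref{stY}, and \eqref{stw}, the quantities $\ol{w}$ and $\ol{Y}/\ol{\lambda}$ appearing in \eqref{Z*} are explicit functions of $\Phi$, $\Lambda$, $\rho$, $\mu$, $\sigma$ only, so $\tau$ genuinely drops out and the critical level $Z^*$ is a fixed horizontal line in a $Z_k$-versus-$\tau$ diagram. Once this is noted, the entire argument reduces to a one-line monotonicity comparison on the explicit expression \eqref{Zk}, with the care that the parity $(-1)^{|k|}$ be preserved when passing from $|k|$ to $|k|+2$.
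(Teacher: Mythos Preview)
Your proof is correct and follows essentially the same approach as the paper: both reduce the critical condition to $Z_k=Z^*$ with $Z^*$ independent of $k$ and $\tau$, use the parity observation $(-1)^{|k|}=(-1)^{|k|+2}$ to obtain $Z_{|k|+2}<Z_{|k|}$ at fixed $\tau$, and then invoke the monotonicity of $Z_k$ in $\tau$ to conclude. Your write-up is simply more explicit about why $Z^*$ is $\tau$-independent and about how the monotonicity step forces the strict inequality.
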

\begin{proof}
Suppose now that $Z_{|k|}=Z^*$ for a given $k$. When $|k|$ changes to $|k|+2$, it follows from $(-1)^{|k|}=(-1)^{|k|+2}$ in \eqref{Zk} that \footnote{This was pointed out by an anonymous reviewer of \citet{Ohtake2025agriculture} in the proof of its Theorem 6.}
\begin{equation}\label{}
Z_{|k|+2}<Z_{|k|}.
\end{equation}
Now to make $Z_{|k|+2}=Z^*$ hold again at $|k|+2$, $Z_{|k|}$ must be increased, i.e., $\tau$ must be increased. Thus, $\tau_{|k|}^*<\tau_{|k|+2}^*$. Similarly, $\tau_{-|k|}^*<\tau_{-|k|-2}^*$.
\end{proof}

\bibliographystyle{econ-aea}

\end{document}